\title{The Secrecy Capacity Region of the Gaussian MIMO Multi-Receiver Wiretap
Channel\thanks{This work was supported by NSF Grants CCF 04-47613, CCF 05-14846, CNS
07-16311 and CCF 07-29127.}}
\author{Ersen Ekrem \qquad Sennur Ulukus \\
\normalsize Department of Electrical and Computer Engineering\\
\normalsize University of Maryland, College Park, MD 20742 \\
\normalsize {\it ersen@umd.edu} \qquad {\it ulukus@umd.edu}}
\newcommand{\bone}{\bm 1}
\newcommand{\bblambda}{\bm \Lambda}
\newcommand{\brho}{\bm \rho}
\newcommand{\bbdelta}{\bm \Delta}
\newcommand{\bbsigma}{\bm \Sigma}
\newcommand{\bbi}{{\mathbf{I}}}
\newcommand{\bzero}{{\mathbf{0}}}
\newcommand{\bbv}{{\mathbf{V}}}
\newcommand{\bv}{{\mathbf{v}}}
\newcommand{\bw}{{\mathbf{w}}}
\newcommand{\bbh}{{\mathbf{H}}}
\newcommand{\bbw}{{\mathbf{W}}}
\newcommand{\bbm}{{\mathbf{M}}}
\newcommand{\bbk}{{\mathbf{K}}}
\newcommand{\bbz}{{\mathbf{Z}}}
\newcommand{\bbn}{{\mathbf{N}}}
\newcommand{\bn}{{\mathbf{n}}}
\newcommand{\bba}{{\mathbf{A}}}
\newcommand{\bbd}{{\mathbf{D}}}
\newcommand{\bbe}{{\mathbf{E}}}
\newcommand{\bbb}{{\mathbf{B}}}
\newcommand{\bbc}{{\mathbf{C}}}
\newcommand{\bbs}{{\mathbf{S}}}
\newcommand{\bu}{{\mathbf{u}}}
\newcommand{\bbj}{{\mathbf{J}}}
\newcommand{\bbu}{{\mathbf{U}}}
\newcommand{\bx}{{\mathbf{x}}}
\newcommand{\bbx}{{\mathbf{X}}}
\newcommand{\bby}{{\mathbf{Y}}}
\newtheorem{Theo}{Theorem}
\newtheorem{Prop}{Proposition}
\newtheorem{Lem}{Lemma}
\newtheorem{Cor}{Corollary}
\newtheorem{Def}{Definition}
\newenvironment{proof}[1]{\medskip\par\noindent{\bf Proof:\,}\,#1}{{\mbox{\,$\blacksquare$}\par}}
\begin{document}


\maketitle

\begin{abstract}
In this paper, we consider the Gaussian multiple-input
multiple-output (MIMO) multi-receiver wiretap channel in which a
transmitter wants to have confidential communication with an
arbitrary number of users in the presence of an external
eavesdropper. We derive the secrecy capacity region of this
channel for the most general case. We first show that even for the
single-input single-output (SISO) case, existing converse
techniques for the Gaussian scalar broadcast channel cannot be
extended to this secrecy context, to emphasize the need for a new
proof technique. Our new proof technique makes use of the
relationships between the minimum-mean-square-error and the mutual
information, and equivalently, the relationships between the
Fisher information and the differential entropy. Using the
intuition gained from the converse proof of the SISO channel, we
first prove the secrecy capacity region of the degraded MIMO
channel, in which all receivers have the same number of antennas,
and the noise covariance matrices can be arranged according to a
positive semi-definite order. We then generalize this result to
the aligned case, in which all receivers have the same number of
antennas, however there is no order among the noise covariance
matrices. We accomplish this task by using the channel enhancement
technique. Finally, we find the secrecy capacity region of the
general MIMO channel by using some limiting arguments on the
secrecy capacity region of the aligned MIMO channel. We show that
the capacity achieving coding scheme is a variant of dirty-paper
coding with Gaussian signals.

\end{abstract}

\newpage
\section{Introduction}

Information theoretic secrecy was initiated by Wyner in his
seminal work~\cite{Wyner}. Wyner considered a degraded wiretap
channel, where the eavesdropper gets a degraded version of the
legitimate receiver's observation. For this degraded model, he
found the capacity-equivocation rate region where the equivocation
rate refers to the portion of the message rate that can be
delivered to the legitimate receiver, while the eavesdropper is
kept totally ignorant of this part. Later, Csiszar and Korner
considered the general wiretap channel, where there is no presumed
degradation order between the legitimate user and the
eavesdropper~\cite{Korner}. They found the capacity-equivocation
rate region of this general, {\it not necessarily degraded},
wiretap channel.

In recent years, information theoretic secrecy has gathered a
renewed interest, where most of the attention has been devoted to
the multiuser extensions of the wiretap channel, see for example
\cite{Aylin_2,Aylin_Cooperative,Allerton_08,Ruoheng,Ruoheng2,Broadcasting_Wornell,Khandani,Ekrem_Ulukus_Asilomar08,Ekrem_Ulukus_BC_Secrecy,
Oohama, Hesham,Melda_1,He_journal,CISS_08,ISIT_08,
Bloch_relay,Yingbin_1,Ruoheng_3,Ruoheng_4}. One natural extension
of the wiretap channel to the multiuser setting is the problem of
{\it secure broadcasting}. In this case, there is one transmitter
which wants to communicate with several legitimate users in the
presence of an external eavesdropper. Hereafter, we call this
channel model the {\it multi-receiver wiretap channel}. Finding
the secrecy capacity region for the most general form of this
channel model seems to be quite challenging, especially if one
remembers that, even without the eavesdropper, we do not know the
the capacity region for the underlying channel, which is a general
broadcast channel with an arbitrary number of users. However, we
know the capacity region for some special classes of broadcast
channels, which suggests that we might be able to find the secrecy
capacity region for some special classes of multi-receiver wiretap
channels. This suggestion has been taken into consideration in
\cite{Broadcasting_Wornell,Khandani,Ekrem_Ulukus_Asilomar08,Ekrem_Ulukus_BC_Secrecy}.
In particular,
in~\cite{Khandani,Ekrem_Ulukus_BC_Secrecy,Ekrem_Ulukus_Asilomar08},
the degraded multi-receiver wiretap channel is considered, where
there is a certain degradation order among the legitimate users
and the eavesdropper. The corresponding secrecy capacity region is
derived for the two-user case in~\cite{Khandani}, and for an
arbitrary number of users
in~\cite{Ekrem_Ulukus_Asilomar08,Ekrem_Ulukus_BC_Secrecy}. The
importance of this class lies in the fact that the Gaussian scalar
multi-receiver wiretap channel belongs to this class.

In this work, we start with the Gaussian scalar multi-receiver
wiretap channel, and find its secrecy capacity region. Although,
in the later parts of the paper, we provide the secrecy capacity
region of the Gaussian multiple-input multiple-output (MIMO)
multi-receiver wiretap channel which subsumes the scalar case,
there are two reasons for the presentation of the scalar case
separately. The first one is to show that, existing converse
techniques for the Gaussian scalar broadcast channel, i.e., the
converse proofs of Bergmans~\cite{Bergmans} and El
Gamal~\cite{El_Gamal_Converse}, cannot be extended in a
straightforward manner to provide a converse proof for the
Gaussian scalar multi-receiver wiretap channel. We explicitly show
that the main ingredient of these two converses in
\cite{Bergmans,El_Gamal_Converse}, which is the entropy-power
inequality~\cite{Stam,Blachman}, is not sufficient to conclude a
converse for the secrecy capacity region. The second reason for
the separate presentation is to present the main ingredients of
the technique that we will use to provide a converse proof for the
general MIMO channel in an isolated manner in a simpler context.
We provide two converse proofs for the Gaussian scalar
multi-receiver wiretap channel. The first one uses the connection
between the minimum-mean-square-error (MMSE) and the mutual
information along with the properties of the
MMSE~\cite{Guo_MMSE,Guo_ISIT_08}. In additive Gaussian channels,
the Fisher information, another important quantity in estimation
theory, and the MMSE have a complementary relationship in the
sense that one of them determines the other one, and vice
versa~\cite{Olivier_Rioul}. Thus, the converse proof relying on
the MMSE has a counterpart which replaces the Fisher information
with the MMSE in the corresponding converse proof. Hence, the
second converse uses the connection between the Fisher information
and the differential entropy via the de Bruin
identity~\cite{Blachman,Stam} along with the properties of the
Fisher information. This reveals that either the Fisher
information matrix or the MMSE matrix should play an important
role in the converse proof of the MIMO case.

Keeping this motivation in mind, we consider the Gaussian MIMO
multi-receiver wiretap channel next. Instead of directly tackling
the most general case in which each receiver has an arbitrary
number of antennas and an arbitrary noise covariance matrix, we
first consider two sub-classes of MIMO channels. In the first
sub-class, all receivers have the same number of antennas and the
noise covariance matrices exhibit a positive semi-definite order,
which implies the degradedness of these channels. Hereafter, we
call this channel model the {\it degraded Gaussian MIMO
multi-receiver wiretap channel}. In the second sub-class, although
all receivers still have the same number of antennas as in the
degraded case, the noise covariance matrices do not have to
satisfy any positive semi-definite order. Hereafter, we call this
channel model the {\it aligned Gaussian MIMO multi-receiver
wiretap channel}. Our approach will be to first find the secrecy
capacity region of the degraded case, then to generalize this
result to the aligned case by using the {\it channel enhancement}
technique~\cite{Shamai_MIMO}. Once we obtain the secrecy capacity
region of the aligned case, we use this result to find the secrecy
capacity region of the most general case by some limiting
arguments as in~\cite{Shamai_MIMO,Tie_Liu_MIMO_WT}. Thus, the main
contribution and the novelty of our work is the way we prove the
secrecy capacity region of the degraded Gaussian MIMO
multi-receiver wiretap channel, since the remaining steps from
then on are mainly adaptations of the existing proof
techniques~\cite{Shamai_MIMO,Tie_Liu_MIMO_WT} to an eavesdropper
and/or multiuser setting.

At this point, to clarify our contributions, it might be useful to
note the similarity of the proof steps that we follow with those
in~\cite{Shamai_MIMO}, where the capacity region of the Gaussian
MIMO broadcast channel was established. In~\cite{Shamai_MIMO}
also, the authors considered the degraded, the aligned and the
general cases successively. Although, both~\cite{Shamai_MIMO} and
this paper has these same proof steps, there are differences
between how and why these steps are taken. In \cite{Shamai_MIMO},
the main difficulty in obtaining the capacity region of the
Gaussian MIMO broadcast channel was to extend Bergmans' converse
for the scalar case to the degraded vector channel. This
difficulty was overcome in~\cite{Shamai_MIMO} by the invention of
the {\it channel enhancement} technique. However, as discussed
earlier, Bergmans' converse cannot be extended to our secrecy
context, even for the degraded scalar case. Thus, we need a new
technique which we construct by using the Fisher information
matrix and the generalized de Bruin
identity~\cite{Palomar_Gradient}. After we obtain the secrecy
capacity region of the degraded MIMO channel, we adapt the channel
enhancement technique to our setting to find the secrecy capacity
region of the aligned MIMO channel. The difference of the way
channel enhancement is used here as compared to the one
in~\cite{Shamai_MIMO} comes from the presence of an eavesdropper,
and its difference from the one in~\cite{Tie_Liu_MIMO_WT} is due
to the presence of many legitimate users. After we find the
secrecy capacity region of the aligned MIMO channel, we use the
limiting arguments that appeared
in~\cite{Shamai_MIMO,Tie_Liu_MIMO_WT} to prove the secrecy
capacity region of the general MIMO channel.

The single user version of the Gaussian MIMO multi-receiver
wiretap channel we study here, i.e., the Gaussian MIMO wiretap
channel, was solved by \cite{Wornell,Hassibi} for the general case
and by \cite{Ulukus} for the 2-2-1 case. Their common proof
technique was to derive a Sato-type outer bound on the secrecy
capacity, and then to tighten this outer bound by searching over
all possible correlation structures among the noise vectors of the
legitimate user and the eavesdropper. Later,
\cite{Tie_Liu_MIMO_WT} gave an alternative, simpler proof by using
the channel enhancement technique.

\section{Multi-Receiver Wiretap Channels}
In this section, we first revisit the multi-receiver wiretap
channel. The general multi-receiver wiretap channel consists of
one transmitter with an input alphabet $\mathcal{X}$, $K$
legitimate receivers with output alphabets $\mathcal{Y}_k$,
$k=1,\ldots,K,$ and an eavesdropper with output alphabet
$\mathcal{Z}$. The transmitter sends a confidential message to
each user, say $w_k\in\mathcal{W}_k$ to the $k$th user, and all
messages are to be kept secret from the eavesdropper. The channel
is memoryless with a transition probability
$p(y_1,y_2,\ldots,y_K,z|x)$.

A $(2^{nR_1},\ldots,2^{nR_K},n)$ code for this channel consists of
$K$ message sets, $\mathcal{W}_k=\{1,\ldots,2^{nR_k}\}$,
$k=1,\ldots,K$, an encoder
$f:\mathcal{W}_1\times\ldots\times\mathcal{W}_K\rightarrow
\mathcal{X}^{n}$, $K$ decoders, one at each legitimate receiver,
$g_k:\mathcal{Y}_k\rightarrow \mathcal{W}_k$, $k=1,\ldots,K$. The
probability of error is defined as
$P_e^n=\max_{k=1,\ldots,K}\Pr\left[g_k(Y_k^n)\neq (W_k)\right]$. A
rate tuple $(R_1,\ldots,R_K)$ is said to be achievable if there
exists a code with $\lim_{n\rightarrow\infty}P_e^n=0$ and
\begin{align}
\lim_{n\rightarrow\infty}\frac{1}{n}H(\mathcal{S}(W)|Z^n)\geq
\sum_{k\in \mathcal{S}(W)}R_k, \quad \forall ~\mathcal{S}(W)
\label{perfect_secrecy}
\end{align}
where $\mathcal{S}(W)$ denotes any subset of $\{W_1,\ldots,W_K\}$.
Hence, we consider only perfect secrecy rates. The secrecy
capacity region is defined as the closure of all achievable rate
tuples.

The degraded multi-receiver wiretap channel exhibits the following
Markov chain
\begin{align}
X\rightarrow Y_1\rightarrow\ldots\rightarrow Y_K\rightarrow Z
\label{degraded_wc}
\end{align}
whose capacity region was established in
\cite{Ekrem_Ulukus_Asilomar08, Ekrem_Ulukus_BC_Secrecy} for an
arbitrary number of users and in \cite{Khandani} for two users.

\begin{Theo}
\label{degraded_multiuser_wiretap} The secrecy capacity region of
the degraded multi-receiver wiretap channel is given by the union
of the rate tuples $(R_1,\ldots,R_K)$ satisfying\footnote{Although
in~\cite{Ekrem_Ulukus_Asilomar08, Ekrem_Ulukus_BC_Secrecy}, this
secrecy capacity region is expressed in a different form, the
equivalence of the two expressions can be shown.}
\begin{align}
R_k & \leq  I(U_k;Y_k|U_{k+1},Z), \quad k=1,\ldots,K
\label{capacity_expressions}
\end{align}
where $U_1=X,U_{K+1}=\phi$, and the union is over all probability
distributions of the form
\begin{align}
p(u_K)p(u_{K-1}|u_{K})\ldots p(u_{2}|u_{3})p(x|u_{2})
\label{joint_distribution}
\end{align}
\end{Theo}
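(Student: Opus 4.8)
The plan is to prove the two directions separately, establishing achievability and the converse for the rate region \eqref{capacity_expressions}.

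\textbf{Achievability.} First I would establish that the rate tuples in \eqref{capacity_expressions} are achievable using a superposition-coding scheme combined with stochastic (random-binning) encoding to confuse the eavesdropper. The idea is to generate the auxiliary codewords in a nested fashion according to the Markov structure $U_K \to U_{K-1} \to \cdots \to U_2 \to X$ dictated by \eqref{joint_distribution}: the ``cloud center'' $U_K$ carries user $K$'s message, and each successive layer $U_k$ is superimposed on $U_{k+1}$ to carry user $k$'s message. To guarantee secrecy, each message index is augmented with a randomization (dummy) index, and the rate of this dummy index is tuned to the eavesdropper's channel so that the total codeword rate at each layer matches what the eavesdropper can resolve. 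The degradedness Markov chain \eqref{degraded_wc} is the key structural fact that lets the legitimate receivers decode by successive decoding (receiver $k$ decodes $U_k, U_{k+1}, \ldots, U_K$ since it sees a stronger channel than all lower-indexed receivers), while the secrecy analysis bounds the equivocation. I would verify that the mutual-information leakage to the eavesdropper of any subset $\mathcal{S}(W)$ of messages vanishes, matching the perfect-secrecy requirement \eqref{perfect_secrecy}, which yields the single-letter secrecy rates $I(U_k;Y_k|U_{k+1},Z)$ after subtracting off the eavesdropper's contribution. Since this is a known result from \cite{Ekrem_Ulukus_Asilomar08,Ekrem_Ulukus_BC_Secrecy}, I would largely cite the construction and focus the write-up on confirming the equivalence of the rate expression claimed here.

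\textbf{Converse.} The harder direction is the converse: any achievable tuple must lie in the region. I would start from a reliable and secure code and introduce the auxiliary random variables via the standard identification $U_k = (W_k, \ldots, W_K, Y_{k-1}^{(\text{past})})$ or an analogous single-letterization that respects the degradedness chain. Using Fano's inequality for reliability and the secrecy constraint \eqref{perfect_secrecy}, I would bound each $nR_k$ by a difference of mutual information terms $I(W_k; Y_k^n | \cdot) - I(W_k; Z^n | \cdot)$, then telescope over the message subsets. The crux is to massage the multi-letter expressions into the single-letter form $I(U_k; Y_k | U_{k+1}, Z)$; here the Markov chain $X \to Y_1 \to \cdots \to Y_K \to Z$ is indispensable, because it ensures that conditioning on $Z$ (the eavesdropper) and on the higher-layer auxiliaries is consistent across receivers and that the chosen $U_k$ satisfy the required Markov structure \eqref{joint_distribution}.

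The main obstacle I anticipate is the converse's single-letterization while simultaneously handling the eavesdropper term and the nested conditioning. Unlike the ordinary degraded broadcast channel, each rate bound carries the subtracted term $I(U_k; Z^n | U_{k+1}^n)$, and I must show this ``secrecy penalty'' single-letterizes correctly and is compatible across all $K$ layers; a naive application of Csisz\'ar's sum identity can leave cross terms that do not obviously collapse. I would resolve this by a careful choice of auxiliaries together with repeated use of the degradedness Markov chain to discard or merge the residual terms, exactly as carried out in \cite{Ekrem_Ulukus_BC_Secrecy}. Since the statement is attributed to prior work, I expect the proof here to consist mainly of verifying that the region written in \eqref{capacity_expressions}--\eqref{joint_distribution} is equivalent to the (differently stated) region proved there, as the footnote indicates.
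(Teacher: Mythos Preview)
Your reading of the situation is correct: the paper does not actually prove Theorem~\ref{degraded_multiuser_wiretap}. It is stated as a known result, with the proof attributed to \cite{Ekrem_Ulukus_Asilomar08,Ekrem_Ulukus_BC_Secrecy} (arbitrary $K$) and \cite{Khandani} ($K=2$); the only additional remark is the footnote that the form \eqref{capacity_expressions} is equivalent to the expression given there, and the subsequent observation that, by degradedness, $I(U_k;Y_k|U_{k+1},Z)=I(U_k;Y_k|U_{k+1})-I(U_k;Z|U_{k+1})$. Your sketch of the superposition-plus-random-binning achievability and the Fano/secrecy converse with auxiliaries identified from message/past-output blocks is exactly the content of those references, so there is nothing further to compare against in this paper.
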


We remark here that since the channel is degraded, i.e., we have
the Markov chain in (\ref{degraded_wc}), the capacity expressions
in (\ref{capacity_expressions}) are equivalent to
\begin{align}
R_k & \leq  I(U_k;Y_k|U_{k+1})-I(U_k;Z|U_{k+1}), \hspace{0.2cm}
k=1,\ldots,K \label{equivalent_degraded_exp}
\end{align}
We will use this equivalent expression frequently hereafter. For
the case of two users and one eavesdropper, i.e., $K=2$, the
expressions in (\ref{equivalent_degraded_exp}) reduce to:
\begin{align}
R_1&\leq I(X;Y_1|U_2)-I(X;Z|U_2)\label{degraded_multiuser_wiretap_rate_two_user_1}\\
R_2&\leq I(U_2;Y_2)-I(U_2;Z)
\label{degraded_multiuser_wiretap_rate_two_user_2}
\end{align}
Finding the secrecy capacity region of the two-user degraded
multi-receiver wiretap channel is tantamount to finding the
optimal joint distributions of $(X,U_2)$ that trace the boundary
of the secrecy capacity region given in
(\ref{degraded_multiuser_wiretap_rate_two_user_1})-(\ref{degraded_multiuser_wiretap_rate_two_user_2}).
For the $K$-user degraded multi-receiver wiretap channel, we need
to find the optimal joint distributions of $(X,U_2,\ldots,U_K)$ in
the form given in (\ref{joint_distribution}) that trace the
boundary of the region expressed in (\ref{capacity_expressions}).

\section{Gaussian MIMO Multi-receiver Wiretap Channel}

\subsection{Degraded Gaussian MIMO Multi-receiver Wiretap Channel}
In this paper, we first consider the degraded Gaussian MIMO
multi-receiver wiretap channel which is defined through
\begin{align}
\bby_k&=\bbx+\bbn_k,\quad k=1,\ldots,K \label{channel_def_1} \\
\bbz&=\bbx+\bbn_Z \label{channel_def_2}
\end{align}
where the channel input $\bbx$ is subject to a covariance
constraint
\begin{align}
E\left[\bbx \bbx^\top \right]\preceq \bbs
\label{covariance_constraint}
\end{align}
where $\bbs\succ 0$, and $\{\bbn_k\}_{k=1}^{K},\bbn_Z$ are
zero-mean Gaussian random vectors with covariance matrices
$\{\bbsigma_k\}_{k=1}^K,\bbsigma_Z$ which satisfy the following
ordering
\begin{align}
\bzero   \prec \bbsigma_1 \preceq \bbsigma_2\preceq \ldots\preceq
\bbsigma_K\preceq \bbsigma_Z \label{covariance_matrices}
\end{align}

In a multi-receiver wiretap channel, since the
capacity-equivocation rate region depends only on the conditional
marginal distributions of the transmitter-receiver links, but not
on the entire joint distribution of the channel, the correlations
among $\{\bbn_k\}_{k=1}^{K},\bbn_Z$ have no consequence on the
capacity-equivocation rate region. Thus, without changing the
corresponding secrecy capacity region, we can adjust the
correlation structure among these noise vectors to ensure that
they satisfy the following Markov chain
\begin{align}
\bbx\rightarrow \bby_1\rightarrow \ldots\rightarrow
\bby_K\rightarrow \bbz \label{degraded_Markov_chain}
\end{align}
which is always possible because of our assumption regarding the
covariance matrices in~(\ref{covariance_matrices}). Moreover, the
Markov chain in (\ref{degraded_Markov_chain}) implies that any
Gaussian MIMO multi-receiver wiretap channel satisfying the
semi-definite ordering in (\ref{covariance_matrices}) can be
treated as a degraded multi-receiver wiretap channel, hence
Theorem~\ref{degraded_multiuser_wiretap} gives its capacity
region. Hereafter, we will assume that the degraded Gaussian MIMO
wiretap channel satisfies the Markov chain in
(\ref{degraded_Markov_chain}).

\subsection{Aligned Gaussian MIMO Multi-receiver Wiretap Channel}

Next, we consider the aligned Gaussian MIMO multi-receiver wiretap
channel which is again defined
by~(\ref{channel_def_1})-(\ref{channel_def_2}), and the input is
again subject to a covariance constraint as
in~(\ref{covariance_constraint}) with $\bbs \succ \bzero$.
However, for the aligned Gaussian MIMO multi-receiver wiretap
channel, noise covariance matrices do not have any semi-definite
ordering, as opposed to the degraded case which exhibits the
ordering in~(\ref{covariance_matrices}). For the aligned Gaussian
MIMO multi-receiver wiretap channel, the only assumption on the
noise covariance matrices is that they are strictly
positive-definite, i.e., $\bbsigma_i\succ \bzero,~i=1,\ldots,K$
and $\bbsigma_Z \succ \bzero $. Since this channel does not have
any ordering among the noise covariance matrices, it cannot be
considered as a degraded channel, thus there is no single-letter
formula for its secrecy capacity region. Moreover, we do not
expect superposition coding with stochastic encoding to be
optimal, as it was optimal for the degraded channel. Indeed, we
will show that dirty-paper coding with stochastic encoding is
optimal in this case.

\subsection{General Gaussian MIMO Multi-receiver Wiretap Channel}

Finally, we consider the most general form of the Gaussian MIMO
multi-receiver wiretap channel which is given by
\begin{align}
\bby_k&=\bbh_k\bbx+\bbn_k,\quad k=1,\ldots,K \label{general_mimo_original_def1}\\
\bbz&=\bbh_{Z}\bbx+\bbn_Z \label{general_mimo_original_def2}
\end{align}
where the channel input $\bbx$, which is a $t\times 1$ column
vector, is again subject to a covariance constraint as in
(\ref{covariance_constraint}) with $\bbs \succeq \bzero $. The
channel output for the $k$th user is denoted by $\bby_k$ which is
a column vector of size $r_k\times 1,~k=1,\ldots,K$. The
eavesdropper's observation $\bbz$ is of size $r_Z\times 1$. The
covariance matrices of the Gaussian random vectors
$\left\{\bbn_k\right\}_{k=1}^K,\bbn_{Z}$ are denoted by
$\left\{\bbsigma_k\right\}_{k=1}^K,\bbsigma_Z$, which are assumed
to be strictly positive definite. The channel gain matrices
$\left\{\bbh_k\right\}_{k=1}^K,\bbh_Z$ are of sizes
$\left\{r_k\times t\right\}_{k=1}^K,r_Z\times t$, respectively,
and they are known to the transmitter, all legitimate users and
the eavesdropper.

\subsection{A Comment on the Covariance Constraint}
In the literature, it is more common to define capacity regions
under a total power constraint, i.e., ${\rm
tr}\left(E\left[\bbx\bbx^\top\right]\right)\leq P$, instead of the
covariance constraint that we imposed, i.e., $E\left[\bbx
\bbx^\top\right]\preceq \bbs$. However, as shown
in~\cite{Shamai_MIMO}, once the capacity region is obtained under
a covariance constraint, then the capacity region under more
lenient constraints on the channel inputs can be obtained, if
these constraints can be expressed as compact sets defined over
the input covariance matrices. For example, the total power
constraint and the per-antenna power constraint can be described
by compact sets of input covariance matrices as follows
\begin{align}
\mathcal{S}^{\rm total}&=\{\bbs\succeq\bzero: {\rm tr}(\bbs)\leq
P\}\\
\mathcal{S}^{\rm per-ant}&=\{\bbs\succeq\bzero: \bbs_{ii}\leq
P_i,~i=1,\ldots,t\}
\end{align}
respectively, where $\bbs_{ii}$ is the $i$th diagonal entry of
$\bbs$, and $t$ denotes the number of transmit antennas. Thus, if
the secrecy capacity region under a covariance constraint
$E\left[\bbx \bbx^\top\right]\preceq\bbs$ is found and denoted by
$\mathcal{C}(\bbs)$, then the secrecy capacity regions under the
total power constraint and the per-antenna power constraint can be
expressed as
\begin{align}
\mathcal{C}^{\rm
total}&=\bigcup_{\bbs\in\mathcal{S}^{\rm total}}\mathcal{C}(\bbs)\\
\mathcal{C}^{\rm per-ant}&=\bigcup_{\bbs\in\mathcal{S}^{\rm
per-ant}}\mathcal{C}(\bbs)
\end{align}
respectively.

One other comment about the covariance constraint on the channel
input is regarding the positive definiteness of $\bbs$. Following
Lemma~2 of \cite{Shamai_MIMO}, it can be shown that, for any
degraded (resp. aligned) Gaussian MIMO multi-receiver channel
under a covariance constraint $E\left[\bbx \bbx^\top\right]\preceq
\bbs$ where $\bbs$ is a non-invertible positive semi-definite
matrix, i.e., $\bbs \succeq \bzero$ and $|\bbs|=0$, we can find
another equivalent degraded (resp. aligned) channel with fewer
transmit and receive antennas under a covariance constraint
$E\big[\hat{\bbx} \hat{\bbx}^\top\big]\preceq \bbs^\prime$ such
that $\bbs ^\prime \succ \bzero$. Here the equivalence refers to
the fact that both of these channels will have the same secrecy
capacity region. Thus, as long as a degraded or an aligned channel
is considered, there is no loss of generality in imposing a
covariance constraint with a strictly positive definite matrix
$\bbs$, and this is why we assumed that $\bbs$ is strictly
positive definite for the degraded and the aligned channels.

\section{Gaussian SISO Multi-receiver Wiretap Channel}

We first visit the Gaussian SISO multi-receiver wiretap channel.
The aims of this section are to show that a straightforward
extension of existing converse techniques for the Gaussian scalar
broadcast channel fails to provide a converse proof for the
Gaussian SISO multi-receiver wiretap channel, and to provide an
alternative proof technique using either the MMSE or the Fisher
information along with their connections with the differential
entropy. To this end, we first define the Gaussian SISO
multi-receiver wiretap channel
\begin{align}
Y_k&=X+N_k,\quad k=1,2 \\
Z&=X+N_{Z}
\end{align}
where we also restrict our attention to the two-user case for
simplicity of the presentation. The channel input $X$ is subject
to a power constraint $E\left[X^2\right]\leq P$. The variances of
the zero-mean Gaussian random variables $N_1,N_2,N_Z$ are given by
$\sigma^2_{1},\sigma^2_2,\sigma_Z^2$, respectively, and satisfy
the following order
\begin{align}
\sigma^2_{1}\leq \sigma^2_2\leq \sigma_Z^2
\end{align}
Since the correlations among $N_1,N_2,N_Z$ have no effect on the
secrecy capacity region, we can adjust the correlation structure
to ensure the following Markov chain
\begin{align}
X\rightarrow Y_1\rightarrow Y_2\rightarrow Z
\end{align}
Thus, this channel can be considered as a degraded channel, and
its secrecy capacity region is given by
Theorem~\ref{degraded_multiuser_wiretap}, in particular, by
(\ref{degraded_multiuser_wiretap_rate_two_user_1}) and
(\ref{degraded_multiuser_wiretap_rate_two_user_2}). Hence, to
compute the secrecy capacity region explicitly, we need to find
the optimal joint distributions of $(X,U_2)$ in
(\ref{degraded_multiuser_wiretap_rate_two_user_1}) and
(\ref{degraded_multiuser_wiretap_rate_two_user_2}). The
corresponding secrecy capacity region is given by the following
theorem.
\begin{Theo}
\label{theorem_epi_is_not_sufficient} The secrecy capacity region
of the two-user Gaussian SISO wiretap channel is given by the
union of the rate pairs $(R_1,R_2)$ satisfying
\begin{align}
R_1 &\leq \frac{1}{2} \log \left(1+\frac{\alpha P
}{\sigma_1^2}\right) -\frac{1}{2} \log \left(1+\frac{\alpha P
}{\sigma_Z^2}\right) \label{siso_rate_user_1}\\
R_2 &\leq \frac{1}{2}\log\left(1+\frac{\bar{\alpha} P}{\alpha
P+\sigma_2^2}\right)- \frac{1}{2}\log\left(1+\frac{\bar{\alpha}
P}{\alpha P+\sigma_Z^2}\right) \label{siso_rate_user_2}
\end{align}
where the union is over all $\alpha \in[0,1]$, and $\bar{\alpha}$
denotes $1-\alpha$.
\end{Theo}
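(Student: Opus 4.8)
The plan is to obtain the region in two steps: achievability is essentially a corollary of Theorem~\ref{degraded_multiuser_wiretap}, while the converse is the substantial part and is exactly where, as announced, the entropy-power inequality must be replaced by an MMSE (or Fisher information) argument. \textbf{Achievability.} Since the channel is degraded, $X\rightarrow Y_1\rightarrow Y_2\rightarrow Z$, every rate pair in (\ref{degraded_multiuser_wiretap_rate_two_user_1})--(\ref{degraded_multiuser_wiretap_rate_two_user_2}) is achievable for any admissible $(X,U_2)$, so I would simply evaluate those expressions for the Gaussian superposition choice $X=U_2+V$, with $U_2\sim\mathcal{N}(0,\bar\alpha P)$ and $V\sim\mathcal{N}(0,\alpha P)$ independent, so that $E[X^2]=P$. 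Conditioned on $U_2$, the links to $Y_1$ and $Z$ are scalar Gaussian channels with input power $\alpha P$ and noise $\sigma_1^2$, $\sigma_Z^2$, which gives $I(X;Y_1|U_2)-I(X;Z|U_2)=\frac12\log(1+\alpha P/\sigma_1^2)-\frac12\log(1+\alpha P/\sigma_Z^2)$, i.e.\ (\ref{siso_rate_user_1}); treating $V+N_2$ and $V+N_Z$ as Gaussian noise of variance $\alpha P+\sigma_2^2$ and $\alpha P+\sigma_Z^2$ for the $U_2$-message yields (\ref{siso_rate_user_2}). Sweeping $\alpha\in[0,1]$ traces the claimed region.

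\textbf{Converse.} I would embed the three outputs into one Gaussian-noise continuum: for $t>0$ let $Y(t)=X+N(t)$ with $N(t)\sim\mathcal{N}(0,t)$ independent of $X$, so that $Y_1=Y(\sigma_1^2)$, $Y_2=Y(\sigma_2^2)$ and $Z=Y(\sigma_Z^2)$. Writing $\beta(t)=\mathrm{mmse}(X\mid U_2,Y(t))$ for the conditional MMSE, the I-MMSE relationship gives
\[ I(X;Y_1|U_2)-I(X;Z|U_2)=\int_{\sigma_1^2}^{\sigma_Z^2}\frac{\beta(t)}{2t^2}\,dt, \]
and, after the decomposition $I(U_2;Y_2)-I(U_2;Z)=[I(X;Y_2)-I(X;Z)]-[I(X;Y_2|U_2)-I(X;Z|U_2)]$, the second rate is controlled by the single-user secrecy term $I(X;Y_2)-I(X;Z)$ together with the same integrand over the smaller interval $[\sigma_2^2,\sigma_Z^2]$. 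Letting $\beta_G(t)=\alpha P\,t/(\alpha P+t)$ be the conditional MMSE of the Gaussian superposition input of fine-layer power $\alpha P$, I would fix $\alpha\in[0,1]$ by the matching condition $\int_{\sigma_1^2}^{\sigma_Z^2}\beta(t)\,t^{-2}\,dt=\int_{\sigma_1^2}^{\sigma_Z^2}\beta_G(t)\,t^{-2}\,dt$; since the right-hand side equals twice the right-hand side of (\ref{siso_rate_user_1}), this makes $R_1$ bounded by (\ref{siso_rate_user_1}) by construction. It then remains to prove $R_2\le$ (RHS of (\ref{siso_rate_user_2})), for which I would bound $I(X;Y_2)-I(X;Z)$ by its full-power Gaussian value (scalar Gaussian wiretap optimality, using $E[X^2]\le P$) and show that the conditional integrand over $[\sigma_2^2,\sigma_Z^2]$ is \emph{at least} its $\alpha P$-Gaussian counterpart.

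\textbf{Main obstacle.} The hard part is deriving both comparisons from the single matching condition, and the tool is the single-crossing property of the MMSE: $\beta(t)-\beta_G(t)$ changes sign at most once, and because the Gaussian MMSE dominates at high SNR (small $t$), the sign pattern is negative-then-positive. Since the matching condition forces the $t^{-2}$-weighted integral over all of $[\sigma_1^2,\sigma_Z^2]$ to vanish, the partial integral $\int_{\sigma_1^2}^{s}(\beta-\beta_G)\,t^{-2}\,dt$ is then nonpositive for every $s$; evaluating at $s=\sigma_2^2$ gives precisely the needed lower bound on the integrand over $[\sigma_2^2,\sigma_Z^2]$, so a single use of single-crossing settles both rates. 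Two points require care: first, $\beta$ is conditioned on $U_2$, so I must invoke the \emph{conditional} form of the single-crossing property rather than apply it for each $U_2=u$ and average, since averaging can destroy the single-crossing structure; second, the matching parameter $\alpha$ must be shown to lie in $[0,1]$, which is where the power constraint $E[X^2]\le P$ enters. Finally, the whole argument admits a dual form in which the de Bruijn identity and the Fisher information replace the I-MMSE relationship and the MMSE, and it is this Fisher-information version that will later generalize to the MIMO channel.
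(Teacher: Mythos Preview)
Your proposal is correct and uses the same MMSE/single-crossing machinery as the paper; the only difference is the order in which the two rate bounds are obtained. The paper fixes $\alpha$ by matching $I(X;Y_2|U_2)-I(X;Z|U_2)$ to its Gaussian value---this is just the intermediate value theorem and settles the $R_2$ bound immediately---and then invokes Proposition~\ref{Prop_unique_zero} to conclude that $\mathrm{mmse}(X,t|U_2)\le \alpha P/(t\alpha P+1)$ pointwise beyond the crossing, which bounds the integral over the higher-SNR interval and yields the $R_1$ bound. You instead match on the full interval $[\sigma_1^2,\sigma_Z^2]$, which pins $R_1$ directly, and then recover the $R_2$ bound from the partial-integral consequence of single-crossing. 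The paper's ordering is marginally cleaner---after its match, single-crossing gives a pointwise inequality on the remaining interval, whereas your route needs the extra (easy) observation that a negative-then-positive integrand with zero total weighted integral has nonpositive running integrals---but the content is identical, and your remarks about needing the \emph{conditional} single-crossing property and about the Fisher-information dual are exactly in line with the paper.
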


The achievability of this region can be shown by selecting
$(X,U_2)$ to be jointly Gaussian in
Theorem~\ref{degraded_multiuser_wiretap}. We focus on the converse
proof.

\subsection{Insufficiency of the Entropy-Power Inequality}

\label{sec:epi_insufficient}

As a natural approach, one might try to adopt the converse proofs
of the scalar Gaussian broadcast channel for the converse proof of
Theorem~\ref{theorem_epi_is_not_sufficient}. In the literature,
there are two converses for the Gaussian scalar broadcast channel
which share some main principles. The first converse was given by
Bergmans~\cite{Bergmans} who used Fano's lemma in conjunction with
the entropy-power inequality~\cite{Stam,Blachman} to find the
capacity region. Later, El Gamal gave a relatively simple
proof~\cite{El_Gamal_Converse} which does not recourse to Fano's
lemma. Rather, he started from the single-letter expression for
the capacity region and used entropy-power
inequality~\cite{Stam,Blachman} to evaluate this region. Thus, the
entropy-power inequality~\cite{Stam,Blachman} is the main
ingredient of these converses.

We now attempt to extend these converses to our secrecy context,
i.e., to provide the converse proof of
Theorem~\ref{theorem_epi_is_not_sufficient}, and show where the
argument breaks. In particular, what we will show in the following
discussion is that a stand-alone use of the entropy-power
inequality~\cite{Stam,Blachman} falls short of proving the
optimality of Gaussian signalling in this secrecy context, as
opposed to the Gaussian scalar broadcast channel. For that
purpose, we consider El Gamal's converse for the Gaussian scalar
broadcast channel. However, since the entropy-power inequality is
in a central role for both El Gamal's and Bergmans' converse, the
upcoming discussion can be carried out by using Bergmans' proof as
well.

First, we consider the bound on the second user's secrecy rate.
Using (\ref{degraded_multiuser_wiretap_rate_two_user_2}), we have
\begin{align}
I(U_2;Y_2)-I(U_2;Z)=\big[I(X;Y_2)-I(X;Z)\big]-\big[I(X;Y_2|U_2)-I(X;Z|U_2)\big]
\label{epi_is_not_sufficient1}
\end{align}
where the right-hand side is obtained by using the chain rule, and
the Markov chain $U_2\rightarrow X\rightarrow (Y_1,Y_2,Z)$. The
expression in the first bracket is maximized by Gaussian
$X$~\cite{hellman} yielding
\begin{align}
I(X;Y_2)-I(X;Z) \leq \frac{1}{2} \log
\left(1+\frac{P}{\sigma_2^2}\right) -\frac{1}{2} \log
\left(1+\frac{P}{\sigma_Z^2}\right) \label{epi_is_not_sufficient2}
\end{align}
Moreover, using the Markov chain $U_2\rightarrow X \rightarrow
Y_2\rightarrow Z$, we can bound the expression in the second
bracket as
\begin{align}
0&\leq I(X;Y_2|U_2)-I(X;Z|U_2) \\
& \leq I(X;Y_2)-I(X;Z)\\
 &\leq \frac{1}{2} \log
\left(1+\frac{P}{\sigma_2^2}\right) -\frac{1}{2} \log
\left(1+\frac{P}{\sigma_Z^2}\right)
\end{align}
which implies that for any $(X,U_2)$ pair, there exists an
$\alpha\in[0,1]$ such that
\begin{align}
I(X;Y_2|U_2)-I(X;Z|U_2) =\frac{1}{2} \log \left(1+\frac{\alpha
P}{\sigma_2^2}\right) -\frac{1}{2} \log \left(1+\frac{\alpha
P}{\sigma_Z^2}\right) \label{epi_is_not_sufficient3}
\end{align}
Combining (\ref{epi_is_not_sufficient2}) and
(\ref{epi_is_not_sufficient3}) in (\ref{epi_is_not_sufficient1})
yields the desired bound on $R_2$ given in
(\ref{siso_rate_user_2}).

From now on, we focus on obtaining the bound given in
(\ref{siso_rate_user_1}) on the first user's secrecy rate. To this
end, one needs to solve the following
optimization\footnote{Equivalently, one can consider the following
optimization problem
\begin{align}
&\max~I(X;Y_1|U_2)-I(X;Y_2|U_2)\nonumber \\
&\textrm{s.t. }~~I(X;Y_2|U_2)-I(X;Z|U_2)=\frac{1}{2} \log
\left(1+\frac{\alpha P}{\sigma_2^2}\right) -\frac{1}{2} \log
\left(1+\frac{\alpha P}{\sigma_Z^2}\right) \nonumber
\end{align}
which, in turn, would yield a similar contradiction.}
\begin{align}
&\max~I(X;Y_1|U_2)-I(X;Z|U_2) \label{epi_is_not_sufficient_obj}\\
&\textrm{s.t. }~~I(X;Y_2|U_2)-I(X;Z|U_2)=\frac{1}{2} \log
\left(1+\frac{\alpha P}{\sigma_2^2}\right) -\frac{1}{2} \log
\left(1+\frac{\alpha P}{\sigma_Z^2}\right)
\label{epi_is_not_sufficient_constraint}
\end{align}
When the term $I(X;Z|U_2)$ is absent in both the objective
function and the constraint, as in the case of the Gaussian scalar
broadcast channel, the entropy-power
inequality~\cite{Stam,Blachman} can be used to solve this
optimization problem. However, the presence of this term
complicates the situation, and a stand-alone use of the
entropy-power inequality~\cite{Stam,Blachman} does not seem to be
sufficient. To substantiate this claim, let us consider the
objective function in (\ref{epi_is_not_sufficient_obj})
\begin{align}
I(X;Y_1|U_2)-I(X;Z|U_2)&=h(Y_1|U_2)-h(Z|U_2)-\frac{1}{2}\log\frac{\sigma_1^2}{\sigma_Z^2}\\
&\leq \frac{1}{2} \log \Big(e^{2h(Z|U_2)}-2\pi e
\left(\sigma_Z^2-\sigma_1^2\right)\Big)-h(Z|U_2)-\frac{1}{2}\log\frac{\sigma_1^2}{\sigma_Z^2}
\label{epi_is_not_sufficient4}
\end{align}
where the inequality is obtained by using the entropy-power
inequality. Since the right-hand side of
(\ref{epi_is_not_sufficient4}) is monotonically increasing in
$h(Z|U_2)$, to show the optimality of Gaussian signalling, we need
\begin{align}
h(Z|U_2)\leq \frac{1}{2}\log 2\pi e (\alpha P+\sigma_Z^2)
\label{that_is_wut_we_need}
\end{align}
which will result in the desired bound on
(\ref{epi_is_not_sufficient_obj}), i.e., the following
\begin{align}
I(X;Y_1|U_2)-I(X;Z|U_2) \leq \frac{1}{2} \log \left(1+\frac{
\alpha P}{\sigma_1^2}\right) -\frac{1}{2} \log
\left(1+\frac{\alpha P}{\sigma_Z^2}\right)
\end{align}
which is the desired end-result in (\ref{siso_rate_user_1}).

We now check whether (\ref{that_is_wut_we_need}) holds under the
constraint given in (\ref{epi_is_not_sufficient_constraint}). To
this end, consider the difference of mutual informations in
(\ref{epi_is_not_sufficient_constraint})
\begin{align}
I(X;Y_2|U_2)-I(X;Z|U_2)&=h(Y_2|U_2)-h(Z|U_2)-\frac{1}{2}\log\frac{\sigma_2^2}{\sigma_Z^2}\\
&\leq \frac{1}{2} \log \Big(e^{2h(Z|U_2)}-2\pi e
\left(\sigma_Z^2-\sigma_2^2\right)\Big)-h(Z|U_2)-\frac{1}{2}\log\frac{\sigma_2^2}{\sigma_Z^2}
\label{epi_is_not_sufficient5}
\end{align}
where the inequality is obtained by using the entropy-power
inequality. Now, using the constraint given in
(\ref{epi_is_not_sufficient_constraint}) in
(\ref{epi_is_not_sufficient5}), we get
\begin{align}
\frac{1}{2}\log\left( \frac{ \alpha P+\sigma_2^2}{\alpha
P+\sigma_Z^2}\right) \leq \frac{1}{2}\log \Big(e^{2h(Z|U_2)}-2\pi
e \left(\sigma_Z^2-\sigma_2^2\right)\Big)-h(Z|U_2)
\end{align}
which implies
\begin{align}
\frac{1}{2}\log 2\pi e (\alpha P+\sigma_Z^2)\leq  h(Z|U_2)
\end{align}
Thus, as opposed to the inequality that we need to show the
optimality of Gaussian signalling via the entropy-power
inequality, i.e., the bound in (\ref{that_is_wut_we_need}), we
have an opposite inequality. This discussion reveals that if
Gaussian signalling is optimal, then its proof cannot be deduced
from a straightforward extension of the converse proofs for the
Gaussian scalar broadcast channel
in~\cite{Bergmans,El_Gamal_Converse}. Thus, we need a new
technique to provide the converse for
Theorem~\ref{theorem_epi_is_not_sufficient}. We now present two
different proofs. The first proof relies on the relationship
between the MMSE and the mutual information along with the
properties of the MMSE, and the second proof replaces the MMSE
with the Fisher information.

\subsection{Converse for Theorem~\ref{theorem_epi_is_not_sufficient} Using the MMSE}
We now provide a converse which uses the connection between the
MMSE and the mutual information established
in~\cite{Guo_ISIT_08,Guo_MMSE}. In~\cite{Guo_ISIT_08}, the authors
also give an alternative converse for the scalar Gaussian
broadcast channel. Our proof will follow this converse, and
generalize it to the context where there are secrecy constraints.

First, we briefly state the necessary background information. Let
$N$ be a zero-mean unit-variance Gaussian random variable, and
$(U,X)$ be a pair of arbitrarily correlated random variables which
are independent of $N$. The MMSE of $X$ when it is observed
through $U$ and $\sqrt{t}X+N$ is
\begin{align}
{\rm
mmse}(X,t|U)=E\left[\left(X-E\left[X|\sqrt{t}X+N,U\right]\right)^2\right]
\end{align}
As shown in~\cite{Guo_ISIT_08,Guo_MMSE}, the MMSE and the
conditional mutual information are related through
\begin{align}
I(X;\sqrt{t}X+N|U)=\frac{1}{2}\int_{0}^{t}{\rm mmse}(X,t|U)dt
\label{mmse_mutual_info}
\end{align}
For our converse, we need the following proposition which was
proved in~\cite{Guo_ISIT_08}. \vspace{-0.2cm}
\begin{Prop}[\!\!\cite{Guo_ISIT_08},Proposition 12]
\label{Prop_unique_zero} Let $U,X,N$ be as specified above. The
function
\begin{align}
f(t)=\frac{\sigma^2}{\sigma^2 t +1}-{\rm mmse}(X,t|U)
\end{align}
has at most one zero in $[0,\infty)$ unless $X$ is Gaussian
conditioned on $U$ with variance $\sigma^2$, in which case the
function is identically zero on $[0,\infty)$. In particular, if
$t_0 < \infty$ is the unique zero, then $f(t)$ is strictly
increasing on $[0,t_0]$, and strictly positive on $(t_0,\infty)$.
\vspace{-0.2cm}
\end{Prop}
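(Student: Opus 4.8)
The plan is to compare the two terms of $f$ through the ordinary differential equations they satisfy in the SNR variable $t$. The first observation is that $\phi(t):=\frac{\sigma^2}{\sigma^2 t+1}$ is exactly the conditional MMSE one obtains if $X$ is Gaussian given $U$ with variance $\sigma^2$, and that it solves the Riccati equation $\phi'(t)=-\phi(t)^2$. For the genuine MMSE $g(t):={\rm mmse}(X,t|U)$, I would invoke the known expression for its SNR-derivative, $g'(t)=-E\big[{\rm Var}(X|\sqrt t X+N,U)^2\big]$, which follows from the I-MMSE machinery of \cite{Guo_MMSE,Guo_ISIT_08}. Writing $V_t:={\rm Var}(X|\sqrt t X+N,U)$, so that $g(t)=E[V_t]$, Jensen's inequality yields $g'(t)=-E[V_t^2]\le -(E[V_t])^2=-g(t)^2$, with equality at a given $t$ if and only if $V_t$ is almost surely constant in the observation.

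The heart of the argument is then an elementary comparison. Since $g>0$ for all finite $t$ (the degenerate case where $X$ is deterministic given $U$ makes ${\rm mmse}\equiv 0$ and $f=\phi>0$, so there is no zero), I would pass to reciprocals: $\big(1/g\big)'=-g'/g^2\ge 1$, whereas $\big(1/\phi\big)'\equiv 1$. Hence $\psi(t):=1/g(t)-1/\phi(t)$ is nondecreasing. Because $g,\phi>0$, we have $f>0\iff g<\phi\iff \psi>0$, so $f$ and $\psi$ share the same sign and, in particular, the same zero set; since $\psi$ is monotone, this zero set is an interval. If it is empty or a single point, $f$ has at most one zero, as claimed.

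It remains to pin down the behavior around a unique zero $t_0$. At any zero we have $\phi(t_0)=g(t_0)$, whence $f'(t_0)=-\phi(t_0)^2+E[V_{t_0}^2]=E[V_{t_0}^2]-(E[V_{t_0}])^2={\rm Var}(V_{t_0})\ge 0$. Moreover, on the set where $f<0$ we have $g>\phi>0$, so $g^2>\phi^2$ and $f'=E[V_t^2]-\phi^2\ge g^2-\phi^2>0$; since $\psi$ is nondecreasing with unique zero $t_0$, this set is exactly $[0,t_0)$, giving that $f$ is strictly increasing on $[0,t_0]$, while $f>0$ on $(t_0,\infty)$.

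Finally, the exceptional case is where $\psi\equiv 0$ on an interval of positive length, i.e. $g=\phi$ there. On that interval $\psi'=0$ forces $E[V_t^2]=(E[V_t])^2$, so $V_t$ is almost surely constant in the observation for each such $t$. This is the step I expect to be the main obstacle: I would argue that a conditional variance ${\rm Var}(X|\sqrt t X+N,U=u)$ that does not depend on the observed value can occur only for a conditionally Gaussian $X$. Concretely, I would use the second-order Tweedie/heat-equation identity $t\,{\rm Var}(X|Y=y,U=u)=1+(\log p_{Y|U=u})''(y)$ for $Y=\sqrt t X+N$: constancy in $y$ forces $(\log p_{Y|U=u})''$ to be constant, hence $p_{Y|U=u}$ Gaussian, and then Cram\'er's decomposition theorem forces $X\mid U=u$ Gaussian. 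Matching the resulting MMSE to $\phi$ on the interval (for instance by analytic continuation in $t$ and the injectivity of $v\mapsto v/(1+tv)$) forces the conditional variance to equal $\sigma^2$ almost surely, so $X$ is Gaussian given $U$ with variance $\sigma^2$ and $f\equiv 0$ on all of $[0,\infty)$. This dichotomy --- at most one zero, or identically zero with conditional Gaussianity --- is precisely the assertion of the proposition.
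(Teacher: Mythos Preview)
The paper does not prove this proposition; it merely quotes it from \cite{Guo_ISIT_08} (Proposition~12 there) and uses it as a black box in the MMSE-based converse of Section~4.2. So there is no ``paper's own proof'' to compare against.

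That said, your argument is essentially the proof given in \cite{Guo_ISIT_08}: the single-crossing property is derived from the derivative identity $\frac{d}{dt}\,{\rm mmse}(X,t|U)=-E\big[{\rm Var}(X\mid \sqrt{t}X+N,U)^2\big]$ together with Jensen's inequality, and the comparison with the Gaussian curve $\phi(t)=\sigma^2/(1+\sigma^2 t)$, which satisfies $\phi'=-\phi^2$. Your reciprocal trick ($\psi=1/g-1/\phi$ nondecreasing) is a clean way to package the comparison; the sign analysis around a unique zero $t_0$ is correct. The second-order Tweedie identity you quote, $t\,{\rm Var}(X\mid Y=y,U=u)=1+(\log p_{Y|U=u})''(y)$, is correct for $Y=\sqrt{t}X+N$ with $N\sim\mathcal N(0,1)$, and together with Cram\'er's theorem it does force conditional Gaussianity when $V_t$ is almost surely constant. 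One small point worth making explicit: ``$V_t$ a.s.\ constant'' means constant jointly in $(Y,U)$, so after you deduce that $X\mid U=u$ is Gaussian with some variance $v(u)$, you still need the injectivity of $v\mapsto v/(1+tv)$ to conclude that $v(u)$ is the same for almost every $u$; you allude to this, but it is the step that pins down the common variance as $\sigma^2$.
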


We now give the converse. We use exactly the same steps from
(\ref{epi_is_not_sufficient1}) to (\ref{epi_is_not_sufficient3})
to establish the bound on the secrecy rate of the second user
given in (\ref{siso_rate_user_2}). To bound the secrecy rate of
the first user, we first restate (\ref{epi_is_not_sufficient3}) as
\begin{align}
I(X;Y_2|U_2)-I(X;Z|U_2)&=I(X;(1/\sigma_2)X+N|U_2)-I(X;(1/\sigma_Z)X+N|U_2)
\\
&=\frac{1}{2} \log \left(1+\frac{\alpha P}{\sigma_2^2}\right)
-\frac{1}{2} \log
\left(1+\frac{\alpha P}{\sigma_Z^2}\right)\\
&=\frac{1}{2} \int_{1/\sigma_Z^2}^{1/\sigma_2^2} \frac{\alpha P}{t
\alpha P+1}dt \label{mmse_based_converse1}
\end{align}
Furthermore, due to (\ref{mmse_mutual_info}), we also have
\begin{align}
I(X;Y_2|U_2)-I(X;Z|U_2)&=I(X;(1/\sigma_2)X+N|U_2)-I(X;(1/\sigma_Z)X+N|U_2)\nonumber\\
&=\frac{1}{2} \int_{1/\sigma_Z^2}^{1/\sigma_2^2} {\rm
mmse}(X,t|U_2) dt \label{mmse_based_converse2}
\end{align}
Comparing (\ref{mmse_based_converse1}) and
(\ref{mmse_based_converse2}) reveals that either we have
\begin{align}
{\rm mmse}(X,t|U_2)=\frac{ \alpha P}{t \alpha P+1}
\end{align}
for all $t\in[1/\sigma_Z^2,1/\sigma_2^2]$, or there exists a
unique $t_0\in(1/\sigma_Z^2,1/\sigma_2^2)$ such that
\begin{align}
{\rm mmse}(X,t_0|U_2)=\frac{\alpha P}{t_0 \alpha P+1}
\label{epi_is_not_sufficient6}
\end{align}
and
\begin{align}
{\rm mmse}(X,t|U_2)\leq \frac{ \alpha P}{t \alpha P+1}
\label{epi_is_not_sufficient7}
\end{align}
for $t>t_0$, because of Proposition~\ref{Prop_unique_zero}. The
former case occurs if $X$ is Gaussian conditioned on $U_2$ with
variance $\alpha P$, in which case we arrive at the desired bound
on the secrecy rate of the first user given in
(\ref{siso_rate_user_1}). If we assume that the latter case in
(\ref{epi_is_not_sufficient6})-(\ref{epi_is_not_sufficient7})
occurs, then, we can use the following sequence of derivations to
bound the first user's secrecy rate
\begin{align}
I(X;Y_1|U_2)-I(X;Z|U_2)&=
I(X;(1/\sqrt{\sigma_1})X+N|U_2)-I(X;(1/\sqrt{\sigma_Z})X+N|U_2)\\
&=\frac{1}{2}\int_{1/\sigma_Z^2}^{1/\sigma_1^2}{\rm
mmse}(X,t|U_2)dt\\
&=\frac{1}{2}\int_{1/\sigma_Z^2}^{1/\sigma_2^2}{\rm
mmse}(X,t|U_2)dt+\frac{1}{2}\int_{1/\sigma_2^2}^{1/\sigma_1^2}{\rm
mmse}(X,t|U_2)dt\\
&=\frac{1}{2} \log \left(1+\frac{ \alpha P}{\sigma_2^2}\right)
-\frac{1}{2} \log
\left(1+\frac{\alpha P}{\sigma_Z^2}\right)\nonumber\\
&\quad +\frac{1}{2}\int_{1/\sigma_2^2}^{1/\sigma_1^2}{\rm
mmse}(X,t|U_2)dt \label{epi_is_not_sufficient8}\\
&\leq \frac{1}{2} \log \left(1+\frac{\alpha P}{\sigma_2^2}\right)
-\frac{1}{2} \log \left(1+\frac{\alpha
P}{\sigma_Z^2}\right)+\frac{1}{2}\int_{1/\sigma_2^2}^{1/\sigma_1^2}
\frac{ \alpha P}{t \alpha P+1} dt \label{epi_is_not_sufficient9}\\
&=\frac{1}{2} \log \left(1+\frac{\alpha P}{\sigma_1^2}\right)
-\frac{1}{2} \log \left(1+\frac{\alpha P}{\sigma_Z^2}\right)
\label{epi_is_not_sufficient10}
\end{align}
where (\ref{epi_is_not_sufficient8}) follows from
(\ref{mmse_based_converse1}) and (\ref{mmse_based_converse2}), and
(\ref{epi_is_not_sufficient9}) is due to
(\ref{epi_is_not_sufficient7}). Since
(\ref{epi_is_not_sufficient10}) is the desired bound on the
secrecy rate of the first user given in (\ref{siso_rate_user_1}),
this completes the converse proof.

\subsection{Converse for Theorem~\ref{theorem_epi_is_not_sufficient} Using the Fisher Information}
\label{sec:tgi_fisher}

We now provide an alternative converse which replaces the MMSE
with the Fisher information in the above proof. We first provide
some basic definitions. The unconditional versions of the
following definition and the upcoming results regarding the Fisher
information can be found in standard detection-estimation texts;
to note one, \cite{oliver_johnson} is a good reference for a
detailed treatment of the subject.

\begin{Def}
Let $X,U$ be arbitrarily correlated random variables with
well-defined densities, and $f(x|u)$ be the corresponding
conditional density. The conditional Fisher information of $X$ is
defined by
\begin{align}
J(X|U)=E\left[\left(\frac{\partial \log f(x|u)}{\partial
x}\right)^2\right]
\end{align}
where the expectation is over $(U,X)$.
\end{Def}

The vector generalization of the following conditional form of the
Fisher information inequality will be given in
Lemma~\ref{lemma_cond_matrix_fii} in
Section~\ref{secr:proof_of_thm_conditional_vec_generalization},
thus its proof is omitted here.

\begin{Lem}
Let $U,X,Y$ be random variables, and let the density for any
combination of them exist. Moreover, let us assume that given $U$,
$X$ and $Y$ are independent. Then, we have
\begin{align}
J(X+Y|U) \leq \beta^2 J(X|U)+(1-\beta)^2 J(Y|U)
\end{align}
for any $\beta \in [0,1]$.
\end{Lem}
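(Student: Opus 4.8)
The plan is to adapt the classical score-function proof of the Fisher information inequality to the conditional setting, conditioning on $U$ throughout and exploiting the conditional independence of $X$ and $Y$. For a conditional density $f(\cdot|u)$ I introduce the conditional score function $\rho_{X|U}(x|u)=\partial \log f(x|u)/\partial x$, so that $J(X|U)=E[\rho_{X|U}(X|U)^2]$, and likewise $\rho_{Y|U}$ and $\rho_{X+Y|U}$ for $Y$ and for the sum $S=X+Y$. Two standard facts about the score will be used: first, it has conditional mean zero, $E[\rho_{X|U}(X|U)\,|\,U]=0$, obtained by differentiating $\int f(x|u)\,dx=1$; second, conditional expectation is an $L^2$ contraction, which is the form of Jensen's inequality I will invoke.

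The crux is the conditional score identity. Fixing $U=u$ and using that $X,Y$ are conditionally independent, the conditional density of the sum is the convolution $f_{S|U}(s|u)=\int f_{X|U}(x|u)f_{Y|U}(s-x|u)\,dx$. Differentiating in $s$ under the integral sign (hitting the $Y$ factor) and dividing by $f_{S|U}(s|u)$ yields
\begin{align}
\rho_{X+Y|U}(s|u)=E\big[\rho_{Y|U}(Y|U)\,\big|\,S=s,U=u\big],
\end{align}
and differentiating the $X$ factor instead produces the same quantity written as $E[\rho_{X|U}(X|U)\,|\,S=s,U=u]$. For any convex weight $\beta\in[0,1]$ these two representations combine into
\begin{align}
\rho_{X+Y|U}(S|U)=E\big[\beta\rho_{X|U}(X|U)+(1-\beta)\rho_{Y|U}(Y|U)\,\big|\,S,U\big].
\end{align}

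With this identity in hand the conclusion follows mechanically. Squaring and using the $L^2$ contraction property of conditional expectation gives $J(X+Y|U)=E[\rho_{X+Y|U}(S|U)^2]\leq E[(\beta\rho_{X|U}(X|U)+(1-\beta)\rho_{Y|U}(Y|U))^2]$. Expanding the square, the cross term vanishes because, conditioned on $U$, the variables $X$ and $Y$ are independent and each score has conditional mean zero, so $E[\rho_{X|U}(X|U)\rho_{Y|U}(Y|U)\,|\,U]=E[\rho_{X|U}(X|U)\,|\,U]\,E[\rho_{Y|U}(Y|U)\,|\,U]=0$; taking the expectation over $U$ then leaves exactly $\beta^2 J(X|U)+(1-\beta)^2 J(Y|U)$, uniformly in $\beta$.

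I expect the main obstacle to be rigorously justifying the conditional score identity, that is, the interchange of differentiation and integration in the convolution together with the attendant regularity (existence and integrability of $\partial_s f_{Y|U}$ and vanishing of the densities at infinity). This is precisely where the hypothesis that the relevant densities exist and are smooth enough is consumed. Once that identity is secured, the remainder is the routine Jensen-plus-orthogonality argument sketched above, and the same scheme generalizes verbatim to the vector case treated later via the conditional Fisher information matrix.
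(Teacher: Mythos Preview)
Your proposal is correct and matches the paper's approach essentially step for step: the paper omits the scalar proof and defers to the vector version (Lemma~\ref{lemma_cond_matrix_fii}, proved in Appendix~\ref{proof_of_lemma_cond_matrix_fii}), which rests on exactly the conditional convolution identity $\brho(\bbw|\bbu)=E[\brho(\bbx|\bbu)\,|\,\bbw,\bbu]$ (Lemma~\ref{lemma_cond_conv_identity}) and the vanishing of the cross term via conditional independence and $E[\brho(\bbx|\bbu)\,|\,\bbu]=\bzero$ (Lemma~\ref{lemma_cond_stein_implies}). The only cosmetic difference is that the paper expands the nonnegative quantity $E\big[(\beta\rho_X+(1-\beta)\rho_Y-\rho_S)^2\big]\geq 0$ directly rather than phrasing it as the $L^2$ contraction of conditional expectation, but the two computations are line-by-line equivalent.
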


\begin{Cor}
\label{corollary_cond_fii} Let $X,Y,U$ be as specified above.
Then, we have
\begin{align}
\frac{1}{J(X+Y|U)}\geq \frac{1}{J(X|U)}+\frac{1}{J(Y|U)}
\end{align}
\end{Cor}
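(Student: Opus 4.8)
The plan is to derive this corollary directly from the conditional Fisher information inequality in the preceding lemma by optimizing over the free parameter $\beta$. Write $a=J(X|U)$ and $b=J(Y|U)$, both of which are nonnegative. The lemma gives
\begin{align}
J(X+Y|U) \leq \beta^2 a + (1-\beta)^2 b \nonumber
\end{align}
for every $\beta\in[0,1]$, so the sharpest bound available from this one-parameter family is obtained by minimizing the right-hand side over $\beta$.

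First I would minimize $g(\beta)=\beta^2 a+(1-\beta)^2 b$ over $\beta\in[0,1]$. Setting $g'(\beta)=2\beta a-2(1-\beta)b=0$ yields the stationary point $\beta^\star=b/(a+b)$, for which $1-\beta^\star=a/(a+b)$. Since $a,b\geq 0$, this $\beta^\star$ automatically lies in $[0,1]$, so it is an admissible choice; and because $g$ is convex it is the global minimizer on the interval. Substituting back gives
\begin{align}
g(\beta^\star)=\frac{ab^2+a^2b}{(a+b)^2}=\frac{ab}{a+b} \nonumber
\end{align}
so that $J(X+Y|U)\leq ab/(a+b)$.

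The final step is simply to take reciprocals of both sides, which reverses the inequality since all quantities are positive, producing
\begin{align}
\frac{1}{J(X+Y|U)}\geq \frac{a+b}{ab}=\frac{1}{a}+\frac{1}{b}=\frac{1}{J(X|U)}+\frac{1}{J(Y|U)} \nonumber
\end{align}
as claimed.

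There is no genuine obstacle here: the corollary is a one-parameter optimization of the bound already established in the lemma, and the only point requiring a moment's attention is confirming that the unconstrained minimizer $\beta^\star=b/(a+b)$ falls inside the admissible range $[0,1]$, which is immediate from the nonnegativity of the Fisher information. If one of $a,b$ vanishes (a degenerate case), the reciprocal form is read in the usual extended-real sense and the statement holds trivially.
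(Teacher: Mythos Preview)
Your proof is correct and follows essentially the same approach as the paper: both select $\beta = J(Y|U)/(J(X|U)+J(Y|U))$ in the preceding lemma and simplify. You additionally justify that this $\beta$ is the optimal choice via a convexity argument, which the paper omits, but the underlying idea is identical.
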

\begin{proof}
Select
\begin{align}
\beta=\frac{J(Y|U)}{J(X|U)+J(Y|U)}
\end{align}
in the previous lemma.
\end{proof}

Similarly, the vector generalization of the following conditional
form of the Cramer-Rao inequality will be given in
Lemma~\ref{lemma_conditional_crb_vector} in
Section~\ref{secr:proof_of_thm_conditional_vec_generalization},
and hence, its proof is omitted here.

\begin{Lem}
\label{lemma_conditional_crb} Let $X,U$ be arbitrarily correlated
random variables with well-defined densities. Then, we have
\begin{align}
J(X|U)\geq \frac{1}{{\rm Var}(X|U)}
\end{align}
with equality if $(U,X)$ is jointly Gaussian.
\end{Lem}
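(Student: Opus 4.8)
The plan is to reduce the conditional inequality to the classical unconditional Cram\'er--Rao bound applied slice-by-slice in $u$, and then to glue the slices together with Jensen's inequality. Here I read ${\rm Var}(X|U)$ as the scalar $E\big[(X-E[X|U])^2\big]=E_U\big[{\rm Var}(X|U{=}u)\big]$, which is consistent with the definition of $J(X|U)$ as an expectation over $(U,X)$ and with the MMSE interpretation used elsewhere in the converse.

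First I would fix $u$ and work with the conditional density $f(x|u)$. Under the assumed regularity (densities smooth enough that boundary terms vanish), the score $\partial \log f(x|u)/\partial x=f'(x|u)/f(x|u)$ has conditional mean zero because $\int f'(x|u)\,dx=0$. An integration by parts then yields the correlation identity
\begin{align}
E\!\left[\left(X-E[X|U{=}u]\right)\frac{\partial \log f(x|u)}{\partial x}\,\bigg|\,U{=}u\right]=\int\!\left(x-E[X|U{=}u]\right)f'(x|u)\,dx=-1 .
\end{align}
Applying the Cauchy--Schwarz inequality to this conditional expectation gives
\begin{align}
1=\left(E\!\left[\left(X-E[X|U{=}u]\right)\frac{\partial \log f(x|u)}{\partial x}\,\bigg|\,U{=}u\right]\right)^2\leq {\rm Var}(X|U{=}u)\cdot J(X|U{=}u),
\end{align}
i.e.\ the per-slice bound $J(X|U{=}u)\geq 1/{\rm Var}(X|U{=}u)$, with equality precisely when the score is conditionally affine in $x$, that is, when $X$ is Gaussian given $U{=}u$.

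Taking the expectation over $U$ then gives $J(X|U)=E_U\big[J(X|U{=}u)\big]\geq E_U\big[1/{\rm Var}(X|U{=}u)\big]$. Since $t\mapsto 1/t$ is convex on $(0,\infty)$, Jensen's inequality supplies the final link
\begin{align}
E_U\!\left[\frac{1}{{\rm Var}(X|U{=}u)}\right]\geq \frac{1}{E_U\big[{\rm Var}(X|U{=}u)\big]}=\frac{1}{{\rm Var}(X|U)} ,
\end{align}
which completes the proof. For the equality claim, joint Gaussianity of $(U,X)$ makes $X$ Gaussian conditioned on each $u$ (so every slice attains Cram\'er--Rao) and makes ${\rm Var}(X|U{=}u)$ independent of $u$ (so Jensen is tight), hence both inequalities hold with equality. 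The step needing the most care is this Jensen inequality: a naive slice-wise argument produces $E_U[1/{\rm Var}(X|U{=}u)]$ rather than the desired $1/{\rm Var}(X|U)$, and the convexity of $1/t$ is exactly what closes that gap while simultaneously identifying the constant-conditional-variance half of the equality condition.
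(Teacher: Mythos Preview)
Your proof is correct, but it takes a genuinely different route from the paper's. The paper (via the vector version in Lemma~\ref{lemma_conditional_crb_vector}) does not split into slices. Instead it works directly over the full joint law: it expands the nonnegative quantity
\[
E\!\left[\left(\rho(X|U)+\frac{X-E[X|U]}{{\rm Var}(X|U)}\right)^{\!2}\right]\ge 0,
\]
uses the global identities $E[\rho(X|U)(X-E[X|U])]=-1$ and $E\big[(X-E[X|U])^2\big]={\rm Var}(X|U)$ (both proved from the conditional Stein identity and its corollaries), and obtains $J(X|U)-1/{\rm Var}(X|U)\ge 0$ in one step. No Jensen inequality is needed because ${\rm Var}(X|U)$ enters as a single averaged scalar from the start, not as a per-$u$ quantity to be recombined.

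Your slice-wise Cram\'er--Rao followed by Jensen on $t\mapsto 1/t$ is a perfectly valid alternative. It has the pedagogical advantage of making the two halves of the equality condition transparent (per-slice Gaussianity and constant conditional variance). The paper's approach, on the other hand, is shorter and lifts verbatim to the matrix case $\bbj(\bbx|\bbu)\succeq{\rm Cov}(\bbx|\bbu)^{-1}$ without invoking operator convexity of the matrix inverse, which is exactly why the paper prefers it.
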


We now provide the conditional form of the de Bruin
identity~\cite{Stam,Blachman}. The vector generalization of this
lemma will be provided in Lemma~\ref{gradient_fisher_conditional}
in Section~\ref{secr:proof_of_thm_conditional_vec_generalization},
and hence, its proof is omitted here.

\begin{Lem}
\label{lemma_de_bruin} Let $X,U$ be arbitrarily correlated random
variables with finite second order moments. Moreover, assume that
they are independent of $N$ which is a zero-mean unit-variance
Gaussian random variable. Then, we have
\begin{align}
\frac{d h(X+\sqrt{t}N|U)}{d t}=\frac{1}{2}J(X+\sqrt{t}N|U)
\end{align}
\end{Lem}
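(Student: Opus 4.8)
The plan is to reduce the conditional identity to the classical unconditional de Bruin identity by conditioning on $U=u$ and then averaging over $U$; the only genuinely new ingredient is the justification for interchanging the $t$-derivative with the expectation over $U$.

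First I would fix a value $U=u$. Conditioned on this event, $X$ becomes an ordinary random variable with conditional density $f(x|u)$ and finite second moment, independent of $N$. The classical (unconditional) de Bruin identity then applies to this conditional problem and gives
\begin{align}
\frac{d}{dt} h(X+\sqrt{t}N|U=u)=\frac{1}{2}J(X+\sqrt{t}N|U=u)
\end{align}
for every $t>0$. Recall that the standard proof of this fact uses that the density $p_t(y|u)$ of $X+\sqrt{t}N$ given $U=u$ satisfies the heat equation $\partial_t p_t=\tfrac12\partial_y^2 p_t$, so that differentiating $h(X+\sqrt t N|U=u)=-\int p_t\log p_t\,dy$ in $t$ and integrating by parts in $y$ produces the Fisher information; the Gaussian smoothing by $\sqrt t N$ with $t>0$ guarantees the regularity needed to differentiate under the integral and to discard the boundary terms.

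Next I would multiply both sides by $f(u)$ and integrate over $u$. By the definition of the conditional Fisher information the right-hand side integrates to $\tfrac12 J(X+\sqrt t N|U)$, so it remains to show that the left-hand side integrates to $\tfrac{d}{dt}h(X+\sqrt t N|U)$, that is,
\begin{align}
\frac{d}{dt}\int f(u)\,h(X+\sqrt t N|U=u)\,du=\int f(u)\,\frac{d}{dt}h(X+\sqrt t N|U=u)\,du
\end{align}
This interchange of differentiation and expectation is the main obstacle. I would settle it by producing a dominating bound that is uniform in $u$: applying the Stam (Fisher information) inequality to the conditional law of the sum $X+\sqrt t N$ given $U=u$, together with $J(\sqrt t N)=1/t$, yields $J(X+\sqrt t N|U=u)\le 1/t$ for every $u$. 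Hence the integrand's derivative satisfies $0\le \tfrac{d}{dt}h(X+\sqrt t N|U=u)=\tfrac12 J(X+\sqrt t N|U=u)\le \tfrac{1}{2t}$, a bound independent of $u$. On any compact interval $t\in[t_1,t_2]\subset(0,\infty)$ this is dominated by the constant $\tfrac{1}{2t_1}$, which is trivially $f(u)$-integrable, so dominated convergence legitimizes differentiating under the integral sign and completes the proof for all $t>0$.

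I would remark that an alternative is to avoid conditioning first and instead work directly with $p_t(y|u)$, carrying out a single heat-equation computation for $h(X+\sqrt t N|U)$ with the outer expectation over $U$ absorbed from the start. This is the route that generalizes cleanly to the vector setting through the gradient/Jacobian form of the identity, and is the form deferred to Lemma~\ref{gradient_fisher_conditional}; the scalar argument above is simply its one-dimensional specialization, with the uniform $1/t$ Fisher-information bound playing the role of the regularity estimate.
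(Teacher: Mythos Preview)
Your proposal is correct and follows essentially the same route as the paper: the proof is deferred to Lemma~\ref{gradient_fisher_conditional}, whose argument is precisely to condition on $U=u$, apply the unconditional de~Bruin/Palomar--Verd\'u identity, and then average over $U$, interchanging the derivative with the expectation over $U$. Your justification of that interchange via the uniform bound $J(X+\sqrt{t}N\mid U=u)\le J(\sqrt{t}N)=1/t$ and dominated convergence is in fact more explicit than the paper's, which appeals only to finiteness of the conditional differential entropy under the second-moment assumption.
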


We now note the following complementary relationship between the
MMSE and the Fisher information~\cite{Olivier_Rioul,Guo_MMSE}
\begin{align}
J(\sqrt{t}X+N)=1-t\cdot{\rm mmse}(X,t)
\label{complementary_relationship_between_Fisher_mmse}
\end{align}
which itself suggests the existence of an alternative converse
which uses the Fisher information instead of the MMSE. We now
provide the alternative converse based on the Fisher information.
We first bound the secrecy rate of the second user as in the
previous section, by following the exact steps from
(\ref{epi_is_not_sufficient1}) to (\ref{epi_is_not_sufficient3}).
To bound the secrecy rate of the first user, we first rewrite
(\ref{epi_is_not_sufficient3}) as follows
\begin{align}
I(X;Y_2|U_2)-I(X;Z|U_2)&=h(X+\sigma_2 N|U_2)-h(X+\sigma_Z
N|U_2)-\frac{1}{2}\log\frac{\sigma_2^2}{\sigma_Z^2}\\
&=-\frac{1}{2} \int_{\sigma_2^2}^{\sigma_Z^2}
J(X+\sqrt{t}N|U_2)dt-\frac{1}{2}\log\frac{\sigma_2^2}{\sigma_Z^2}\label{de_bruin_implies}\\
&=-\frac{1}{2} \int_{\sigma_2^2}^{\sigma_Z^2}
J(X+\sqrt{t-t^{*}}N^{\prime}+\sqrt{t^{*}}N^{\prime\prime}|U_2)dt-\frac{1}{2}\log\frac{\sigma_2^2}{\sigma_Z^2}
\label{stability_gaussian_2}
\end{align}
where (\ref{de_bruin_implies}) follows from
Lemma~\ref{lemma_de_bruin}, and in (\ref{stability_gaussian_2}),
we used the stability of Gaussian random variables where,
$N^{\prime}, N^{\prime\prime}$ are two independent zero-mean
unit-variance Gaussian random variables. Moreover, $t^*$ is
selected in the range of $\left(0,\sigma_2^2\right)$. We now use
Corollary~\ref{corollary_cond_fii} to bound the conditional Fisher
information in (\ref{stability_gaussian_2}) as follows
\begin{align}
\frac{1}{J(X+\sqrt{t-t^{*}}N^{\prime}+\sqrt{t^{*}}N^{\prime\prime}|U_2)}
&\geq \frac{1}{J(X+\sqrt{t^{*}}N^{\prime\prime}|U_2)}+\frac{1}{J(\sqrt{t-t^{*}}N^{\prime}|U_2)}\\
&= \frac{1}{J(X+\sqrt{t^{*}}N^{\prime\prime}|U_2)}+(t-t^*)
\label{cond_fii_implies}
\end{align}
where the equality follows from Lemma~\ref{lemma_conditional_crb}.
The inequality in (\ref{cond_fii_implies}) is equivalent to
\begin{align}
J(X+\sqrt{t-t^{*}}N^{\prime}+\sqrt{t^{*}}N^{\prime\prime}|U_2)
&\leq
\frac{J(X+\sqrt{t^{*}}N^{\prime\prime}|U_2)}{1+J(X+\sqrt{t^{*}}N^{\prime\prime}|U_2)(t-t^*)}
\end{align}
using which in (\ref{stability_gaussian_2}) yields
\begin{align}
I(X;Y_2|U_2)-I(X;Z|U_2)&\geq -\frac{1}{2}
\int_{\sigma_2^2}^{\sigma_Z^2}\frac{J(X+\sqrt{t^{*}}N^{\prime\prime}|U_2)}{1+J(X+\sqrt{t^{*}}N^{\prime\prime}|U_2)(t-t^*)}dt
-\frac{1}{2}\log\frac{\sigma_2^2}{\sigma_Z^2}\\
&=-\frac{1}{2} \log
\frac{1+J(X+\sqrt{t^{*}}N^{\prime\prime}|U_2)(\sigma_Z^2-t^*)}{1+J(X+\sqrt{t^{*}}N^{\prime\prime}|U_2)(\sigma_2^2-t^*)}
-\frac{1}{2}\log\frac{\sigma_2^2}{\sigma_Z^2}
\label{alt_conv_fisher_compare_1}
\end{align}
We remind that we had already fixed the left-hand side of this
inequality as
\begin{align}
I(X;Y_2|U_2)-I(X;Z|U_2)=\frac{1}{2} \log \left(1+\frac{\alpha
P}{\sigma_2^2}\right) -\frac{1}{2} \log \left(1+\frac{\alpha
P}{\sigma_Z^2}\right) \label{alt_conv_fisher_compare_2}
\end{align}
in (\ref{epi_is_not_sufficient3}). Comparison of
(\ref{alt_conv_fisher_compare_1}) and
(\ref{alt_conv_fisher_compare_2}) results in
\begin{align}
J(X+\sqrt{t^*}N^{\prime\prime}|U_2)\geq \frac{1}{\alpha
P+t^{*}},\quad 0<t^* \leq \sigma_2^2
\label{epi_was_never_sufficient}
\end{align}
At this point, we compare the inequalities in
(\ref{epi_is_not_sufficient7}) and
(\ref{epi_was_never_sufficient}). These two inequalities imply
each other through the complementary relationship between the MMSE
and the Fisher information given in
(\ref{complementary_relationship_between_Fisher_mmse}) after
appropriate change of variables and by noting that
$J(aX)=(1/a^2)J(X)$~\cite{oliver_johnson}. We now find the desired
bound on the secrecy rate of the first user via using the
inequality in (\ref{epi_was_never_sufficient})
\begin{align}
I(X;Y_1|U_2)-I(X;Z|U_2)&=h(X+\sigma_1 N|U_2)-h(X+\sigma_Z N|U_2)-\frac{1}{2}\log\frac{\sigma_1^2}{\sigma_Z^2}\\
&=-\frac{1}{2}\int_{\sigma_1^2}^{\sigma_Z^2}J(X+\sqrt{t}N|U_2)dt-\frac{1}{2}\log\frac{\sigma_1^2}{\sigma_Z^2}\\
&=-\frac{1}{2}\int_{\sigma_1^2}^{\sigma_2^2}J(X+\sqrt{t}N|U_2)dt
-\frac{1}{2}\int_{\sigma_2^2}^{\sigma_Z^2}J(X+\sqrt{t}N|U_2)dt\nonumber\\
&\quad -\frac{1}{2}\log\frac{\sigma_1^2}{\sigma_Z^2}\\
&=-\frac{1}{2}\int_{\sigma_1^2}^{\sigma_2^2}J(X+\sqrt{t}N|U_2)dt -
\frac{1}{2} \log \left(\frac{\alpha P+\sigma_Z^2}{ \alpha
P+\sigma_2^2}\right)
\nonumber\\
& \quad -\frac{1}{2}\log\frac{\sigma_1^2}{\sigma_Z^2} \label{epi_was_never_sufficient_1}\\
&\leq -\frac{1}{2}\int_{\sigma_1^2}^{\sigma_2^2}\frac{1}{\alpha
P+t}dt - \frac{1}{2} \log \left(\frac{\alpha P+\sigma_Z^2}{\alpha
P+\sigma_2^2}\right)
-\frac{1}{2}\log\frac{\sigma_1^2}{\sigma_Z^2} \label{epi_was_never_sufficient_2} \\
&= - \frac{1}{2} \log \left(\frac{ \alpha P+\sigma_2^2}{\alpha
P+\sigma_1^2}\right) - \frac{1}{2} \log \left(\frac{\alpha
P+\sigma_Z^2}{\alpha P+\sigma_2^2}\right)
-\frac{1}{2}\log\frac{\sigma_1^2}{\sigma_Z^2}\\
&=\frac{1}{2} \log\left(1+\frac{\alpha P}{\sigma_1^2}\right)
-\frac{1}{2} \log\left(1+\frac{\alpha P}{\sigma_Z^2}\right)
\label{epi_was_never_sufficient_3}
\end{align}
where (\ref{epi_was_never_sufficient_1}) follows from
(\ref{de_bruin_implies}) and (\ref{alt_conv_fisher_compare_2}),
and (\ref{epi_was_never_sufficient_2}) is due to
(\ref{epi_was_never_sufficient}). Since
(\ref{epi_was_never_sufficient_3}) provides the desired bound on
the secrecy rate of the first user given
in~(\ref{siso_rate_user_1}), this completes the converse proof.

\subsection{Summary of the SISO Case, Outlook for the MIMO Case}
In this section, we first revisited the standard converse
proofs~\cite{Bergmans,El_Gamal_Converse} of the Gaussian scalar
broadcast channel, and showed that a straightforward extension of
these proofs will not be able to provide a converse proof for the
Gaussian SISO multi-receiver wiretap channel. Basically, a
stand-alone use of the entropy-power
inequality~\cite{Stam,Blachman} falls short of resolving the
ambiguity on the auxiliary random variables. We showed that, in
this secrecy context, either the connection between the mutual
information and the MMSE or the connection between the
differential entropy and the Fisher information can be used, along
with their properties, to come up with a converse.

In the next section, we will generalize this converse proof
technique to the degraded MIMO channel. One way of generalizing
this converse technique to the MIMO case might be to use the
channel enhancement technique, which was successfully used in
extending Bergmans' converse proof from the scalar Gaussian
broadcast channel to the degraded vector Gaussian broadcast
channel. We note that such an extension will not work in this
secrecy context. In the degraded Gaussian MIMO broadcast channel,
the non-trivial part of the converse proof was to extend Bergmans'
converse to a vector case, and this was accomplished by the
invention of the channel enhancement technique. However, as we
have shown in Section~\ref{sec:epi_insufficient}, even in the
Gaussian SISO multi-receiver wiretap channel, a Bergmans type
converse does not work. Therefore, we will not pursue a channel
enhancement approach to extend our proof from the SISO channel to
the degraded MIMO channel. Instead, we will use the connections
between the Fisher information and the differential entropy, as we
did in Section~\ref{sec:tgi_fisher}, to come up with a converse
proof for the degraded MIMO channel. We will then use the channel
enhancement technique to extend our converse proof to the aligned
MIMO channel. Finally, we will use some limiting arguments, as
in~\cite{Tie_Liu_MIMO_WT,Shamai_MIMO}, to come up with a converse
proof for the most general MIMO channel.

\section{Degraded Gaussian MIMO Multi-receiver Wiretap Channel}

\label{sec:proof_of_main_result}

In this section, we establish the secrecy capacity region of the
degraded Gaussian MIMO multi-receiver wiretap channel. We state
the main result of this section in the following theorem.

\begin{Theo}
\label{main_result} The secrecy capacity region of the degraded
Gaussian MIMO multi-receiver wiretap channel is given by the union
of the rate tuples $R_1,\ldots,R_K$ satisfying
\begin{align}
R_k &\leq \frac{1}{2} \log
\frac{\left|\sum_{i=1}^{k}\bbk_i+\bbsigma_k\right|}{\left|\sum_{i=1}^{k-1}\bbk_i+\bbsigma_k\right|}
-\frac{1}{2} \log
\frac{\left|\sum_{i=1}^{k}\bbk_i+\bbsigma_Z\right|}{\left|\sum_{i=1}^{k-1}\bbk_i+\bbsigma_Z\right|},\quad
k=1,\ldots,K \label{sec_rates_degraded_general}
\end{align}
where the union is over all positive semi-definite matrices
$\{\bbk_i\}_{i=1}^K$ that satisfy
\begin{align}
\sum_{i=1}^K \bbk_i =\bbs
\end{align}
\end{Theo}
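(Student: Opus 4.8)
The plan is to prove the two inclusions separately. Achievability is the easy direction: I would specialize the general degraded region of Theorem~\ref{degraded_multiuser_wiretap}, in the equivalent form~\eqref{equivalent_degraded_exp}, to jointly Gaussian auxiliaries. The converse is where the work lies, and I would carry the Fisher-information argument of Section~\ref{sec:tgi_fisher} over to the matrix setting, relying on the conditional matrix versions of the de Bruin identity, the Fisher information inequality, and the Cramer-Rao bound that are deferred to Section~\ref{secr:proof_of_thm_conditional_vec_generalization} (Lemmas~\ref{lemma_cond_matrix_fii}, \ref{lemma_conditional_crb_vector} and \ref{gradient_fisher_conditional}).

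\textbf{Achievability.} I would set $\bbx=\sum_{i=1}^{K}\bv_i$ with the $\bv_i$ mutually independent zero-mean Gaussian vectors of covariance $E[\bv_i\bv_i^\top]=\bbk_i$, so that $E[\bbx\bbx^\top]=\sum_i\bbk_i=\bbs$ meets the covariance constraint, and then take $\bu_k=\sum_{i=k}^{K}\bv_i$. This choice has $\bu_1=\bbx$, $\bu_{K+1}=\phi$, and respects the Markov structure~\eqref{joint_distribution}. Conditioned on $\bu_{k+1}$ the signal carrying $\bu_k$ is just $\bv_k$, seen against the independent interference-plus-noise $\sum_{i=1}^{k-1}\bv_i+\bbn_k$ at user $k$ and $\sum_{i=1}^{k-1}\bv_i+\bbn_Z$ at the eavesdropper; evaluating the two Gaussian conditional mutual informations reproduces the log-determinant ratios in~\eqref{sec_rates_degraded_general} term by term. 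This is a routine determinant computation.

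\textbf{Converse.} Invoking Theorem~\ref{degraded_multiuser_wiretap}, it suffices to bound $I(\bu_k;\bby_k|\bu_{k+1})-I(\bu_k;\bbz|\bu_{k+1})$ over all admissible distributions. I would proceed by a comparison argument modeled on the two-user SISO converse. Writing each rate as a difference of conditional differential entropies and applying the conditional matrix de Bruin identity (Lemma~\ref{gradient_fisher_conditional}), I would express the entropy gap between a less-noisy level $\bbsigma_k$ and the eavesdropper level $\bbsigma_Z$ as a line integral of the conditional Fisher information matrix $\bbj(\bbx+\bbn(t)|\bu_{k+1})$ of an interpolating observation $\bbx+\bbn(t)$, with the covariance of $\bbn(t)$ tracing the segment between $\bbsigma_k$ and $\bbsigma_Z$. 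Fixing the rates of the coarser users $k+1,\ldots,K$ then plays the role of the scalar constraint~\eqref{epi_is_not_sufficient_constraint}: it pins down the matrix $\sum_{i=1}^{k}\bbk_i$, and feeding this through Corollary~\ref{corollary_cond_fii} (matrix Fisher information inequality) together with Lemma~\ref{lemma_conditional_crb} (matrix Cramer-Rao bound) should force a lower bound $\bbj(\bbx+\bbn(t)|\bu_{k+1})\succeq(\sum_{i=1}^{k}\bbk_i+\bbsigma(t))^{-1}$ over the relevant range, exactly as~\eqref{cond_fii_implies}--\eqref{epi_was_never_sufficient} did in the scalar case. Substituting this matrix lower bound back into the entropy integral for user $k$, and iterating the comparison down the chain $\bbsigma_Z\succeq\bbsigma_K\succeq\cdots\succeq\bbsigma_1$, yields the bounds in~\eqref{sec_rates_degraded_general}.

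\textbf{Main obstacle.} The delicate point, and the step that genuinely has to be upgraded rather than transcribed from Section~\ref{sec:tgi_fisher}, is propagating the Fisher-information constraint across noise levels when the covariance matrices do not commute. In the scalar proof this rested on Proposition~\ref{Prop_unique_zero}, whose single-crossing monotonicity made the implication ``constraint at one noise level $\Rightarrow$ inequality at all less noisy levels'' immediate; there is no single SNR axis to integrate along here, so I must replace that scalar monotonicity by a positive-semidefinite statement and show the constraint controls $\bbj$ in the $\succeq$ order at every smaller covariance. I expect the other subtle ingredient to be the \emph{existence} of the splitting matrices $\{\bbk_i\}$ with $\sum_i\bbk_i=\bbs$: unlike the one-dimensional parameter $\alpha\in[0,1]$ of the SISO case, these are matrix-valued and must nest consistently across the chain, so their construction should come from a continuity/intermediate-value argument on the entropy functionals rather than from direct matching of a scalar quantity. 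Making the induction on $K$ clean---so that the constraint inherited from users $k+1,\ldots,K$ is precisely a Fisher-matrix lower bound of the stated form---is the part I expect to require the most care.
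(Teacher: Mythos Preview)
Your plan is essentially the paper's own route: Gaussian auxiliaries for achievability, and for the converse the conditional matrix de Bruin identity plus Fisher-information inequalities, with an intermediate-value/continuity argument (the paper's Lemma~\ref{lemma_heart_of_the_contr}) to manufacture the matrix $\bbk^*$ replacing the scalar $\alpha P$, then an iteration down the chain $\bbsigma_Z\succeq\bbsigma_K\succeq\cdots\succeq\bbsigma_1$. Two ingredients you should make explicit when you write it out: (i) at the top of the chain the first bracket $I(\bbx;\bby_K)-I(\bbx;\bbz)$ is bounded via the worst-additive-noise Lemma~\ref{worst_additive}, not via Fisher information; and (ii) the step that lets you pass the Fisher lower bound from conditioning on $U_{k+1}$ to conditioning on $U_k$ is precisely ``conditioning increases Fisher information along a Markov chain'' (Lemma~\ref{lemma_conditioning_increases_Fisher}), which together with Lemma~\ref{Lemma_change_Fisher} gives the nested ordering $\tilde\bbk_1\preceq\cdots\preceq\tilde\bbk_K\preceq\bbs$ you anticipated.
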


The achievability of these rates follows from
Theorem~\ref{degraded_multiuser_wiretap} by selecting
$(U_K,\ldots,U_2,\bbx)$ to be jointly Gaussian. Thus, to prove the
theorem, we only need to provide a converse. Since the converse
proof is rather long and involves technical digressions, we first
present the converse proof for $K=2$. In this process, we will
develop all necessary tools which we will use to provide the
converse proof for arbitrary $K$ in
Section~\ref{sec:proof_main_result_arbitrary_K}.

The secrecy capacity region of the two-user degraded MIMO channel,
from (\ref{sec_rates_degraded_general}), is the union of the rate
pairs $(R_1,R_2)$ satisfying
\begin{align}
R_1&\leq \frac{1}{2} \log
\frac{\left|\bbk_1+\bbsigma_1\right|}{\left|\bbsigma_1\right|}
-\frac{1}{2} \log
\frac{\left|\bbk_1+\bbsigma_Z\right|}{\left|\bbsigma_Z\right|} \label{sec_rates_degraded_two_user_1}\\
R_2&\leq \frac{1}{2} \log
\frac{\left|\bbs+\bbsigma_2\right|}{\left|\bbk_1+\bbsigma_2\right|}
-\frac{1}{2} \log
\frac{\left|\bbs+\bbsigma_Z\right|}{\left|\bbk_1+\bbsigma_Z\right|}
\label{sec_rates_degraded_two_user_2}
\end{align}
where the union is over all selections of $\bbk_1$ that satisfies
$\bzero \preceq \bbk_1 \preceq \bbs $. We note that these rates
are achievable by choosing $\bbx=\bbu_2+\bbv$ in
Theorem~\ref{degraded_multiuser_wiretap}, where $\bbu_2$ and
$\bbv$ are independent Gaussian random vectors with covariance
matrices $\bbs-\bbk_1$ and $\bbk_1$, respectively. Next, we prove
that the union of the rate pairs in
(\ref{sec_rates_degraded_two_user_1}) and
(\ref{sec_rates_degraded_two_user_2}) constitute the secrecy
capacity region of the two-user degraded MIMO channel.

\subsection{Proof of Theorem~\ref{main_result} for $K=2$}

To prove that (\ref{sec_rates_degraded_two_user_1}) and
(\ref{sec_rates_degraded_two_user_2}) give the secrecy capacity
region, we need the results of some intermediate optimization
problems. The first one is the so-called worst additive noise
lemma~\cite{Ihara,Diggavi_Cover}.
\begin{Lem}
\label{worst_additive} Let $\bbn$ be a Gaussian random vector with
covariance matrix $\bbsigma$, and $\bbk_X$ be a positive
semi-definite matrix. Consider the following optimization problem,
\begin{align}
\min_{p(\bx)}&\quad I(\bbn;\bbn+\bbx) \nonumber\\
{\rm s.t.}&\quad  {\rm Cov}(\bbx)=\bbk_X
\end{align}
where $\bbx$ and $\bbn$ are independent. A Gaussian $\bbx$ is the
minimizer of this optimization problem.
\end{Lem}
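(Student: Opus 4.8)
The plan is to show that among all input distributions $p(\bx)$ with fixed covariance $\mathrm{Cov}(\bbx)=\bbk_X$ and $\bbx$ independent of the Gaussian noise $\bbn$ (covariance $\bbsigma$), the Gaussian input minimizes $I(\bbn;\bbn+\bbx)$. Let me think about this carefully.

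Let me reconsider what's being minimized and set up the right framework.
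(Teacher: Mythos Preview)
Your proposal contains no proof at all --- you have only restated the problem and announced an intention to think further. There is no argument, no inequality, no computation; nothing that could be evaluated for correctness.

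For comparison, the paper does not prove this lemma either: it is stated as the well-known \emph{worst additive noise lemma} and attributed to Ihara and to Diggavi--Cover. If you intend to supply a proof, a standard route is to write
\[
I(\bbn;\bbn+\bbx) = h(\bbn+\bbx) - h(\bbx),
\]
use the saddle-point/max-entropy characterization (or a divergence expansion) to compare an arbitrary $\bbx$ with covariance $\bbk_X$ against the Gaussian $\bbx^G$ with the same covariance, and show $I(\bbn;\bbn+\bbx) \geq I(\bbn;\bbn+\bbx^G)$. One clean argument writes $I(\bbn;\bbn+\bbx) - I(\bbn;\bbn+\bbx^G)$ as a difference of relative entropies and shows it is nonnegative; see Diggavi--Cover for details. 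As it stands, though, you have not begun any such argument.
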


The second optimization problem that will be useful in the
upcoming proof is the conditional version of the following
theorem.

\begin{Theo}
\label{thm_vector_generalization} Let $\bbx,\bbn_1,\bbn_2,\bbn_Z$
be independent random vectors, where $\bbn_1,\bbn_2,\bbn_Z$ are
zero-mean Gaussian random vectors with covariance matrices
$\bzero\prec\bbsigma_1\preceq\bbsigma_2\preceq\bbsigma_Z$,
respectively. Moreover, assume that the second moment of $\bbx$ is
constrained as
\begin{align}
E\left[\bbx \bbx^\top\right]\preceq\bbs
\end{align}
where $\bbs$ is a positive definite matrix. Then, for any
admissible $\bbx$, there exists a matrix $\bbk^*$ such that
$\bzero \preceq \bbk^* \preceq \bbs$, and
\begin{align}
h(\bbx+\bbn_Z)-h(\bbx+\bbn_2)&=\frac{1}{2}\log\frac{\left|\bbk^*
+\bbsigma_Z\right|}{\left|\bbk^*+\bbsigma_2\right|}\\
h(\bbx+\bbn_Z)-h(\bbx+\bbn_1)&\geq\frac{1}{2}\log\frac{\left|\bbk^*
+\bbsigma_Z\right|}{\left|\bbk^*+\bbsigma_1\right|}
\end{align}
\end{Theo}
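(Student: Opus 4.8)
The plan is to establish this as the matrix (extremal-entropy) counterpart of the scalar Fisher-information converse of Section~\ref{sec:tgi_fisher}, built on the gradient form of the de Bruin identity (Lemma~\ref{gradient_fisher_conditional}), the matrix Fisher information inequality (Lemma~\ref{lemma_cond_matrix_fii} and its harmonic-sum form, the vector version of Corollary~\ref{corollary_cond_fii}), and the matrix Cramer-Rao inequality (Lemma~\ref{lemma_conditional_crb_vector}). The central device is to convert entropy differences into line integrals of Fisher information matrices: writing $\bbn_Z=\bbn_2+\bbn'$ with $\bbn'$ an independent Gaussian of covariance $\bbsigma_Z-\bbsigma_2\succeq\bzero$ and interpolating the noise covariance along the segment $\bbsigma(s)=\bbsigma_2+s(\bbsigma_Z-\bbsigma_2)$, $s\in[0,1]$, the gradient de Bruin identity gives
\begin{align}
h(\bbx+\bbn_Z)-h(\bbx+\bbn_2)=\frac{1}{2}\int_0^1 {\rm tr}\big[(\bbsigma_Z-\bbsigma_2)\bbj(\bbx+\bbn(s))\big]\,ds \nonumber
\end{align}
with $\bbn(s)\sim\mathcal{N}(\bzero,\bbsigma(s))$, and likewise for the $(\bbsigma_1,\bbsigma_Z)$ pair.

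First I would construct $\bbk^*$ and check admissibility. For a Gaussian noise $\bbn$ of covariance $\bbsigma$, the Fisher information matrix of $\bbx+\bbn$ is sandwiched as $(\bbs+\bbsigma)^{-1}\preceq(\,{\rm Cov}(\bbx)+\bbsigma)^{-1}\preceq\bbj(\bbx+\bbn)\preceq\bbsigma^{-1}$, where the lower bound is the matrix Cramer-Rao inequality (Lemma~\ref{lemma_conditional_crb_vector}) together with ${\rm Cov}(\bbx)\preceq\bbs$, and the upper bound is the Fisher information inequality (Lemma~\ref{lemma_cond_matrix_fii}) applied to $\bbx+\bbn$. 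Consequently $\bbj(\bbx+\bbn)^{-1}-\bbsigma$ lies between $\bzero$ and $\bbs$ for every such $\bbsigma$. Evaluating this at $\bbsigma_2$ and at $\bbsigma_Z$ produces two admissible matrices; using the upper bound on $\bbj$ in the integral above yields $h(\bbx+\bbn_Z)-h(\bbx+\bbn_2)\leq g(\bbk^*_2)$ with $\bbk^*_2=\bbj(\bbx+\bbn_2)^{-1}-\bbsigma_2$, and using the lower bound yields $h(\bbx+\bbn_Z)-h(\bbx+\bbn_2)\geq g(\bbk^*_Z)$ with $\bbk^*_Z=\bbj(\bbx+\bbn_Z)^{-1}-\bbsigma_Z$, where $g(\bbk)=\frac{1}{2}\log\frac{|\bbk+\bbsigma_Z|}{|\bbk+\bbsigma_2|}$. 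Since $g$ is continuous and, because $\bbsigma_2\preceq\bbsigma_Z$, monotonically decreasing in the positive semi-definite order, interpolating linearly between $\bbk^*_Z$ and $\bbk^*_2$---a path that stays in the convex set $\{\bbk:\bzero\preceq\bbk\preceq\bbs\}$---and invoking the intermediate value theorem gives a matrix $\bbk^*$ on this segment with $g(\bbk^*)=h(\bbx+\bbn_Z)-h(\bbx+\bbn_2)$, which is the first equation.

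It then remains to prove the second inequality for this same $\bbk^*$. Mirroring the scalar argument around (\ref{epi_was_never_sufficient}), I would split the noise at a low covariance level $\bbsigma^*$ with $\bzero\prec\bbsigma^*\preceq\bbsigma_2$, writing $\bbx+\bbn(s)=\bbx+\bbn^*+\bbn''(s)$ with $\bbn^*\sim\mathcal{N}(\bzero,\bbsigma^*)$, and apply the harmonic-sum form of the matrix Fisher information inequality to bound $\bbj(\bbx+\bbn(s))$ from above by $(\bbj(\bbx+\bbn^*)^{-1}+\bbsigma(s)-\bbsigma^*)^{-1}$. Inserting this into the integral representation of the now fixed difference $h(\bbx+\bbn_Z)-h(\bbx+\bbn_2)$ and comparing with the exact value $g(\bbk^*)$ forces the matrix lower bound $\bbj(\bbx+\bbn^*)\succeq(\bbk^*+\bbsigma^*)^{-1}$ to hold for every $\bbsigma^*\preceq\bbsigma_2$---the matrix analog of $J(X+\sqrt{t^*}N''|U_2)\geq 1/(\alpha P+t^*)$. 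Feeding this pointwise Fisher lower bound into the de Bruin representation of $h(\bbx+\bbn_Z)-h(\bbx+\bbn_1)$, split as the already-fixed $(\bbsigma_2,\bbsigma_Z)$ part plus the integral of $\bbj(\bbx+\bbn(\cdot))$ over the segment from $\bbsigma_1$ to $\bbsigma_2$, bounds that remaining integral below by $\frac{1}{2}\log\frac{|\bbk^*+\bbsigma_2|}{|\bbk^*+\bbsigma_1|}$, and assembling the two pieces gives $h(\bbx+\bbn_Z)-h(\bbx+\bbn_1)\geq\frac{1}{2}\log\frac{|\bbk^*+\bbsigma_Z|}{|\bbk^*+\bbsigma_1|}$, as required.

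The main obstacle I anticipate is precisely the matrix under-determination that is absent in the scalar case: a single scalar entropy equation does not pin down the matrix $\bbk^*$, so the delicate point is to guarantee that the matrix delivered by the intermediate value theorem is the very matrix for which the pointwise Fisher information lower bound $\bbj(\bbx+\bbn^*)\succeq(\bbk^*+\bbsigma^*)^{-1}$ can be extracted from the equality constraint. Making this simultaneous match rigorous---showing that the integral equality together with the matrix Fisher information inequality forces the Fisher information matrix to dominate its Gaussian value $(\bbk^*+\bbsigma^*)^{-1}$ pointwise over the whole range $\bbsigma^*\preceq\bbsigma_2$, rather than merely in the trace-integrated sense---is the technical heart of the proof, and the step where the positive-definite ordering of the noise covariances and the monotonicity of $g$ must be used most carefully.
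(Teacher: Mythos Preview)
Your construction of $\bbk^*$ is essentially the paper's: the de~Bruin integral, the harmonic-sum Fisher bound anchored at $\bbsigma_2$ to get $\alpha\leq g(\bbk^*_2)$ with $\bbk^*_2=\bbj(\bbx+\bbn_2)^{-1}-\bbsigma_2$, the analogous bound anchored at $\bbsigma_Z$ to get $\alpha\geq g(\bbk^*_Z)$ with $\bbk^*_Z=\bbj(\bbx+\bbn_Z)^{-1}-\bbsigma_Z$, and then the intermediate value theorem along the segment $t\mapsto (1-t)\bbk^*_2+t\bbk^*_Z$. This is exactly Lemma~\ref{lemma_heart_of_the_contr} in the paper, and $\bbk^*_Z\succeq\bbk^*_2$ holds by Lemma~\ref{Lemma_change_Fisher}.

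Where you go astray is in the second half. You propose to \emph{extract} the matrix inequality $\bbj(\bbx+\bbn^*)\succeq(\bbk^*+\bbsigma^*)^{-1}$ from the single scalar equation $g(\bbk^*)=\alpha$, and you rightly flag that this looks underdetermined. The paper does not attempt this at all. The point you are missing is that the inequality comes \emph{for free} from the construction of $\bbk^*$: since $\bbk^*$ sits on the segment from $\bbk^*_2$ to $\bbk^*_Z$ with $\bbk^*_Z\succeq\bbk^*_2$, you immediately have
\[
\bbk^*\ \succeq\ \bbk^*_2\ =\ \bbj(\bbx+\bbn_2)^{-1}-\bbsigma_2 .
\]
Now apply Lemma~\ref{Lemma_change_Fisher} once more, this time between an arbitrary $\bbsigma^*\preceq\bbsigma_2$ and $\bbsigma_2$, to get $\bbj(\bbx+\bbn_2)^{-1}-\bbsigma_2\succeq \bbj(\bbx+\bbn^*)^{-1}-\bbsigma^*$. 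Chaining gives $\bbk^*\succeq\bbj(\bbx+\bbn^*)^{-1}-\bbsigma^*$, i.e.\ $\bbj(\bbx+\bbn^*)\succeq(\bbk^*+\bbsigma^*)^{-1}$, which is exactly the pointwise Fisher lower bound you need, valid for every $\bbsigma^*$ with $\bzero\prec\bbsigma^*\preceq\bbsigma_2$. From there your closing paragraph (split the $\bbsigma_1$--$\bbsigma_Z$ integral at $\bbsigma_2$, use the fixed value on one piece and this Fisher bound on the other) is correct and matches the paper.

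In short, your ``main obstacle'' dissolves once you notice that the interpolating segment already places $\bbk^*$ above $\bbj(\bbx+\bbn_2)^{-1}-\bbsigma_2$ in the positive semi-definite order; no inversion of a scalar equation into a matrix inequality is required.
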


The conditional version of Theorem~\ref{thm_vector_generalization}
is given as follows.

\begin{Theo}
\label{thm_conditional_vector_generalization} Let $\bbu,\bbx$ be
arbitrarily correlated random vectors which are independent of
$\bbn_1,\bbn_2,\bbn_Z$, where $\bbn_1,\bbn_2,\bbn_Z$ are zero-mean
Gaussian random vectors with covariance matrices
$\bzero\prec\bbsigma_1\preceq\bbsigma_2\preceq\bbsigma_Z$,
respectively. Moreover, assume that the second moment of $\bbx$ is
constrained as
\begin{align}
E\left[\bbx \bbx^\top\right]\preceq \bbs
\end{align}
where $\bbs $ is a positive definite matrix. Then, for any
admissible $(\bbu,\bbx)$ pair, there exists a matrix $\bbk^*$ such
that $\bzero \preceq \bbk^* \preceq \bbs$, and
\begin{align}
h(\bbx+\bbn_Z|\bbu)-h(\bbx+\bbn_2|\bbu)&=\frac{1}{2}\log\frac{\left|\bbk^*
+\bbsigma_Z\right|}{\left|\bbk^*+\bbsigma_2\right|}\\
h(\bbx+\bbn_Z|\bbu)-h(\bbx+\bbn_1|\bbu)&\geq\frac{1}{2}\log\frac{\left|\bbk^*
+\bbsigma_Z\right|}{\left|\bbk^*+\bbsigma_1\right|}
\end{align}
\end{Theo}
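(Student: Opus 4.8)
The plan is to run the Fisher-information converse of Section~\ref{sec:tgi_fisher} in matrix form, using the conditional vector generalizations of the de Bruin identity (Lemma~\ref{gradient_fisher_conditional}), of the Fisher information inequality (Lemma~\ref{lemma_cond_matrix_fii}), and of the Cramer--Rao bound (Lemma~\ref{lemma_conditional_crb_vector}). For $\bzero\prec\bbsigma_a\preceq\bbsigma_b$, let $\bbn(s)$ be a Gaussian vector independent of $(\bbu,\bbx)$ with covariance $\bbsigma(s)=\bbsigma_a+s(\bbsigma_b-\bbsigma_a)$, $s\in[0,1]$. The conditional vector de Bruin identity together with the chain rule gives the integral representation
\begin{align}
h(\bbx+\bbn(1)|\bbu)-h(\bbx+\bbn(0)|\bbu)=\frac{1}{2}\int_0^1 {\rm tr}\Big[(\bbsigma_b-\bbsigma_a)\,\bbj\big(\bbx+\bbn(s)\,\big|\,\bbu\big)\Big]\,ds,
\end{align}
which, specialized to a conditionally Gaussian $\bbx$ of covariance $\bbk$ (where $\bbj(\bbx+\bbn(s)|\bbu)=(\bbk+\bbsigma(s))^{-1}$), evaluates to $\frac12\log\frac{|\bbk+\bbsigma_b|}{|\bbk+\bbsigma_a|}$. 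Hence both claims reduce to sandwiching $\bbj(\bbx+\bbn(s)|\bbu)$ between matrices of the form $(\bbk+\bbsigma(s))^{-1}$, and the whole problem is to produce a \emph{single} $\bbk^*$ that yields the equality for the $(\bbsigma_2,\bbsigma_Z)$ pair and the inequality for the $(\bbsigma_1,\bbsigma_Z)$ pair.

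First I would fix the ``Gaussian-equivalent at level $\bbsigma_2$'' matrix $\bbk_0:=\bbj(\bbx+\bbn_2|\bbu)^{-1}-\bbsigma_2$, which is well defined since $\bbsigma_2\succ\bzero$. Applying the reciprocal form of the matrix Fisher information inequality (Lemma~\ref{lemma_cond_matrix_fii}) to the splitting $\bbx+\bbn_2=(\bbx+\bbn(s))+\bbn^\prime$, where $\bbn^\prime$ is an independent Gaussian of covariance $\bbsigma_2-\bbsigma(s)\succeq\bzero$ valid whenever $\bbsigma(s)\preceq\bbsigma_2$, gives $\bbj(\bbx+\bbn(s)|\bbu)^{-1}\preceq\bbj(\bbx+\bbn_2|\bbu)^{-1}-(\bbsigma_2-\bbsigma(s))=\bbk_0+\bbsigma(s)$, i.e. the lower bound
\begin{align}
\bbj(\bbx+\bbn(s)|\bbu)\succeq(\bbk_0+\bbsigma(s))^{-1}, \qquad \bbsigma(s)\preceq\bbsigma_2.
\end{align}
Running the same inequality in the reverse direction for $\bbsigma(s)\succeq\bbsigma_2$ produces the matching upper bound, which after integration yields $h(\bbx+\bbn_Z|\bbu)-h(\bbx+\bbn_2|\bbu)\le\frac12\log\frac{|\bbk_0+\bbsigma_Z|}{|\bbk_0+\bbsigma_2|}$. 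The reciprocal Fisher information inequality also gives $\bbj(\bbx+\bbn_2|\bbu)\preceq\bbsigma_2^{-1}$, while the conditional Cramer--Rao bound (Lemma~\ref{lemma_conditional_crb_vector}) with $E[{\rm Cov}(\bbx|\bbu)]\preceq\bbs$ gives $\bbj(\bbx+\bbn_2|\bbu)\succeq(\bbs+\bbsigma_2)^{-1}$; together these establish $\bzero\preceq\bbk_0\preceq\bbs$.

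Writing $\phi(\bbk):=\frac12\log\frac{|\bbk+\bbsigma_Z|}{|\bbk+\bbsigma_2|}$, whose gradient $\frac12[(\bbk+\bbsigma_Z)^{-1}-(\bbk+\bbsigma_2)^{-1}]\preceq\bzero$ makes it monotonically decreasing in $\bbk$, I would next note from the previous paragraph that $h(\bbx+\bbn_Z|\bbu)-h(\bbx+\bbn_2|\bbu)\le\phi(\bbk_0)$, while the Cramer--Rao lower bound on $\bbj$ integrated over $[\bbsigma_2,\bbsigma_Z]$ gives $h(\bbx+\bbn_Z|\bbu)-h(\bbx+\bbn_2|\bbu)\ge\phi(\bbs)$. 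Along the segment $\bbk(\lambda)=\bbk_0+\lambda(\bbs-\bbk_0)$, $\lambda\in[0,1]$, the continuous scalar map $\lambda\mapsto\phi(\bbk(\lambda))$ decreases from $\phi(\bbk_0)$ to $\phi(\bbs)$, so by the intermediate value theorem there is $\lambda^*$ with $\phi(\bbk(\lambda^*))$ equal to $h(\bbx+\bbn_Z|\bbu)-h(\bbx+\bbn_2|\bbu)$; set $\bbk^*:=\bbk(\lambda^*)$, so that $\bbk_0\preceq\bbk^*\preceq\bbs$ and the first (equality) assertion holds. Because $\bbk^*\succeq\bbk_0$, the lower bound of the previous paragraph upgrades on $[\bbsigma_1,\bbsigma_2]$ to $\bbj(\bbx+\bbn(s)|\bbu)\succeq(\bbk_0+\bbsigma(s))^{-1}\succeq(\bbk^*+\bbsigma(s))^{-1}$, and integrating the representation over $[\bbsigma_1,\bbsigma_2]$ gives $h(\bbx+\bbn_2|\bbu)-h(\bbx+\bbn_1|\bbu)\ge\frac12\log\frac{|\bbk^*+\bbsigma_2|}{|\bbk^*+\bbsigma_1|}$; adding the established $(\bbsigma_2,\bbsigma_Z)$ equality produces exactly the desired inequality $h(\bbx+\bbn_Z|\bbu)-h(\bbx+\bbn_1|\bbu)\ge\frac12\log\frac{|\bbk^*+\bbsigma_Z|}{|\bbk^*+\bbsigma_1|}$.

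The main obstacle, and the point on which the whole argument turns, is that a single scalar equality cannot by itself pin down a matrix $\bbk^*$, so one must reconcile the scalar equality for the eavesdropper pair with a full matrix Fisher-information lower bound on the less noisy channels while keeping the \emph{same} $\bbk^*$. The resolution is the monotonicity observation that enlarging $\bbk^*$ (which is forced, since $\phi$ is decreasing and one must drive $\phi(\bbk_0)$ down to the true entropy gap) only weakens the bound $(\bbk^*+\bbsigma(s))^{-1}$, so the stronger bound anchored at $\bbk_0$ survives; the feasibility of the intermediate value step rests precisely on the Cramer--Rao bound placing the entropy gap in $[\phi(\bbs),\phi(\bbk_0)]$ and on choosing the crossing on the segment from $\bbk_0$ toward $\bbs$ rather than toward $\bzero$.
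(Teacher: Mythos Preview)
Your proposal is correct and follows essentially the same Fisher-information/de Bruin route as the paper's proof of Theorem~\ref{thm_vector_generalization} (lifted to the conditional setting via Lemmas~\ref{lemma_conditional_crb_vector}, \ref{lemma_cond_matrix_fii}, \ref{gradient_fisher_conditional}). The one minor variation is the segment on which you run the intermediate value theorem: the paper places $\bbk^*$ on the line between $\bbj(\bbx+\bbn_2|\bbu)^{-1}-\bbsigma_2$ and $\bbj(\bbx+\bbn_Z|\bbu)^{-1}-\bbsigma_Z$ (using the Fisher bound at level $\bbsigma_Z$ for the lower endpoint of $\phi$), whereas you go from $\bbk_0$ to $\bbs$ (using Cramer--Rao for the lower endpoint $\phi(\bbs)$); both segments start at $\bbk_0$, so your crucial lower bound $\bbk^*\succeq\bbk_0$---which drives the inequality for $\bbsigma_1$---is the same as the paper's, and $\bbk^*\preceq\bbs$ comes out directly rather than via $\bbk^*\preceq\bbj(\bbx+\bbn_Z|\bbu)^{-1}-\bbsigma_Z\preceq\bbs$.
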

Theorem~\ref{thm_vector_generalization} serves as a step towards
the proof of Theorem~\ref{thm_conditional_vector_generalization}.
Proofs of these two theorems are deferred to
Sections~\ref{sec:proof_of_thm_vector_generalization}
and~\ref{secr:proof_of_thm_conditional_vec_generalization}.

We are now ready to show that the secrecy capacity region of the
two-user degraded MIMO channel is given by
(\ref{sec_rates_degraded_two_user_1})-(\ref{sec_rates_degraded_two_user_2}).
We first consider $R_2$, and bound it using
Theorem~\ref{degraded_multiuser_wiretap} as follows
\begin{align}
R_2 &\leq I(U_2;\bby_2)-I(U_2;\bbz)\\
&=[I(\bbx;\bby_2)-I(\bbx;\bbz)]-[I(\bbx;\bby_2|U_2)-I(\bbx;\bbz|U_2)]
\label{secrecy_rate_degraded_user_2}
\end{align}
where the equality is obtained by using the chain rule and the
Markov chain $U_2\rightarrow \bbx \rightarrow (\bby_2,\bbz)$. We
now consider the expression in the first bracket of
(\ref{secrecy_rate_degraded_user_2})
\begin{align}
I(\bbx;\bby_2)-I(\bbx;\bbz)&=h(\bby_2)-h(\bby_2|\bbx)-h(\bbz)+h(\bbz|\bbx)\\
&=h(\bby_2)-h(\bbz)-\frac{1}{2}\log
\frac{|\bbsigma_2|}{|\bbsigma_Z|} \label{worst_case_will_imply}
\end{align}
where the second equality follows from the facts that
$h(\bby_2|\bbx)=h(\bbn_2)$ and $h(\bbz|\bbx)=h(\bbn_Z)$. We now
consider the difference of differential entropies in
(\ref{worst_case_will_imply}). To this end, consider the Gaussian
random vector $\tilde{\bbn}_2$ with covariance matrix
$\bbsigma_Z-\bbsigma_2,$ which is chosen to be independent of
$\bbx,\bbn_2$. Using the Markov chain in
(\ref{degraded_Markov_chain}), we get
\begin{align}
h(\bby_2)-h(\bbz)&= h(\bby_2)-h(\bby_2+\tilde{\bbn}_2) \\
&=-I(\tilde{\bbn}_2;\bby_2+\tilde{\bbn}_2) \\
&\leq \max_{\bzero \preceq \bbk \preceq \bbs} \frac{1}{2}
\log\frac{|\bbk+\bbsigma_2|}{|\bbk+\bbsigma_Z|}
\label{worst_additive_implies} \\
&= \frac{1}{2} \log \frac{|\bbs+\bbsigma_2|}{|\bbs+\bbsigma_Z|}
\label{monotonicity_shamai}
\end{align}
where (\ref{worst_additive_implies}) follows from
Lemma~\ref{worst_additive} and (\ref{monotonicity_shamai}) is a
consequence of the fact that
\begin{align}
\frac{|\bbb|}{|\bba+\bbb|}\leq
\frac{|\bbb+\bbdelta|}{|\bba+\bbb+\bbdelta|}
\label{shamao_inequality_without_proof}
\end{align}
when $\bba,\bbb,\bbdelta \succeq 0,$ and $\bba+\bbb\succ
0$~\cite{Shamai_MIMO}. Plugging (\ref{monotonicity_shamai}) into
(\ref{worst_case_will_imply}) yields
\begin{align}
I(\bbx;\bby_2)-I(\bbx;\bbz)\leq \frac{1}{2} \log
\frac{|\bbs+\bbsigma_2|}{|\bbsigma_2|}-\frac{1}{2} \log
\frac{|\bbs+\bbsigma_Z|}{|\bbsigma_Z|}
\label{worst_case_will_imply_1}
\end{align}
We now consider the expression in the second bracket of
(\ref{secrecy_rate_degraded_user_2}). For that purpose, we use
Theorem~\ref{thm_conditional_vector_generalization}. According to
Theorem~\ref{thm_conditional_vector_generalization}, for any
admissible pair $(U_2,\bbx)$, there exists a $\bbk^*$ such that
\begin{align}
h(\bbx+\bbn_Z|U_2)-h(\bbx+\bbn_2|U_2)=\frac{1}{2}\log\frac{|\bbk^*+\bbsigma_Z|}{|\bbk^*+\bbsigma_2|}
\label{thm4_implies_0}
\end{align}
which is equivalent to
\begin{align}
I(\bbx;\bbz|U_2)-I(\bbx;\bby_2|U_2)=\frac{1}{2}\log\frac{|\bbk^*+\bbsigma_Z|}{|\bbsigma_Z|}
-\frac{1}{2}\log\frac{|\bbk^*+\bbsigma_2|}{|\bbsigma_2|}
\label{thm4_implies}
\end{align}
Thus, using (\ref{worst_case_will_imply_1}) and
(\ref{thm4_implies}) in (\ref{secrecy_rate_degraded_user_2}), we
get
\begin{align}
R_2 \leq \frac{1}{2} \log
\frac{|\bbs+\bbsigma_2|}{|\bbk^*+\bbsigma_2
|}-\frac{|\bbs+\bbsigma_Z|}{|\bbk^*+\bbsigma_Z|}
\end{align}
which is the desired bound on $R_2$ given in
(\ref{sec_rates_degraded_two_user_2}). We now obtain the desired
bound on $R_1$ given in (\ref{sec_rates_degraded_two_user_1}). To
this end, we first bound $R_1$ using
Theorem~\ref{degraded_multiuser_wiretap}
\begin{align}
R_1 &\leq I(\bbx;\bby_1|U_2)-I(\bbx;\bbz|U_2)\\
&= h(\bby_1|U_2)-h(\bby_1|U_2,\bbx)-h(\bbz|U_2)+h(\bbz|U_2,\bbx)\\
&=h(\bby_1|U_2)-h(\bbz|U_2)-\frac{1}{2}\log\frac{|\bbsigma_1|}{|\bbsigma_Z|}
\label{secrecy_rate_degraded_user_1}
\end{align}
where the second equality follows from the facts that
$h(\bby_1|U_2,\bbx)=h(\bbn_1)$ and $h(\bbz|U_2,\bbx)=h(\bbn_Z)$.
To bound the difference of conditional differential entropies in
(\ref{secrecy_rate_degraded_user_1}), we use
Theorem~\ref{thm_conditional_vector_generalization}.
Theorem~\ref{thm_conditional_vector_generalization} states that
for any admissible pair $(U_2,\bbx)$, there exists a matrix
$\bbk^*$ such that it satisfies (\ref{thm4_implies_0}) and also
\begin{align}
h(\bbz|U_2)-h(\bby_1|U_2) \geq \frac{1}{2} \log
\frac{|\bbk^*+\bbsigma_Z|}{|\bbk^*+\bbsigma_1|}
\label{thm4_implies_1}
\end{align}
Thus, using (\ref{thm4_implies_1}) in
(\ref{secrecy_rate_degraded_user_1}), we get
\begin{align}
R_1 \leq \frac{1}{2}\log
\frac{|\bbk^*+\bbsigma_1|}{|\bbsigma_1|}-\frac{1}{2}
\log\frac{|\bbk^*+\bbsigma_Z|}{|\bbsigma_Z|}
\end{align}
which is the desired bound on $R_1$ given in
(\ref{sec_rates_degraded_two_user_1}), completing the converse
proof for $K=2$.

As we have seen, the main ingredient in the above proof was
Theorem~\ref{thm_conditional_vector_generalization}. Therefore, to
complete the converse proof for the degraded channel for $K=2$,
from this point on, we will focus on the proof of
Theorem~\ref{thm_conditional_vector_generalization}. We will give
the proof of Theorem~\ref{thm_conditional_vector_generalization}
in Section~\ref{secr:proof_of_thm_conditional_vec_generalization}.
In preparation to that, we will give the proof of
Theorem~\ref{thm_vector_generalization}, which is the
unconditional version of
Theorem~\ref{thm_conditional_vector_generalization}, in
Section~\ref{sec:proof_of_thm_vector_generalization}. The proof of
Theorem~\ref{thm_vector_generalization} involves the use of
properties of the Fisher information, and its connection to the
differential entropy, which are provided next.

\subsection{The Fisher Information Matrix}

We start with the definition~\cite{oliver_johnson}.
\begin{Def}
Let $\bbu$ be a length-$n$ random vector with differentiable
density $f_{U}(\bu)$. The Fisher information matrix of $\bbu$,
$\bbj (\bbu)$, is defined as
\begin{align}
\bbj(\bbu)=E\left[\brho(\bbu)\brho(\bbu)^\top\right]
\end{align}
where $\brho(\bu)$ is the score function which is given by
\begin{align}
\brho(\bu)=\nabla\log f_{U}(\bu)=\left[~\frac{\partial \log
f_{U}(\bu)}{\partial u_1}~~\ldots~~\frac{\partial\log
f_{U}(\bu)}{\partial u_n}~\right]^\top
\end{align}
\end{Def}
Since we are mainly interested in the additive Gaussian channel,
how the Fisher information matrix behaves under the addition of
two independent random vectors is crucial. Regarding this, we have
the following lemma which is due
to~\cite{Liu_Extremal_Inequality}.
\begin{Lem}[\!\!\cite{Liu_Extremal_Inequality}]
\label{Lemma_Matrix_FII} Let $\bbu$ be a random vector with
differentiable density, and let $\bbsigma_U\succ \bzero$ be its
covariance matrix. Moreover, let $\bbv$ be another random vector
with differentiable density, and be independent of $\bbu$. Then,
we have the following facts:
\begin{enumerate}
\item Matrix form of the Cramer-Rao inequality
\begin{align}
\bbj(\bbu)\succeq \bbsigma_U^{-1}
\end{align}
which is satisfied with equality if $\bbu$ is Gaussian. \item For
any square matrix $\bba$,
\begin{align}
\bbj(\bbu+\bbv)\preceq \bba \bbj(\bbu)\bba^\top+(\bbi-\bba)
\bbj(\bbv) (\bbi-\bba)^\top
\end{align}
\end{enumerate}
\end{Lem}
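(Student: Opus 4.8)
The plan is to prove the two parts separately, each reducing to a positive semi-definiteness (covariance) argument once the right auxiliary identities for the score function $\brho$ are in place.

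For the first part, the matrix Cramer--Rao inequality, I would begin by recording the two standard properties of the score. Under the differentiability assumption the expected score vanishes, $E[\brho(\bbu)]=\bzero$, and an integration by parts (differentiating the normalization $\int f_U(\bu)\,d\bu=1$ against $\bu$, with vanishing boundary terms) gives the key identity $E[\brho(\bbu)\bbu^\top]=-\bbi$. With these in hand, and after centering $\bbu$ if necessary, the stacked vector $(\bbu^\top,\brho(\bbu)^\top)^\top$ has covariance matrix
\begin{align}
\begin{bmatrix} \bbsigma_U & -\bbi \\ -\bbi & \bbj(\bbu) \end{bmatrix}\succeq \bzero ,\nonumber
\end{align}
which is positive semi-definite because it is a covariance. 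Since $\bbsigma_U\succ\bzero$, the Schur complement condition yields $\bbj(\bbu)-\bbsigma_U^{-1}\succeq\bzero$, which is the claim. Equality for Gaussian $\bbu$ follows by direct computation: for a zero-mean Gaussian one has $\brho(\bu)=-\bbsigma_U^{-1}\bu$, whence $\bbj(\bbu)=\bbsigma_U^{-1}E[\bbu\bbu^\top]\bbsigma_U^{-1}=\bbsigma_U^{-1}$.

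For the second part, the matrix Fisher information inequality, the crucial ingredient is the conditional-mean representation of the score of a sum. Writing $\bbw=\bbu+\bbv$ and differentiating $f_W(\bw)=\int f_U(\bu)f_V(\bw-\bu)\,d\bu$ under the integral, followed by an integration by parts, gives
\begin{align}
\brho(\bbw)=E[\brho(\bbu)\mid\bbw]=E[\brho(\bbv)\mid\bbw].\nonumber
\end{align}
Consequently, for any square matrix $\bba$, the vector $\bbm=\bba\brho(\bbu)+(\bbi-\bba)\brho(\bbv)$ satisfies $E[\bbm\mid\bbw]=\bba\brho(\bbw)+(\bbi-\bba)\brho(\bbw)=\brho(\bbw)$. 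I would then invoke the matrix form of the statement that conditioning does not increase second moments, $E[\bbm\bbm^\top]\succeq E\big[E[\bbm\mid\bbw]\,E[\bbm\mid\bbw]^\top\big]=\bbj(\bbw)$, to obtain $\bbj(\bbw)\preceq E[\bbm\bbm^\top]$. Finally, expanding $E[\bbm\bbm^\top]$ and using the independence of $\bbu$ and $\bbv$ together with $E[\brho(\bbu)]=E[\brho(\bbv)]=\bzero$ to eliminate the cross term $E[\brho(\bbu)\brho(\bbv)^\top]=\bzero$, I arrive at $E[\bbm\bbm^\top]=\bba\bbj(\bbu)\bba^\top+(\bbi-\bba)\bbj(\bbv)(\bbi-\bba)^\top$, completing the proof.

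The step I expect to be the main obstacle is justifying the conditional-mean representation $\brho(\bbw)=E[\brho(\bbu)\mid\bbw]$: it rests on differentiating under the integral sign and on integration by parts with vanishing boundary terms, so the differentiability and decay hypotheses on $f_U$ and $f_V$ must be used carefully. Everything downstream, namely the Schur complement step in part one and the ``conditioning reduces second moments'' step in part two, is then routine.
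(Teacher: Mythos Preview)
Your proposal is correct. The paper does not actually prove this lemma---it cites it from~\cite{Liu_Extremal_Inequality}---but it does prove the conditional analogues (Lemmas~\ref{lemma_conditional_crb_vector} and~\ref{lemma_cond_matrix_fii}) using precisely your strategy: for part~1 it expands the positive semi-definite outer product of $\brho(\bbx|\bbu)+{\rm Cov}(\bbx|\bbu)^{-1}(\bbx-E[\bbx|\bbu])$ with itself (your Schur-complement argument is an equivalent repackaging), and for part~2 it expands the outer product of $\bba\brho(\bbx|\bbu)+(\bbi-\bba)\brho(\bby|\bbu)-\brho(\bbw|\bbu)$ with itself after establishing the convolution identity $\brho(\bbw|\bbu)=E[\brho(\bbx|\bbu)\mid\bbw,\bbu]$, which is exactly your ``conditioning does not increase second moments'' step written out explicitly.
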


We will use the following consequences of this lemma.
\begin{Cor}
\label{matrix_fisher_information_ineq} Let $\bbu,\bbv$ be as
specified before. Then,
\begin{enumerate}
\item $\bbj(\bbu+\bbv)\preceq \bbj(\bbu)$ \item
$\bbj(\bbu+\bbv)\preceq
\big[\bbj(\bbu)^{-1}+\bbj(\bbv)^{-1}\big]^{-1}$
\end{enumerate}
\end{Cor}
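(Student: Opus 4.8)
The plan is to obtain both inequalities directly from the second (matrix superadditivity) assertion of Lemma~\ref{Lemma_Matrix_FII}, namely $\bbj(\bbu+\bbv)\preceq \bba\bbj(\bbu)\bba^\top+(\bbi-\bba)\bbj(\bbv)(\bbi-\bba)^\top$ valid for every square matrix $\bba$, by specializing $\bba$ appropriately. The first (Cramer--Rao) assertion of the lemma is needed only to guarantee invertibility: since the covariance matrices of $\bbu$ and $\bbv$ are positive definite, $\bbj(\bbu)\succeq\bbsigma_U^{-1}\succ\bzero$ and likewise $\bbj(\bbv)\succ\bzero$, so both Fisher information matrices are invertible and the expression $[\bbj(\bbu)^{-1}+\bbj(\bbv)^{-1}]^{-1}$ is well defined.

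For part~1, I would set $\bba=\bbi$. The factor $\bbi-\bba=\bzero$ annihilates the second term, and the bound reduces immediately to $\bbj(\bbu+\bbv)\preceq \bbj(\bbu)$.

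For part~2, the point is that $\bba$ is free, so I would choose it to minimize the right-hand side in the positive semi-definite order. Abbreviating $\bbj_U=\bbj(\bbu)$, $\bbj_V=\bbj(\bbv)$, and $\bbq=\bbj_U+\bbj_V\succ\bzero$, the right-hand side, viewed as a quadratic in $\bba$, rearranges as $(\bba-\bbj_V\bbq^{-1})\bbq(\bba-\bbj_V\bbq^{-1})^\top+\bbj_V-\bbj_V\bbq^{-1}\bbj_V$. The first summand is positive semi-definite and vanishes exactly at $\bba^\star=\bbj_V\bbq^{-1}$, so substituting this choice into Lemma~\ref{Lemma_Matrix_FII} yields $\bbj(\bbu+\bbv)\preceq \bbj_V-\bbj_V\bbq^{-1}\bbj_V$.

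It then remains to identify this with the claimed expression. Factoring gives $\bbj_V-\bbj_V\bbq^{-1}\bbj_V=\bbj_V\bbq^{-1}\bigl(\bbq-\bbj_V\bigr)=\bbj_V(\bbj_U+\bbj_V)^{-1}\bbj_U$, and the push-through identity $\bbj_U^{-1}+\bbj_V^{-1}=\bbj_U^{-1}(\bbj_U+\bbj_V)\bbj_V^{-1}$ gives $\bigl[\bbj_U^{-1}+\bbj_V^{-1}\bigr]^{-1}=\bbj_V(\bbj_U+\bbj_V)^{-1}\bbj_U$, which matches. The only delicate point is the non-commutative matrix algebra --- keeping the order of factors correct when completing the square and when inverting $\bbj_U^{-1}+\bbj_V^{-1}$ --- but no genuine obstacle arises, since the superadditivity lemma already performs the analytic work and what remains is bookkeeping.
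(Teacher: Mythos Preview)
Your proof is correct and takes essentially the same approach as the paper: both parts follow from Lemma~\ref{Lemma_Matrix_FII} by specializing $\bba$, with $\bba=\bbi$ for part~1 and the optimal quadratic-minimizing choice for part~2. Your $\bba^\star=\bbj_V(\bbj_U+\bbj_V)^{-1}$ is precisely the paper's $\bba=\big[\bbj(\bbu)^{-1}+\bbj(\bbv)^{-1}\big]^{-1}\bbj(\bbu)^{-1}$ rewritten via the same push-through identity you used; the paper simply states this $\bba$ and notes symmetry of $\bbj(\cdot)$, whereas you additionally motivate it by completing the square and spell out the algebraic identification.
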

\begin{proof}
The first part of the corollary is obtained by choosing
$\bba=\bbi$, and the second part is obtained by choosing
\begin{align}
\bba
=\big[\bbj(\bbu)^{-1}+\bbj(\bbv)^{-1}\big]^{-1}\bbj(\bbu)^{-1}
\end{align}
and also by noting that $\bbj(\cdot)$ is always a symmetric
matrix.
\end{proof}

The following lemma regarding the Fisher information matrix is
also useful in the proof of
Theorem~\ref{thm_vector_generalization}.

\begin{Lem}
\label{Lemma_change_Fisher} Let $\bbu,\bbv_1,\bbv_2$ be random
vectors such that $\bbu$ and $(\bbv_1,\bbv_2)$ are independent.
Moreover, let $\bbv_1,\bbv_2$ be Gaussian random vectors with
covariance matrices $\bzero \prec \bbsigma_1 \preceq \bbsigma_2$.
Then, we have
\begin{align}
\bbj(\bbu+\bbv_2)^{-1}-\bbsigma_2\succeq
\bbj(\bbu+\bbv_1)^{-1}-\bbsigma_1
\end{align}
\end{Lem}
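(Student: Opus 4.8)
The plan is to reduce the claim to a single application of the matrix Fisher information inequality, by absorbing the gap between the two noise covariances into one extra Gaussian perturbation. Set $\bbdelta=\bbsigma_2-\bbsigma_1\succeq\bzero$ and first treat the nondegenerate case $\bbdelta\succ\bzero$. Let $\bbv^\prime$ be a zero-mean Gaussian random vector with covariance $\bbdelta$, independent of $\bbu$ and $\bbv_1$. Since a sum of independent Gaussian vectors is Gaussian with covariance equal to the sum, $\bbv_1+\bbv^\prime$ has covariance $\bbsigma_1+\bbdelta=\bbsigma_2$ and is therefore distributed as $\bbv_2$, so that
\begin{align}
\bbu+\bbv_2\stackrel{d}{=}(\bbu+\bbv_1)+\bbv^\prime. \nonumber
\end{align}
Writing $\bbw=\bbu+\bbv_1$, I note that $\bbw$ has a differentiable density (it is the convolution of the law of $\bbu$ with the nondegenerate Gaussian $\bbv_1$) and its covariance is at least $\bbsigma_1\succ\bzero$, so the hypotheses of Lemma~\ref{Lemma_Matrix_FII} are satisfied for the pair $(\bbw,\bbv^\prime)$.

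Next I would apply the second part of Corollary~\ref{matrix_fisher_information_ineq} to $\bbw$ and $\bbv^\prime$, obtaining
\begin{align}
\bbj(\bbw+\bbv^\prime)\preceq\big[\bbj(\bbw)^{-1}+\bbj(\bbv^\prime)^{-1}\big]^{-1}. \nonumber
\end{align}
Because the matrix inverse is order-reversing on positive definite matrices, this is equivalent to $\bbj(\bbw+\bbv^\prime)^{-1}\succeq\bbj(\bbw)^{-1}+\bbj(\bbv^\prime)^{-1}$. Since $\bbv^\prime$ is Gaussian with covariance $\bbdelta$, the matrix Cramer-Rao inequality (part~1 of Lemma~\ref{Lemma_Matrix_FII}) holds with equality, i.e.\ $\bbj(\bbv^\prime)=\bbdelta^{-1}$, so $\bbj(\bbv^\prime)^{-1}=\bbdelta$. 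Substituting and using $\bbu+\bbv_2\stackrel{d}{=}\bbw+\bbv^\prime$ gives
\begin{align}
\bbj(\bbu+\bbv_2)^{-1}\succeq\bbj(\bbu+\bbv_1)^{-1}+\bbdelta=\bbj(\bbu+\bbv_1)^{-1}+\bbsigma_2-\bbsigma_1, \nonumber
\end{align}
which, after subtracting $\bbsigma_2$ from both sides, is exactly the asserted inequality.

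Finally I would remove the assumption $\bbdelta\succ\bzero$. When $\bbsigma_2-\bbsigma_1$ is only positive semidefinite (possibly singular), $\bbv^\prime$ has no density and the corollary does not apply directly. To handle this, I replace $\bbsigma_2$ by $\bbsigma_2+\epsilon\bbi$ for $\epsilon>0$, so that $(\bbsigma_2+\epsilon\bbi)-\bbsigma_1=\bbdelta+\epsilon\bbi\succ\bzero$ and the argument above applies verbatim to yield $\bbj(\bbu+\bbv_{2,\epsilon})^{-1}-(\bbsigma_2+\epsilon\bbi)\succeq\bbj(\bbu+\bbv_1)^{-1}-\bbsigma_1$; letting $\epsilon\downarrow0$ then recovers the general statement. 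The only nonroutine point is the justification of this limit, which rests on the continuity of the Fisher information matrix $\bbj(\bbu+\bbv)$ as a function of the covariance of the added Gaussian vector $\bbv$, a standard smoothing property in additive Gaussian channels. I expect this continuity step to be the sole genuine obstacle; the algebraic heart of the proof is the single inversion of the Fisher information inequality combined with the Gaussian equality case of Cramer-Rao.
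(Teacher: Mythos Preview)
Your proof is correct and follows essentially the same route as the paper: decompose $\bbv_2$ in distribution as $\bbv_1$ plus an independent Gaussian increment with covariance $\bbsigma_2-\bbsigma_1$, apply the second part of Corollary~\ref{matrix_fisher_information_ineq}, and use the Cramer--Rao equality for the Gaussian increment. The paper's proof simply writes $\bbj(\tilde{\bbv}_1)^{-1}=\bbsigma_2-\bbsigma_1$ without separately treating the singular case $\bbsigma_2-\bbsigma_1\succeq\bzero$, $\bbsigma_2-\bbsigma_1\not\succ\bzero$; your $\epsilon$-perturbation step is a reasonable extra bit of care that the paper omits.
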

\begin{proof}
Without loss of generality, let $\bbv_2=\bbv_1+\tilde{\bbv}_1$
such that $\tilde{\bbv}_1$ is a Gaussian random vector with
covariance matrix $\bbsigma_2-\bbsigma_1$, and independent of
$\bbv_1$. Due to the second part of
Corollary~\ref{matrix_fisher_information_ineq}, we have
\begin{align}
\bbj(\bbu+\bbv_2)=\bbj(\bbu+\bbv_1+\tilde{\bbv}_1)&\preceq
\big[\bbj(\bbu+\bbv_1)^{-1}+\bbj(\tilde{\bbv}_1)^{-1}\big]^{-1} \\
&= \big[\bbj(\bbu+\bbv_1)^{-1}+\bbsigma_{2}-\bbsigma_1\big]^{-1}
\end{align}
which is equivalent to
\begin{align}
\bbj(\bbu+\bbv_2)^{-1} \succeq
\bbj(\bbu+\bbv_1)^{-1}+\bbsigma_2-\bbsigma_1
\end{align}
which proves the lemma.
\end{proof}

Moreover, we need the relationship between the Fisher information
matrix and the differential entropy, which is due
to~\cite{Palomar_Gradient}.

\begin{Lem}[\!\!\cite{Palomar_Gradient}]
\label{gradient_fisher} Let $\bbx$ and $\bbn$ be independent
random vectors, where $\bbn$ is zero-mean Gaussian with covariance
matrix $\bbsigma_N\succ\bzero$, and $\bbx$ has a finite second
order moment. Then, we have
\begin{align}
\nabla_{\bbsigma_N} h(\bbx+\bbn)=\frac{1}{2} \bbj(\bbx+\bbn)
\end{align}
\end{Lem}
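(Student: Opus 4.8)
The plan is to establish this as the matrix-valued generalization of the de Bruijn identity, proceeding by the diffusion-equation (heat-kernel) method. Write $\bby=\bbx+\bbn$ and let $\phi(\bn)$ denote the density of $\bbn$, i.e.\ the zero-mean Gaussian density with covariance $\bbsigma_N$, so that the density of $\bby$ is the convolution $f_Y(\by)=\int f_X(\bx)\,\phi(\by-\bx)\,d\bx$. The first and central step is to verify that the Gaussian kernel satisfies the matrix heat equation
\begin{align}
\nabla_{\bbsigma_N}\phi(\bn)=\frac{1}{2}\nabla_{\bn}\nabla_{\bn}^\top\phi(\bn).
\end{align}
I would prove this by a direct computation. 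The $s$-derivative at $s=0$ of $\log\phi$ under a symmetric perturbation $\bbsigma_N\to\bbsigma_N+s\bbdelta$ equals $\frac{1}{2}\big[(\bbsigma_N^{-1}\bn)^\top\bbdelta(\bbsigma_N^{-1}\bn)-{\rm tr}(\bbsigma_N^{-1}\bbdelta)\big]$, so that $\frac{d}{ds}\phi\big|_{s=0}=\frac{1}{2}\phi\big[(\bbsigma_N^{-1}\bn)^\top\bbdelta(\bbsigma_N^{-1}\bn)-{\rm tr}(\bbsigma_N^{-1}\bbdelta)\big]$; on the other hand the Hessian in $\bn$ of $\phi$ equals $\phi\big[(\bbsigma_N^{-1}\bn)(\bbsigma_N^{-1}\bn)^\top-\bbsigma_N^{-1}\big]$, and matching the two through the trace inner product $\langle\bbdelta,\cdot\rangle$ yields the identity. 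Working with directional derivatives along symmetric $\bbdelta$ is what lets me sidestep the usual bookkeeping subtleties of differentiating with respect to a symmetric matrix.

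Next I would transfer the heat equation from the kernel to $f_Y$. Since $\nabla_{\bbsigma_N}$ acts only on the kernel inside the convolution, and the spatial operator $\nabla_{\bn}\nabla_{\bn}^\top$ commutes with convolution (becoming $\nabla_{\by}\nabla_{\by}^\top$ on the smoothed density), the same identity propagates:
\begin{align}
\nabla_{\bbsigma_N}f_Y(\by)=\frac{1}{2}\nabla_{\by}\nabla_{\by}^\top f_Y(\by).
\end{align}
I would then differentiate the entropy under the integral sign,
\begin{align}
\nabla_{\bbsigma_N}h(\bby)=-\int\big[\nabla_{\bbsigma_N}f_Y(\by)\big]\big(\log f_Y(\by)+1\big)\,d\by,
\end{align}
and observe that the $+1$ contribution vanishes because $\int\nabla_{\bbsigma_N}f_Y\,d\by=\nabla_{\bbsigma_N}\!\int f_Y\,d\by=0$. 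Substituting the heat equation leaves $-\frac{1}{2}\int\big[\nabla_{\by}\nabla_{\by}^\top f_Y(\by)\big]\log f_Y(\by)\,d\by$. The final algebraic step is a single integration by parts applied to each entry: integrating by parts in $y_i$ turns the $(i,j)$ entry into $\frac{1}{2}\int\partial_{y_j}f_Y\,\partial_{y_i}\log f_Y\,d\by=\frac{1}{2}\int\frac{1}{f_Y}\,\partial_{y_i}f_Y\,\partial_{y_j}f_Y\,d\by$, which is exactly $\frac{1}{2}[\bbj(\bby)]_{ij}$ by the definition of the Fisher information matrix through the score $\brho(\by)=\nabla_{\by}f_Y/f_Y$. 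Assembling the entries gives $\nabla_{\bbsigma_N}h(\bby)=\frac{1}{2}\bbj(\bby)$.

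I expect the main obstacle to be analytic rather than algebraic: justifying the interchange of $\nabla_{\bbsigma_N}$ with the integral defining $h$, and guaranteeing that the boundary terms in the integration by parts vanish. Both are controlled by the regularity that convolution with a nondegenerate Gaussian buys us: because $\bbsigma_N\succ\bzero$, the density $f_Y$ is smooth and, together with all its derivatives, is dominated by Gaussian tails, so a dominated-convergence argument legitimizes differentiation under the integral and makes the $\partial_{y_i}f_Y\,\log f_Y$ boundary contributions vanish at infinity (the only mild point is that $\log f_Y$ can grow at worst quadratically, which the faster-than-polynomial decay of $\partial_{y_i}f_Y$ absorbs). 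The finite-second-moment hypothesis on $\bbx$ guarantees that $h(\bby)$ and $\bbj(\bby)$ are finite, so that the identity is meaningful in the first place.
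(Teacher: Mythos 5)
The paper contains no proof of this lemma to compare against: it is imported verbatim from \cite{Palomar_Gradient}, and the only related argument in the paper is the proof of the conditional extension (Lemma~\ref{gradient_fisher_conditional}), which simply integrates the present identity over the distribution of the conditioning variable. So what you have supplied is a self-contained derivation where the paper relies on a citation, and your route is the standard one: the matrix de Bruijn identity via the heat-kernel method. The algebra checks out. The directional derivative of $\log\phi$ along a symmetric perturbation $\bbdelta$ and the Hessian $\nabla_{\bn}\nabla_{\bn}^\top\phi=\phi\big[(\bbsigma_N^{-1}\bn)(\bbsigma_N^{-1}\bn)^\top-\bbsigma_N^{-1}\big]$ do match under the trace pairing; the heat equation passes through the convolution because $\nabla_{\bbsigma_N}$ hits only the kernel; and a single integration by parts turns $-\frac{1}{2}\int\big(\nabla_{\by}\nabla_{\by}^\top f_Y\big)\log f_Y\,d\by$ entrywise into $\frac{1}{2}\bbj(\bbx+\bbn)$ with the paper's definition of the score. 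Your device of identifying the gradient only against symmetric $\bbdelta$ is also consistent with the convention the paper uses elsewhere ($\nabla_{\bbsigma}\log|\bbsigma|=\bbsigma^{-\top}$), since the matrices produced here are symmetric. One cosmetic repair: $\bbx$ is not assumed to have a density, so write $f_Y(\by)=E\left[\phi(\by-\bbx)\right]$ rather than a convolution of two densities; nothing downstream changes.

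The one substantive inaccuracy is in your closing analytic claim that $f_Y$ ``together with all its derivatives, is dominated by Gaussian tails.'' Under the stated hypothesis only $E\left[\bbx\bbx^\top\right]$ is finite, so $\bbx$ may be heavy-tailed, and then $f_Y$ and $\nabla_{\by}f_Y$ inherit polynomial tails: splitting on $\{\|\bbx\|\leq\|\by\|/2\}$ shows $\|\nabla_{\by}f_Y(\by)\|$ may decay only like $\Pr\left[\|\bbx\|>\|\by\|/2\right]$, which Chebyshev bounds merely by a constant times $\|\by\|^{-2}$. A naive surface-term estimate for the integration by parts then fails: surface area of order $R^{n-1}$ times $|\nabla f_Y|\,|\log f_Y|$ of order $R^{-2}\cdot R^{2}$ does not vanish as $R\to\infty$. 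The standard repair keeps your correct observation that $\bbsigma_N\succ\bzero$ forces $|\log f_Y(\by)|$ to grow at most quadratically, and pairs it with the integrability of $(1+\|\by\|^2)\|\nabla_{\by}f_Y(\by)\|$ over $\mathbb{R}^n$ (which does follow from the finite second moment), so that the flux can be made to vanish along a suitable subsequence of radii; this is how the rigorous treatments of de Bruijn's identity close the gap. With that substitution for your domination claim, the proof is sound.
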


\subsection{Proof of Theorem~\ref{thm_vector_generalization}}

\label{sec:proof_of_thm_vector_generalization}

To prove Theorem~\ref{thm_vector_generalization}, we first
consider the following expression
\begin{align}
h(\bbx+\bbn_Z)-h(\bbx+\bbn_2) \label{minus_optimization}
\end{align}
which is bounded due to the covariance constraint on $\bbx$. In
particular, we have
\begin{align}
\frac{1}{2}\log \frac{| \bbs+\bbsigma_Z |}{|\bbs+\bbsigma_2|}\leq
h(\bbx+\bbn_Z)-h(\bbx+\bbn_2)\leq \frac{1}{2}\log
\frac{|\bbsigma_Z|}{|\bbsigma_2 |} \label{dummy_bounds}
\end{align}
To see this, define $\tilde{\bbn}$ which is Gaussian with
covariance matrix $\bbsigma_Z-\bbsigma_2$, and is independent of
$\bbn_2$ and $\bbx$. Thus, without loss of generality, we can
assume $\bbz=\bbx+\bbn_2+\tilde{\bbn}$. Then, the left-hand side
of (\ref{dummy_bounds}) can be verified by noting that
\begin{align}
h(\bbx+\bbn_Z)-h(\bbx+\bbn_2)=I(\tilde{\bbn};\bbx+\bbn_2+\tilde{\bbn})
\end{align}
and then by using Lemma~\ref{worst_additive}. The right-hand side
of (\ref{dummy_bounds}) follows from
\begin{align}
h(\bbx+\bbn_Z)-h(\bbx+\bbn_2)&=I(\tilde{\bbn};\bbx+\bbn_Z)\label{rhs_start}\\
&=h(\tilde{\bbn})-h(\tilde{\bbn}|\bbx+\bbn_Z)\\
& \leq h(\tilde{\bbn})-h(\tilde{\bbn}|\bbx+\bbn_Z,\bbx)\label{rhs_cond_entropy}\\
& = h(\tilde{\bbn})-h(\tilde{\bbn}|\bbn_Z)\label{rhs_independence}\\
&=I(\tilde{\bbn};\bbn_Z)\\
&= \frac{1}{2}\log \frac{|\bbsigma_Z|}{|\bbsigma_2
|}\label{rhs_end}
\end{align}
where (\ref{rhs_cond_entropy}) comes from the fact that
conditioning cannot increase entropy, and (\ref{rhs_independence})
is due to the fact that $\bbx$ and $(\bbn_2,\tilde{\bbn})$ are
independent. Thus, we can fix the difference of the differential
entropies in (\ref{dummy_bounds}) to an $\alpha$ in this range,
i.e., we can set
\begin{align}
h(\bbx+\bbn_Z)-h(\bbx+\bbn_2)=\alpha \label{fix_alpha}
\end{align}
where $\alpha\in\left[\frac{1}{2}\log | \bbs+\bbsigma_Z
|/|\bbs+\bbsigma_2|,\frac{1}{2}\log | \bbsigma_Z
|/|\bbsigma_2|\right]$. We now would like to understand how the
constraint in (\ref{fix_alpha}) affects the set of admissible
random vectors. For that purpose, we use
Lemma~\ref{gradient_fisher}, and express this difference of
entropies as an integral of the Fisher information
matrix\footnote{The integration in (\ref{integral_of_fixed}),
i.e., $\int_{\bbsigma_2}^{\bbsigma_Z} \bbj(\cdot) d\bbsigma$, is a
line integral of the vector-valued function $\bbj(\cdot)$.
Moreover, since $\bbj(\cdot)$ is the gradient of a scalar field,
the integration expressed in $\int_{\bbsigma_2}^{\bbsigma_Z}
\bbj(\cdot) d\bbsigma$ is path-free, i.e., it yields the same
value for any path from $\bbsigma_2$ to $\bbsigma_Z$. This remark
applies to all upcoming integrals of $\bbj(\cdot)$.}
\begin{align}
\alpha=h(\bbx+\bbn_Z)-h(\bbx+\bbn_2)=\frac{1}{2}\int_{\bbsigma_2}^{\bbsigma_Z}
\bbj(\bbx+\bbn)d\bbsigma_N \label{integral_of_fixed}
\end{align}
Using the stability of Gaussian random vectors, we can express
$\bbj(\bbx+\bbn)$ as
\begin{align}
\bbj(\bbx+\bbn)=\bbj(\bbx+\bbn_2+\tilde{\bbn})
\label{stability_gaussian}
\end{align}
where $\tilde{\bbn}$ is a zero-mean Gaussian random vector with
covariance matrix $\bbsigma_N-\bbsigma_2 \succeq \bzero $, and is
independent of $\bbn_2$. Using the second part of
Corollary~\ref{matrix_fisher_information_ineq} in
(\ref{stability_gaussian}), we get
\begin{align}
\bbj(\bbx+\bbn)=\bbj(\bbx+\bbn_2+\tilde{\bbn})& \preceq
\big[\bbj(\bbx+\bbn_2)^{-1}+\bbj(\tilde{\bbn})^{-1}\big]^{-1}\\
&= \big[\bbj(\bbx+\bbn_2)^{-1}+\bbsigma_N-\bbsigma_2\big]^{-1}
\label{Matrix_FII_implies_1}
\end{align}
where we used the fact that
$\bbj(\tilde{\bbn})=(\bbsigma_N-\bbsigma_2)^{-1}$ which is a
consequence of the first part of Lemma~\ref{Lemma_Matrix_FII} by
noting that $\tilde{\bbn}$ is Gaussian. We now bound the integral
in (\ref{integral_of_fixed}) by using
(\ref{Matrix_FII_implies_1}). For that purpose, we introduce the
following lemma.
\begin{Lem}
\label{Shamai_s_lemma} Let $\bbk_1,\bbk_2$ be positive
semi-definite matrices satisfying
$\bzero\preceq\bbk_1\preceq\bbk_2$, and $\mathbf{f}(\bbk)$ be a
matrix-valued function such that $\mathbf{f}(\bbk)\succeq\bzero$
for $\bbk_1\preceq\bbk\preceq \bbk_2$. Then, we have
\begin{align}
\int_{\bbk_1}^{\bbk_2}\mathbf{f}(\bbk)d\bbk \geq 0
\end{align}
\end{Lem}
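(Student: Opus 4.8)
The plan is to evaluate the line integral along a conveniently chosen path and to show that the integrand is pointwise non-negative. The symbol $\int_{\bbk_1}^{\bbk_2}\mathbf{f}(\bbk)\,d\bbk$ denotes a line integral of the matrix-valued field $\mathbf{f}$, evaluated by pairing $\mathbf{f}(\bbk)$ against the tangent $d\bbk$ through the trace inner product. In the settings where this lemma is applied, $\mathbf{f}$ is the Fisher information matrix, which is the gradient of a scalar field (the differential entropy), so the integral is path-independent by the gradient theorem and I am free to select the straight-line segment joining $\bbk_1$ to $\bbk_2$. Parameterizing this path as $\bbk(\lambda)=\bbk_1+\lambda(\bbk_2-\bbk_1)$ for $\lambda\in[0,1]$, so that $d\bbk=(\bbk_2-\bbk_1)\,d\lambda$, and using the symmetry of $\mathbf{f}$ to drop the transpose in the trace pairing, the integral becomes
\begin{align}
\int_{\bbk_1}^{\bbk_2}\mathbf{f}(\bbk)\,d\bbk=\int_0^1 {\rm tr}\big(\mathbf{f}(\bbk(\lambda))(\bbk_2-\bbk_1)\big)\,d\lambda \nonumber
\end{align}

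Next I would verify that the integrand is non-negative for each $\lambda\in[0,1]$. For any such $\lambda$ the point $\bbk(\lambda)$ lies in the matrix interval $[\bbk_1,\bbk_2]$: since the hypothesis $\bbk_1\preceq\bbk_2$ gives $\bbk_2-\bbk_1\succeq\bzero$, we have $\lambda(\bbk_2-\bbk_1)\succeq\bzero$ and hence $\bbk(\lambda)\succeq\bbk_1$, while $(1-\lambda)(\bbk_2-\bbk_1)\succeq\bzero$ gives $\bbk(\lambda)\preceq\bbk_2$. By the standing assumption, $\mathbf{f}(\bbk(\lambda))\succeq\bzero$ throughout this interval. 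The key fact is then that the trace of a product of two positive semi-definite matrices is non-negative: writing $\bbk_2-\bbk_1=\bba\bba$ with $\bba=(\bbk_2-\bbk_1)^{1/2}\succeq\bzero$, the cyclic property of the trace yields ${\rm tr}\big(\mathbf{f}(\bbk(\lambda))(\bbk_2-\bbk_1)\big)={\rm tr}\big(\bba\,\mathbf{f}(\bbk(\lambda))\,\bba\big)\geq 0$, since the congruence $\bba\,\mathbf{f}(\bbk(\lambda))\,\bba$ is itself positive semi-definite and therefore has non-negative trace. Integrating a non-negative integrand over $[0,1]$ delivers the claimed inequality.

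Rather than any single hard step, the main subtlety is the correct bookkeeping of what the integral denotes and the justification of the straight-line path. Once the line-integral-with-trace-pairing interpretation is fixed and path-independence is invoked (valid because $\mathbf{f}$ is a gradient field in every application), the argument is elementary, resting only on $\bbk_2-\bbk_1\succeq\bzero$, the hypothesis $\mathbf{f}(\bbk)\succeq\bzero$ along the path, and the standard inequality ${\rm tr}(\bba\bbb)\geq 0$ for $\bba,\bbb\succeq\bzero$.
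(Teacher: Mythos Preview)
Your proof is correct and follows essentially the same approach as the paper: parameterize along the straight segment $\bbk(\lambda)=\bbk_1+\lambda(\bbk_2-\bbk_1)$ and show the resulting scalar integrand is non-negative. The only cosmetic difference is in the last step: the paper writes the integrand as $\bone^\top\big[\mathbf{f}(\bbk(\lambda))\odot(\bbk_2-\bbk_1)\big]\bone$ and invokes the Schur product theorem, whereas you write it as ${\rm tr}\big(\mathbf{f}(\bbk(\lambda))(\bbk_2-\bbk_1)\big)$ and use the square-root/congruence argument; for symmetric matrices these two expressions are identical, so the arguments coincide.
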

\begin{proof}
The integral is equivalent to
\begin{align}
\int_{\bbk_1}^{\bbk_2}\mathbf{f}(\bbk)d\bbk =\int_{0}^1 \bone^\top
\left[\mathbf{f}\big(\bbk_1+t(\bbk_2-\bbk_1)\big)\odot
(\bbk_2-\bbk_1)\right] \bone dt
\end{align}
where $\odot$ denotes the Schur (Hadamard) product, and
$\bone=[1~\ldots~1]^\top$ with appropriate size. Since the Schur
product of two positive semi-definite matrices is positive
semi-definite~\cite{horn_johnson_book1}, the integrand is
non-negative implying the non-negativity of the integral.
\end{proof}

In light of this lemma, using (\ref{Matrix_FII_implies_1}) in
(\ref{integral_of_fixed}), we get
\begin{align}
\alpha &\leq
\frac{1}{2}\int_{\bbsigma_2}^{\bbsigma_Z}\big[\bbj(\bbx+\bbn_2)^{-1}+\bbsigma_N-\bbsigma_2\big]^{-1}d\bbsigma_N
\\
&=  \frac{1}{2}
\log\frac{|\bbj(\bbx+\bbn_2)^{-1}+\bbsigma_Z-\bbsigma_2|}{|\bbj(\bbx+\bbn_2)^{-1}|}
\label{bound_on_alpha}
\end{align}
where we used the well-known fact that $\nabla_{\bbsigma}\log
|\bbsigma|=\bbsigma^{-\top}$ for $\bbsigma\succ\bzero$. We also
note that the denominator in (\ref{bound_on_alpha}) is strictly
positive because
\begin{align}
\bbj(\bbx+\bbn_2)^{-1}\succeq \bbj(\bbn_2)^{-1}=\bbsigma_2 \succ
\bzero
\end{align}
which implies $|\bbj(\bbx+\bbn_2)^{-1}|>0$.

Following similar steps, we can also find a lower bound on
$\alpha$. Again, using the stability of Gaussian random vectors,
we have
\begin{align}
\bbj(\bbx+\bbn_Z)=\bbj(\bbx+\bbn+\tilde{\bbn})
\label{stability_gaussian_1}
\end{align}
where $\bbn, \tilde{\bbn}$ are zero-mean Gaussian random vectors
with covariance matrices $\bbsigma_N,\bbsigma_Z-\bbsigma_N$,
respectively, $\bbsigma_2\preceq \bbsigma_N\preceq \bbsigma_Z$,
and they are independent. Using the second part of
Corollary~\ref{matrix_fisher_information_ineq}
in~(\ref{stability_gaussian_1}) yields
\begin{align}
\bbj(\bbx+\bbn_Z)=\bbj(\bbx+\bbn+\tilde{\bbn})&\preceq\big[\bbj(\bbx+\bbn)^{-1}+\bbj(\tilde{\bbn})^{-1}\big]^{-1}\\
& =\big[\bbj(\bbx+\bbn)^{-1}+\bbsigma_Z-\bbsigma_N\big]^{-1}
\label{bound_on_Fisher_alt}
\end{align}
where we used the fact that
$\bbj(\tilde{\bbn})=(\bbsigma_Z-\bbsigma_N)^{-1}$ which follows
from the first part of Lemma~\ref{Lemma_Matrix_FII} due to the
Gaussianity of $\tilde{\bbn}$. Then, (\ref{bound_on_Fisher_alt})
is equivalent to
\begin{align}
\bbj(\bbx+\bbn_Z)^{-1}\succeq
\bbj(\bbx+\bbn)^{-1}+\bbsigma_Z-\bbsigma_N
\end{align}
and that implies
\begin{align}
\big[\bbj(\bbx+\bbn_Z)^{-1}+\bbsigma_N-\bbsigma_Z\big]^{-1}\preceq\bbj(\bbx+\bbn)
\label{bound_on_Fisher_alt_1}
\end{align}
Use of Lemma~\ref{Shamai_s_lemma} and
(\ref{bound_on_Fisher_alt_1}) in (\ref{integral_of_fixed}) yields
\begin{align}
\alpha &\geq \int_{\bbsigma_2}^{\bbsigma_Z} \big[\bbj(\bbx+\bbn_Z)^{-1}+\bbsigma_N-\bbsigma_Z\big]^{-1} d\bbsigma_N \\
&=\frac{1}{2} \log
\frac{|\bbj(\bbx+\bbn_Z)^{-1}|}{|\bbj(\bbx+\bbn_Z)^{-1}+\bbsigma_2-\bbsigma_Z|}
\label{bound_on_alpha_1}
\end{align}
where we again used $\nabla_{\bbsigma}\log
|\bbsigma|=\bbsigma^{-\top}$ for $\bbsigma\succ\bzero$. Here also,
the denominator is strictly positive because
\begin{align}
\bbj(\bbx+\bbn_Z)^{-1}+\bbsigma_2-\bbsigma_Z \succeq
\bbj(\bbn_Z)^{-1}+\bbsigma_2-\bbsigma_Z =\bbsigma_2\succ \bzero
\end{align}
which implies $|\bbj(\bbx+\bbn_Z)^{-1}+\bbsigma_2-\bbsigma_Z|>0$.
Combining the two bounds on $\alpha$ given in
(\ref{bound_on_alpha}) and (\ref{bound_on_alpha_1}) yields
\begin{align}
\frac{1}{2}
\log\frac{|\bbj(\bbx+\bbn_Z)^{-1}|}{|\bbj(\bbx+\bbn_Z)^{-1}+\bbsigma_2-\bbsigma_Z|}
\leq \alpha \leq  \frac{1}{2} \log
\frac{|\bbj(\bbx+\bbn_2)^{-1}+\bbsigma_Z-\bbsigma_2|}{|\bbj(\bbx+\bbn_2)^{-1}|}
\label{bound_on_alpha_2}
\end{align}

Next, we will discuss the implications of
(\ref{bound_on_alpha_2}). First, we have a digression of technical
nature to provide the necessary information for such a discussion.
We present the following lemma from~\cite{horn_johnson_book1}.

\begin{Lem}[\!\!\cite{horn_johnson_book1}, Theorem 7.6.4, page 465]
\label{horn_johnson_lemma_1} Let $\bba,\bbb\in M_n$, where $M_n$
is the set of all square matrices of size $n\times n$ over the
complex numbers, be two Hermitian matrices and suppose that there
is a real linear combination of $\bba$ and $\bbb$ that is positive
definite. Then there exists a non-singular matrix $\bbc$ such that
both $\bbc^H \bba \bbc$ and $\bbc^H \bbb \bbc$ are diagonal, where
$(\cdot)^H$ denotes the conjugate transpose.
\end{Lem}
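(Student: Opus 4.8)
The plan is to reduce the simultaneous diagonalization by congruence to the simultaneous \emph{unitary} diagonalization of two commuting Hermitian matrices, using the positive definite combination as a change of inner product. First I would let $\bbm=\alpha\bba+\beta\bbb\succ\bzero$ be the positive definite real combination guaranteed by the hypothesis, and take its unique Hermitian positive definite square root $\bbm^{1/2}$, which is invertible (so $\bbm^{-1/2}$ exists and is again Hermitian positive definite). I then pass to the congruent matrices $\tilde{\bba}=\bbm^{-1/2}\bba\bbm^{-1/2}$ and $\tilde{\bbb}=\bbm^{-1/2}\bbb\bbm^{-1/2}$, both of which remain Hermitian since $\bba,\bbb$ and $\bbm^{-1/2}$ are Hermitian. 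The key structural observation is that these normalized matrices satisfy the affine relation
\begin{align}
\alpha\tilde{\bba}+\beta\tilde{\bbb}=\bbm^{-1/2}(\alpha\bba+\beta\bbb)\bbm^{-1/2}=\bbi \nonumber
\end{align}
so the problem has been reduced to the case in which a fixed real combination of the two Hermitian matrices equals the identity.

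From this relation I would deduce that $\tilde{\bba}$ and $\tilde{\bbb}$ commute. Multiplying the identity above on the left and then on the right by $\tilde{\bba}$ and subtracting yields $\beta(\tilde{\bba}\tilde{\bbb}-\tilde{\bbb}\tilde{\bba})=\bzero$; if $\beta\neq 0$ this gives commutativity directly, and if $\beta=0$ then $\bbm=\alpha\bba$ forces $\tilde{\bba}=\alpha^{-1}\bbi$, which commutes with everything. Hence $\tilde{\bba}$ and $\tilde{\bbb}$ are two commuting Hermitian matrices, and by the spectral theorem for a commuting family of normal matrices there exists a single unitary $\bbu$ with $\bbu^H\tilde{\bba}\bbu=\bbd_1$ and $\bbu^H\tilde{\bbb}\bbu=\bbd_2$ both diagonal.

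Finally I would set $\bbc=\bbm^{-1/2}\bbu$, which is nonsingular as a product of invertible matrices, and verify using $\bbc^H=\bbu^H\bbm^{-1/2}$ that
\begin{align}
\bbc^H\bba\bbc=\bbu^H\bbm^{-1/2}\bba\bbm^{-1/2}\bbu=\bbu^H\tilde{\bba}\bbu=\bbd_1 \nonumber
\end{align}
and likewise $\bbc^H\bbb\bbc=\bbd_2$, both diagonal, which completes the proof. The only genuinely nontrivial step is the passage from the normalized relation to commutativity of $\tilde{\bba}$ and $\tilde{\bbb}$; once that is in hand, the remainder is the spectral theorem together with routine bookkeeping about congruence. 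Accordingly, I expect the main obstacle to be organizational rather than deep: ensuring the argument treats the degenerate coefficient cases ($\alpha=0$ or $\beta=0$) uniformly, which the short commutator computation above handles cleanly.
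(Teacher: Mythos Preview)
Your proof is correct. The paper does not supply its own proof of this lemma; it is stated as a citation to Horn and Johnson's \emph{Matrix Analysis} (Theorem~7.6.4) and used as a black box in the proof of Lemma~\ref{lemma_heart_of_the_contr}. Your argument---normalize by $\bbm^{-1/2}$ so that the combination becomes the identity, deduce commutativity, then apply the spectral theorem for commuting Hermitian matrices---is in fact the standard proof found in that reference, so there is nothing to compare.
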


\begin{Lem}
\label{lemma_heart_of_the_contr} Consider the function
\begin{align}
r(t)=\frac{1}{2}\log
\frac{|\bba+\bbb+t\bbdelta|}{|\bba+t\bbdelta|},\qquad 0\leq t\leq1
\end{align}
where $\bba,\bbb,\bbdelta$ are real, symmetric matrices, and $\bba
\succ\bzero$, $\bbb\succeq  \bzero,\bbdelta \succeq \bzero$. The
function $r(t)$ is continuous and monotonically decreasing in $t$.
\end{Lem}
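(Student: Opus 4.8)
The plan is to dispatch continuity first and then reduce monotonicity to the sign of $r'(t)$, which I will control with the simultaneous-diagonalization Lemma~\ref{horn_johnson_lemma_1}. For continuity, note that for every $t\in[0,1]$ (indeed every $t\ge 0$) the matrices $\bbm_1:=\bba+t\bbdelta$ and $\bbm_2:=\bba+\bbb+t\bbdelta$ are strictly positive definite, since $\bba\succ\bzero$ while $t\bbdelta\succeq\bzero$ and $\bbb\succeq\bzero$. Hence $|\bbm_1|>0$ and $|\bbm_2|>0$, and because both determinants are polynomials in $t$, $r(t)=\tfrac12\log|\bbm_2|-\tfrac12\log|\bbm_1|$ is a difference of logarithms of polynomials in $t$ that are strictly positive on $[0,1]$; it is therefore continuous, in fact $C^\infty$, and in particular differentiable, so monotonicity can be verified through $r'(t)$.

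Differentiating with $\tfrac{d}{dt}\log|\bbm(t)|={\rm tr}\big(\bbm(t)^{-1}\dot{\bbm}(t)\big)$ and $\dot{\bbm}_1=\dot{\bbm}_2=\bbdelta$, I obtain
\[
r'(t)=\tfrac12\,{\rm tr}\!\left[\big(\bbm_2^{-1}-\bbm_1^{-1}\big)\bbdelta\right],
\]
and the goal becomes to show this is $\le 0$ for all $t\in[0,1]$. The obstacle is that the three matrices $\bba,\bbb,\bbdelta$ cannot be jointly diagonalized, so I will diagonalize only the pair $(\bbm_1,\bbdelta)$ and carry $\bbb$ along. Since $\bbm_1\succ\bzero$ is itself a positive-definite real linear combination of $\bbm_1$ and $\bbdelta$, Lemma~\ref{horn_johnson_lemma_1} supplies a nonsingular $\bbc$ (depending on the fixed $t$) with $\bbc^\top\bbm_1\bbc=\bbi$ and $\bbc^\top\bbdelta\bbc=\bbd$, where $\bbd\succeq\bzero$ is diagonal. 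Then $\bbm_1^{-1}=\bbc\bbc^\top$, and writing $\tilde{\bbb}:=\bbc^\top\bbb\bbc\succeq\bzero$ gives $\bbm_2=\bbc^{-\top}(\bbi+\tilde{\bbb})\bbc^{-1}$, hence $\bbm_2^{-1}=\bbc(\bbi+\tilde{\bbb})^{-1}\bbc^\top$.

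Substituting, and using the cyclic property of the trace together with $\bbc^\top\bbdelta\bbc=\bbd$, yields
\[
r'(t)=\tfrac12\,{\rm tr}\!\left[\big((\bbi+\tilde{\bbb})^{-1}-\bbi\big)\bbd\right]=\tfrac12\sum_i d_i\,\big[(\bbi+\tilde{\bbb})^{-1}-\bbi\big]_{ii},
\]
the last equality because $\bbd$ is diagonal with nonnegative entries $d_i$. Now $\bbi+\tilde{\bbb}\succeq\bbi\succ\bzero$ forces $(\bbi+\tilde{\bbb})^{-1}\preceq\bbi$, so $(\bbi+\tilde{\bbb})^{-1}-\bbi$ is negative semi-definite and each of its diagonal entries is nonpositive; since $d_i\ge0$, every summand is $\le0$ and thus $r'(t)\le0$ on $[0,1]$, establishing that $r$ is monotonically decreasing. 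The only genuinely delicate point — the main obstacle — is precisely this inability to diagonalize $\bbb$ simultaneously with $\bbm_1$ and $\bbdelta$; diagonalizing the pair $(\bbm_1,\bbdelta)$ instead, and noticing that only the \emph{diagonal} entries of the negative semi-definite matrix $(\bbi+\tilde{\bbb})^{-1}-\bbi$ get weighted by the $d_i$, is what makes the sign argument elementary. As a cross-check, the same conclusion follows abstractly from antitonicity of the matrix inverse, $\bbm_1^{-1}\succeq\bbm_2^{-1}$, together with ${\rm tr}(\bbq\bbdelta)\ge0$ for the positive semi-definite $\bbq:=\bbm_1^{-1}-\bbm_2^{-1}$ and $\bbdelta$; and, alternatively, directly from inequality (\ref{shamao_inequality_without_proof}) already recorded in the paper, applied with the shift $\bba'=\bba+t_1\bbdelta$ when comparing $r(t_1)$ and $r(t_2)$.
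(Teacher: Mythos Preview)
Your proof is correct. Your continuity argument is the same in spirit as the paper's (determinants are polynomials in $t$, strictly positive on $[0,1]$), though the paper goes through Lemma~\ref{horn_johnson_lemma_1} to exhibit the polynomial form explicitly, which is unnecessary since $|\bbe+t\bbdelta|$ is trivially a polynomial. Your monotonicity argument, however, takes a genuinely different route: the paper never computes $r'(t)$ directly but instead observes that the ratio $f(t)=|\bba+\bbb+t\bbdelta|/|\bba+t\bbdelta|$ is itself monotone decreasing by the elementary inequality~(\ref{shamao_inequality_without_proof}), whence $r(t)=\tfrac12\log f(t)$ is as well. You instead differentiate via Jacobi's formula, simultaneously diagonalize the pair $(\bbm_1,\bbdelta)$, and reduce to the nonpositivity of the diagonal entries of a negative semi-definite matrix. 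This is a bit heavier machinery than the paper needs --- and indeed your own cross-check via antitonicity of the inverse, ${\rm tr}\big((\bbm_1^{-1}-\bbm_2^{-1})\bbdelta\big)\ge 0$, is already enough without any diagonalization, and your final remark pointing to~(\ref{shamao_inequality_without_proof}) is exactly the paper's argument. So your primary path is more computational but self-contained, while the paper's is shorter because it leans on a determinant inequality already established elsewhere.
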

\begin{proof}
We first define the function inside the $\log(\cdot)$ as
\begin{align}
f(t)=\frac{|\bba+\bbb+t\bbdelta|}{|\bba+t\bbdelta|},\qquad 0\leq
t\leq1
\end{align}
We first prove the continuity of $r(t)$. To this end, consider the
function
\begin{align}
g(t)=|\bbe+t\bbdelta |,\qquad 0\leq t\leq 1
\end{align}
where $\bbe\succ\bzero$ is a real, symmetric matrix. By
Lemma~\ref{horn_johnson_lemma_1}, there exists a non-singular
matrix $\bbc$ such that both $\bbc^\top \bbe \bbc $ and $\bbc^\top
\bbdelta \bbc $ are diagonal. Thus, using this fact, we get
\begin{align}
g(t)&=\left|\bbc^{-\top}\bbc^\top\bbe\bbc\bbc^{-1}+t
\bbc^{-\top}\bbc^\top\bbdelta\bbc\bbc^{-1}\right| \\
&= \left|\bbc^{-\top}\right|\left|\bbc^\top\bbe\bbc+t
\bbc^\top\bbdelta\bbc\right|\left|\bbc^{-1}\right| \label{det_product} \\
&= \frac{1}{|\bbc|^2}\left|\bbc^\top\bbe\bbc+t
\bbc^\top\bbdelta\bbc\right| \label{det_inverse}\\
&= \frac{1}{|\bbc|^2}\left|\bbd_E+t
\bbd_{\Delta}\right|\label{diagonal_def}
\end{align}
where (\ref{det_product}) follows from the fact that $|\bba
\bbb|=|\bba||\bbb|$, (\ref{det_inverse}) comes from the fact that
$\left|\bbc^{-\top}\right|=\left|\bbc^{-1}\right|=1/|\bbc|$, and
in (\ref{diagonal_def}), we defined the diagonal matrices
$\bbd_E=\bbc^\top\bbe\bbc$, $\bbd_\Delta=\bbc^\top\bbdelta\bbc$.
Let the diagonal elements of $\bbd_E$ and $\bbd_{\Delta}$ be
$\{d_{E,i}\}_{i=1}^n$ and $\{d_{\Delta,i}\}_{i=1}^n$,
respectively. Then, $g(t)$ can be expressed as
\begin{align}
g(t)=\frac{1}{|\bbc|^2}\prod_{i=1}^n (d_{E,i}+td_{\Delta,i})
\end{align}
which is polynomial in $t$, thus $g(t)$ is continuous in $t$.
Being the ratio of two non-zero continuous functions, $f(t)$ is
continuous as well. Then, continuity of $r(t)$ follows from the
fact that composition of two continuous functions is also
continuous.

We now show the monotonicity of $r(t)$. To this end, consider the
derivative of $r(t)$
\begin{align}
\frac{dr(t)}{dt}=\frac{1}{2f(t)}\frac{df(t)}{dt}
\end{align}
where we have $f(t)>0$ because of the facts that $\bba\succ
\bzero$, $\bbb\succeq \bzero, \bbdelta \succeq \bzero$, and $0\leq
t\leq 1$. Moreover, $f(t)$ is monotonically decreasing in $t$,
which can be deduced from (\ref{shamao_inequality_without_proof}),
implying $df(t)/dt \leq 0$. Thus, we have $dr(t)/dt \leq 0$,
completing the proof.
\end{proof}

After this digression, we are ready to investigate the
implications of (\ref{bound_on_alpha_2}). For that purpose, let us
select $\bba,\bbb,\bbdelta$ in $r(t)$ in
Lemma~\ref{lemma_heart_of_the_contr} as follows
\begin{align}
\bba&=\bbj(\bbx+\bbn_2)^{-1}\\
\bbb&= \bbsigma_Z-\bbsigma_2 \\
\bbdelta &=
\bbj(\bbx+\bbn_Z)^{-1}+\bbsigma_2-\bbsigma_Z-\bbj(\bbx+\bbn_2)^{-1}
\end{align}
where clearly $\bba\succ\bzero $, $\bbb \succeq \bzero$, and also
$\bbdelta \succeq \bzero$ due to Lemma~\ref{Lemma_change_Fisher}.
With these selections, we have
\begin{align}
r(0)&=\frac{1}{2}\log \frac{|\bbj(\bbx+\bbn_2)^{-1}+\bbsigma_Z-\bbsigma_2|}{|\bbj(\bbx+\bbn_2)^{-1}|}\\
r(1)&=\frac{1}{2}\log\frac{|\bbj(\bbx+\bbn_Z)^{-1}|}{|\bbj(\bbx+\bbn_Z)^{-1}+\bbsigma_2-\bbsigma_Z|}
\end{align}
Thus, (\ref{bound_on_alpha_2}) can be expressed as
\begin{align}
r(1)\leq\alpha \leq r(0)
\end{align}
We know from Lemma~\ref{lemma_heart_of_the_contr} that $r(t)$ is
continuous in $t$. Then, from the intermediate value theorem,
there exists a $t^*$ such that $r(t^*)=\alpha$. Thus, we have
\begin{align}
\alpha=r(t^*)&=\frac{1}{2} \log \frac{|\bba+t^*
\bbdelta+\bbsigma_Z-\bbsigma_2|}{|\bba+t^* \bbdelta|}\\
&=\frac{1}{2} \log \frac{|\bbk^*+\bbsigma_Z|}{|\bbk^*+
\bbsigma_2|}
\end{align}
where $\bbk^*=\bba+t^* \bbdelta-\bbsigma_2$. Since $0\leq t^*\leq
1$, $\bbk^*$ satisfies the following orderings,
\begin{align}
 \bbj(\bbx+\bbn_2)^{-1}-\bbsigma_2\preceq \bbk^*\preceq
\bbj(\bbx+\bbn_Z)^{-1}-\bbsigma_Z\label{bound_on_Fisher}
\end{align}
which in turn, by using Lemma~\ref{Lemma_Matrix_FII} and
Corollary~\ref{matrix_fisher_information_ineq}, imply the
following orderings,
\begin{align}
\bbk^{*} & \succeq \bbj(\bbx+\bbn_2)^{-1}-\bbsigma_2 \succeq
\bbj(\bbn_2)^{-1}-\bbsigma_2=\bbsigma_2-\bbsigma_2=\bzero \\
\bbk^\star & \preceq \bbj(\bbx+\bbn_Z)^{-1}-\bbsigma_Z \preceq
{\rm Cov}(\bbx)+\bbsigma_Z-\bbsigma_Z={\rm Cov}(\bbx)\preceq \bbs
\end{align}
which can be summarized as follows,
\begin{align}
\bzero \preceq \bbk^{\star} \preceq \bbs
\end{align}
In addition, using Lemma~\ref{Lemma_change_Fisher} in
(\ref{bound_on_Fisher}), we get
\begin{align}
\bbk^{*} \succeq \bbj(\bbx+\bbn)^{-1}-\bbsigma_N
\label{bound_on_Fisher_1}
\end{align}
for any Gaussian random vector $\bbn$ such that its covariance
matrix satisfies $\bbsigma_N\preceq\bbsigma_2$. The inequality in
(\ref{bound_on_Fisher_1}) is equivalent to
\begin{align}
\bbj(\bbx+\bbn)\succeq \left(\bbk^{*} +\bbsigma_N
\right)^{-1},\quad\textrm{ for } \quad\bbsigma_N \preceq\bbsigma_2
\label{bound_on_Fisher_2}
\end{align}
where $\bbn$ is a Gaussian random vector with covariance matrix
$\bbsigma_N$.

Returning to the proof of Theorem~\ref{thm_vector_generalization},
we now lower bound
\begin{align}
h(\bbx+\bbn_Z)-(\bbx+\bbn_1) \label{lower_bound_1}
\end{align}
while keeping
\begin{align}
h(\bbx+\bbn_Z)-(\bbx+\bbn_2)=\alpha=\frac{1}{2}\log\frac{|\bbk^{*}+\bbsigma_Z|}{|\bbk^{*}+\bbsigma_2|}
\end{align}
The lower bound on (\ref{lower_bound_1}) can be obtained as
follows
\begin{align}
h(\bbx+\bbn_Z)-h(\bbx+\bbn_1)&=\frac{1}{2}
\int_{\bbsigma_1}^{\bbsigma_Z}\bbj(\bbx+\bbn)d\bbsigma_N \label{path_independence_0} \\
&=\frac{1}{2}
\int_{\bbsigma_1}^{\bbsigma_2}\bbj(\bbx+\bbn)d\bbsigma_N
+\frac{1}{2}
\int_{\bbsigma_2}^{\bbsigma_Z}\bbj(\bbx+\bbn)d\bbsigma_N \label{path_independence}\\
&=\frac{1}{2}
\int_{\bbsigma_1}^{\bbsigma_2}\bbj(\bbx+\bbn)d\bbsigma_N
+\frac{1}{2}
\log \frac{|\bbk^{*}+\bbsigma_Z|}{|\bbk^{*}+\bbsigma_2|} \\
&\geq \frac{1}{2} \int_{\bbsigma_1}^{\bbsigma_2}
(\bbk^{*}+\bbsigma_N)^{-1} d\bbsigma_N +\frac{1}{2}
\log\frac{|\bbk^{\star}+\bbsigma_Z|}{|\bbk^{*}+\bbsigma_2|} \label{constraint_implies}\\
&=\frac{1}{2} \log
\frac{|\bbk^{*}+\bbsigma_2|}{|\bbk^{*}+\bbsigma_1|} +\frac{1}{2}
\log \frac{|\bbk^{*}+\bbsigma_Z|}{|\bbk^{*}+\bbsigma_2|}\\
&=\frac{1}{2} \log
\frac{|\bbk^{*}+\bbsigma_Z|}{|\bbk^{\star}+\bbsigma_1|}
\end{align}
where (\ref{path_independence}) follows from the fact that the
integral in (\ref{path_independence_0}) is path-independent, and
(\ref{constraint_implies}) is due to Lemma~\ref{Shamai_s_lemma}
and (\ref{bound_on_Fisher_2}).

Thus, we have shown the following: For any admissible random
vector $\bbx$, we can find a positive semi-definite matrix
$\bbk^*$ such that $\bzero \preceq \bbk^* \preceq \bbs$, and
\begin{align}
h(\bbx+\bbn_Z)-(\bbx+\bbn_2)&=\frac{1}{2}\log
\frac{|\bbk^{*}+\bbsigma_Z|}{|\bbk^{*}+\bbsigma_2|}
\label{optimization_part_1}
\end{align}
and
\begin{align}
h(\bbx+\bbn_Z)-h(\bbx+\bbn_1)&\geq \frac{1}{2} \log
\frac{|\bbk^{*}+\bbsigma_Z|}{|\bbk^{*}+\bbsigma_1|}
\label{optimization_part_2}
\end{align}
which completes the proof of
Theorem~\ref{thm_vector_generalization}.

\subsection{Proof of
Theorem~\ref{thm_conditional_vector_generalization}}

\label{secr:proof_of_thm_conditional_vec_generalization}

We now adopt the proof of Theorem~\ref{thm_vector_generalization}
to the setting of
Theorem~\ref{thm_conditional_vector_generalization} by providing
the conditional versions of the tools we have used in the proof of
Theorem~\ref{thm_vector_generalization}. Main ingredients of the
proof of Theorem~\ref{thm_vector_generalization} are: the
relationship between the differential entropy and the Fisher
information matrix given in Lemma~\ref{gradient_fisher}, and the
properties of the Fisher information matrix given in
Lemmas~\ref{Lemma_Matrix_FII},~\ref{Lemma_change_Fisher} and
Corollary~\ref{matrix_fisher_information_ineq}. Thus, in this
section, we basically provide the extensions of
Lemmas~\ref{Lemma_Matrix_FII},~\ref{Lemma_change_Fisher},~\ref{gradient_fisher}
and Corollary~\ref{matrix_fisher_information_ineq} to the
conditional setting. From another point of view, the material that
we present in this section can be regarded as extending some
well-known results on the Fisher information
matrix~\cite{Liu_Extremal_Inequality,oliver_johnson} to a
conditional setting.

We start with the definition of the conditional Fisher information
matrix.

\begin{Def}
Let $(\bbu,\bbx)$ be an arbitrarily correlated length-$n$ random
vector pair with well-defined densities. The conditional Fisher
information matrix of $\bbx$ given $\bbu$ is defined as
\begin{align}
\bbj(\bbx|\bbu)=E\left[\brho(\bbx|\bbu)\brho(\bbx|\bbu)^\top\right]
\end{align}
where the expectation is over the joint density $f(\bu,\bx)$, and
the conditional score function $\brho(\bx|\bu)$ is
\begin{align}
\brho(\bx|\bu)=\nabla \log f(\bx|\bu)=\left[~\frac{\partial\log
f(\bx|\bu)}{\partial x_1}~~\ldots~~\frac{\partial\log
f(\bx|\bu)}{\partial x_n}~\right]^\top
\end{align}
\end{Def}

The following lemma extends Stein
identity~\cite{Liu_Extremal_Inequality,oliver_johnson} to a
conditional setting. We provide its proof in
Appendix~\ref{proof_of_lemma_conditional_stein_identity}.

\begin{Lem}[Conditional Stein Identity]
\label{lemma_conditional_stein_identity} Let $\bbu,\bbx$ be as
specified above. Consider a \break smooth scalar-valued function
of $\bx$, $g(\bx)$, which well-behaves at infinity in the sense
that
\begin{align}
\lim_{x_i\rightarrow \pm \infty} g(\bx) f(\bx|\bu)=0,\qquad
i=1,\ldots,n \label{assumption_of_the_lemma}
\end{align}
For such a $g(\bx)$, we have
\begin{align}
E\left[g(\bbx)\brho(\bbx|\bbu)\right]=-E\left[\nabla
g(\bbx)\right]
\end{align}
\end{Lem}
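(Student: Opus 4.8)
The plan is to establish the vector identity one coordinate at a time by integration by parts, treating the conditioning variable $\bbu$ as a spectator that is simply carried along. First I would fix an index $i\in\{1,\ldots,n\}$ and unfold the $i$th component of the left-hand side, writing the joint expectation as an iterated integral with the factorization $f(\bx,\bu)=f(\bx|\bu)f(\bu)$:
\[
E\left[g(\bbx)\frac{\partial \log f(\bbx|\bbu)}{\partial x_i}\right]
= \int f(\bu)\left(\int g(\bx)\frac{\partial \log f(\bx|\bu)}{\partial x_i}\, f(\bx|\bu)\,d\bx\right)d\bu .
\]

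The key algebraic simplification is that the conditional score multiplied by the conditional density collapses to a plain partial derivative of that density,
\[
\frac{\partial \log f(\bx|\bu)}{\partial x_i}\, f(\bx|\bu)
= \frac{\partial f(\bx|\bu)}{\partial x_i},
\]
so the inner integrand reduces to $g(\bx)\,\partial f(\bx|\bu)/\partial x_i$. I would then integrate by parts in the single variable $x_i$, holding $\bu$ and the remaining coordinates of $\bx$ fixed:
\[
\int g(\bx)\frac{\partial f(\bx|\bu)}{\partial x_i}\,d\bx
= \Big[g(\bx) f(\bx|\bu)\Big]_{x_i=-\infty}^{x_i=+\infty}
- \int \frac{\partial g(\bx)}{\partial x_i}\, f(\bx|\bu)\,d\bx .
\]
The boundary term vanishes by hypothesis (\ref{assumption_of_the_lemma}), so the inner integral equals $-\int (\partial g(\bx)/\partial x_i)\, f(\bx|\bu)\,d\bx$. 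Reinstating the outer integral against $f(\bu)$ recovers exactly $-E[\partial g(\bbx)/\partial x_i]$, and assembling the $n$ components produces the claimed vector identity $E[g(\bbx)\brho(\bbx|\bbu)]=-E[\nabla g(\bbx)]$.

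The main obstacle is purely analytic rather than structural: it is the justification of the integration by parts and, in particular, the vanishing of the boundary terms. This is precisely the role of the well-behaved-at-infinity condition (\ref{assumption_of_the_lemma}), so once the expectation is expressed as the iterated integral and the score/density product is simplified, the argument localizes to a one-dimensional integration by parts in $x_i$ with a controlled boundary contribution. A minor point of care is to keep $\bbu$ frozen throughout the inner integration so that $f(\bx|\bu)$ is handled as a genuine density in $\bx$; beyond an appeal to Fubini to separate the $\bu$- and $\bx$-integrations, no differentiation under the integral sign or interchange of limits is needed, which keeps the regularity requirements light and lets the conditional proof parallel the unconditional Stein identity almost verbatim.
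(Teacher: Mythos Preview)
Your proof is correct and follows essentially the same approach as the paper's: work componentwise, use the factorization $f(\bx,\bu)=f(\bx|\bu)f(\bu)$ to collapse the score times the conditional density into $\partial f(\bx|\bu)/\partial x_i$, then integrate by parts in $x_i$ with the boundary term killed by the assumption. The only cosmetic difference is that the paper explicitly separates the $x_i$-integral from the remaining $d\bx^{-}\,d\bu$ integration before performing the integration by parts, which is exactly the Fubini step you describe.
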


The following implications of this lemma are important for the
upcoming proofs.

\begin{Cor}
\label{cor_cond_stein_implies} Let $\bbu,\bbx$ be as specified
above.
\begin{enumerate}
\item $E\left[\brho(\bbx|\bbu)\right]=\bzero$ \item $E\left[\bbx
\brho(\bbx|\bbu)^{\top}\right]=-\bbi$
\end{enumerate}
\end{Cor}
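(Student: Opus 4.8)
The plan is to obtain both identities as immediate specializations of the Conditional Stein Identity (Lemma~\ref{lemma_conditional_stein_identity}), by feeding it two canonical choices of the test function $g(\bx)$. For the first part, I would take $g(\bx)=1$, the constant function. This choice satisfies the boundary hypothesis (\ref{assumption_of_the_lemma}) for free, since any well-defined conditional density $f(\bx|\bu)$ must vanish as $x_i\rightarrow\pm\infty$. Because $\nabla g(\bbx)=\bzero$, Lemma~\ref{lemma_conditional_stein_identity} yields directly $E[\brho(\bbx|\bbu)]=-E[\nabla g(\bbx)]=\bzero$, which is the first claim.

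For the second part, I would apply the lemma one coordinate at a time. Fixing an index $i$ and taking $g(\bx)=x_i$, the gradient is $\nabla x_i=\mathbf{e}_i$, the $i$th standard basis vector, so Lemma~\ref{lemma_conditional_stein_identity} gives the vector identity $E[x_i\,\brho(\bbx|\bbu)]=-\mathbf{e}_i$. Reading off its $j$th component produces $E[x_i\,\rho_j(\bbx|\bbu)]=-\delta_{ij}$, where $\rho_j$ denotes the $j$th entry of the score function. Since $E[x_i\,\rho_j(\bbx|\bbu)]$ is exactly the $(i,j)$ entry of $E[\bbx\,\brho(\bbx|\bbu)^\top]$ and $-\delta_{ij}$ is the $(i,j)$ entry of $-\bbi$, collecting these equalities over all $i,j$ establishes $E[\bbx\,\brho(\bbx|\bbu)^\top]=-\bbi$, the second claim.

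The only point that needs any care is checking the regularity hypothesis (\ref{assumption_of_the_lemma}) for each test function. For $g=1$ this is just the standard decay of a density; for $g(\bx)=x_i$ the condition becomes $\lim_{x_j\rightarrow\pm\infty} x_i f(\bx|\bu)=0$ for every $j$, which holds whenever $\bbx$ has finite first moments — an assumption already implicit in the existence of the relevant expectations. I therefore do not expect a genuine obstacle here; the entire content of the corollary is organizational, repackaging the single master identity of Lemma~\ref{lemma_conditional_stein_identity} through the two simplest admissible choices of $g$.
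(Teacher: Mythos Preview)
Your proposal is correct and matches the paper's own proof exactly: the paper simply states that the two parts follow from Lemma~\ref{lemma_conditional_stein_identity} by taking $g(\bx)=1$ and $g(\bx)=x_i$, respectively. Your version merely spells out the coordinate bookkeeping and the regularity check in more detail than the paper does.
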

\begin{proof}
The first and the second parts of the corollary follow from the
previous lemma by selecting $g(\bx)=1$ and $g(\bx)=x_i$,
respectively.
\end{proof}

We also need the following variation of this corollary whose proof
is given in Appendix~\ref{proof_of_lemma_cond_stein_implies}.

\begin{Lem}
\label{lemma_cond_stein_implies} Let $\bbu,\bbx$ be as specified
above. Then, we have
\begin{enumerate}
\item $E\left[\brho(\bbx|\bbu)|\bbu\right]=\bzero$.
\item Let
$g(\bu)$ be a finite, scalar-valued function of $\bu$. For such a
$g(\bu)$, we have
\begin{align}
E\left[g(\bbu)\brho(\bbx|\bbu)\right]=\bzero
\end{align}
\item Let $E\left[\bbx|\bbu\right]$ be finite, then we have
\begin{align}
E\left[E\left[\bbx|\bbu\right]\brho(\bbx|\bbu)^\top\right]=\bzero
\end{align}
\end{enumerate}
\end{Lem}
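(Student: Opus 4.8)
The plan is to prove the first part by a direct conditional integration, and then to obtain the second and third parts as immediate consequences of the tower property of conditional expectation.

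For the first part, I would fix $\bu$ and compute $E[\brho(\bbx|\bbu)|\bbu=\bu]$ component by component. The $i$th entry is $\int \frac{\partial \log f(\bx|\bu)}{\partial x_i} f(\bx|\bu)\, d\bx$, and using the elementary identity $\frac{\partial \log f(\bx|\bu)}{\partial x_i} f(\bx|\bu)=\frac{\partial f(\bx|\bu)}{\partial x_i}$ this collapses to $\int \frac{\partial f(\bx|\bu)}{\partial x_i}\, d\bx$. Integrating the $x_i$ coordinate out first and invoking the well-behaved-at-infinity hypothesis in the $g(\bx)\equiv 1$ case of Lemma~\ref{lemma_conditional_stein_identity}, namely $f(\bx|\bu)\to 0$ as $x_i\to\pm\infty$, makes the resulting boundary term vanish, so $E[\brho(\bbx|\bbu)|\bbu]=\bzero$. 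This is precisely the conditional analogue of the zero-mean property of the score function, and it is in fact the fixed-$\bu$ step underlying the proof of the Conditional Stein Identity itself.

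For the second part I would apply iterated expectation, writing $E[g(\bbu)\brho(\bbx|\bbu)]=E\big[E[g(\bbu)\brho(\bbx|\bbu)\,|\,\bbu]\big]$; since $g(\bbu)$ is a function of $\bbu$, it pulls out of the inner expectation to give $E\big[g(\bbu)\,E[\brho(\bbx|\bbu)|\bbu]\big]$, which is $\bzero$ by the first part. The third part is the same argument carried out entrywise, with the $\bbu$-measurable factor $E[\bbx|\bbu]$ now playing the role of $g(\bbu)$: conditioning on $\bbu$ and pulling $E[\bbx|\bbu]$ out of the inner expectation reduces the quantity to $E\big[E[\bbx|\bbu]\,E[\brho(\bbx|\bbu)^\top|\bbu]\big]$, and the inner factor equals $\bzero^\top$ again by the first part.

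The only place where anything beyond formal manipulation is needed is the first part, and even there the sole subtlety is the justification of the vanishing boundary term; all the smoothness and integrability required are already subsumed in the standing density assumptions on $(\bbu,\bbx)$ together with the condition at infinity borrowed from Lemma~\ref{lemma_conditional_stein_identity}. Given the first part, parts two and three are purely mechanical, relying only on the tower property and the $\bbu$-measurability of $g(\bbu)$ and $E[\bbx|\bbu]$, so I do not anticipate any genuine obstacle.
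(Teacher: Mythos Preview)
Your proposal is correct and follows essentially the same route as the paper's proof: part~1 by direct conditional integration with a vanishing boundary term, part~2 by the tower property pulling $g(\bbu)$ outside, and part~3 by applying part~2 componentwise with $g(\bbu)=E[x_i|\bbu]$. If anything, your entrywise treatment of part~3 is slightly more explicit than the paper's, which simply invokes part~2 with $g(\bbu)=E[\bbx|\bbu]$ without remarking that this must be read one coordinate at a time.
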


We are now ready to prove the conditional version of the
Cramer-Rao inequality, i.e., the generalization of the first part
of Lemma~\ref{Lemma_Matrix_FII} to a conditional setting.

\begin{Lem}[Conditional Cramer-Rao Inequality]
\label{lemma_conditional_crb_vector} Let $\bbu,\bbx$ be
arbitrarily correlated random vectors with well-defined densities.
Let the conditional covariance matrix of $\bbx$ be ${\rm
Cov}(\bbx|\bbu)\succ \bzero$, then we have
\begin{align}
\bbj(\bbx|\bbu)\succeq {\rm Cov}(\bbx|\bbu)^{-1}
\end{align}
which is satisfied with equality if $(\bbu,\bbx)$ is jointly
Gaussian with conditional covariance matrix ${\rm
Cov}(\bbx|\bbu)$.
\end{Lem}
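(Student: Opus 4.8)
The plan is to prove the matrix inequality by exhibiting an appropriate positive semi-definite covariance matrix and then applying a Schur complement argument. This parallels the classical proof of the unconditional matrix Cramer-Rao bound, but with the conditional Stein-type identities developed above playing the role of their unconditional counterparts.

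First I would introduce the conditionally centered vector $\tilde{\bbx}=\bbx-E[\bbx|\bbu]$ together with the conditional score $\brho(\bbx|\bbu)$, and record the relevant joint moments. By Corollary~\ref{cor_cond_stein_implies}, we have $E[\brho(\bbx|\bbu)]=\bzero$ and $E[\bbx\brho(\bbx|\bbu)^\top]=-\bbi$, while the third part of Lemma~\ref{lemma_cond_stein_implies} gives $E[E[\bbx|\bbu]\brho(\bbx|\bbu)^\top]=\bzero$. Subtracting the last identity from the second yields the key cross-moment
\begin{align}
E\left[\tilde{\bbx}\,\brho(\bbx|\bbu)^\top\right]=-\bbi
\end{align}
Combined with $E[\tilde{\bbx}\tilde{\bbx}^\top]={\rm Cov}(\bbx|\bbu)$ and the definition $E[\brho(\bbx|\bbu)\brho(\bbx|\bbu)^\top]=\bbj(\bbx|\bbu)$, these are precisely the blocks of the covariance matrix of the zero-mean stacked vector $\big[\tilde{\bbx}^\top~~\brho(\bbx|\bbu)^\top\big]^\top$.

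Since every covariance matrix is positive semi-definite, I would then write
\begin{align}
\begin{bmatrix} {\rm Cov}(\bbx|\bbu) & -\bbi \\ -\bbi & \bbj(\bbx|\bbu) \end{bmatrix}\succeq\bzero
\end{align}
and, using the hypothesis ${\rm Cov}(\bbx|\bbu)\succ\bzero$, take the Schur complement of the upper-left block. The Schur complement criterion for positive semi-definiteness then delivers $\bbj(\bbx|\bbu)-(-\bbi)^\top{\rm Cov}(\bbx|\bbu)^{-1}(-\bbi)=\bbj(\bbx|\bbu)-{\rm Cov}(\bbx|\bbu)^{-1}\succeq\bzero$, which is exactly the claimed inequality. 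For the equality statement I would specialize to jointly Gaussian $(\bbu,\bbx)$: there $f(\bx|\bu)$ is Gaussian with mean $E[\bbx|\bbu]$ and constant conditional covariance ${\rm Cov}(\bbx|\bbu)$, so the score takes the explicit linear form $\brho(\bbx|\bbu)=-{\rm Cov}(\bbx|\bbu)^{-1}\tilde{\bbx}$, and substituting this into the definition of $\bbj(\bbx|\bbu)$ collapses it to ${\rm Cov}(\bbx|\bbu)^{-1}$, attaining equality.

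The algebraic steps are routine once the moments are in hand; the only part that genuinely relies on the earlier development is the cross-moment $E[\tilde{\bbx}\,\brho(\bbx|\bbu)^\top]=-\bbi$, which is where the conditional Stein identity and its variant in Lemma~\ref{lemma_cond_stein_implies} do the real work, replacing unconditional expectations by conditional ones without generating extra terms. I therefore expect the main point to watch to be the validity of those identities rather than the Schur complement manipulation: the integration-by-parts underlying the Stein identity must be justified in the conditional setting through the decay hypothesis analogous to (\ref{assumption_of_the_lemma}). Once the three moment relations are established, the positivity of the block covariance matrix and the Schur complement reduction make the conclusion immediate.
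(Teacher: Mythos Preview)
Your proof is correct and is essentially the same argument as the paper's, packaged slightly differently. The paper expands the outer product $E\big[(\brho(\bbx|\bbu)+{\rm Cov}(\bbx|\bbu)^{-1}\tilde{\bbx})(\brho(\bbx|\bbu)+{\rm Cov}(\bbx|\bbu)^{-1}\tilde{\bbx})^\top\big]\succeq\bzero$ directly and simplifies using the same three moment identities you list, arriving at $\bbj(\bbx|\bbu)-{\rm Cov}(\bbx|\bbu)^{-1}\succeq\bzero$; your block-matrix plus Schur complement formulation is exactly this computation in disguise, since the Schur complement of ${\rm Cov}(\bbx|\bbu)$ in your $2\times 2$ block matrix is precisely the expectation of that outer product. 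The equality case is handled identically in both.
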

\begin{proof}
We first prove the inequality
\begin{align}
\bzero & \preceq E\bigg[ \Big(\brho(\bbx|\bbu)+{\rm
Cov}(\bbx|\bbu)^{-1}
\big(\bbx-E\left[\bbx|\bbu\right]\big)\Big)\nonumber\\
&\hspace{2cm}\left. \Big(\brho(\bbx|\bbu)+{\rm
Cov}(\bbx|\bbu)^{-1}
\big(\bbx-E\left[\bbx|\bbu\right]\big)\Big)^\top \right] \\
&= E\left[\brho(\bbx|\bbu)\brho(\bbx|\bbu)^\top \right] +
E\left[\brho(\bbx|\bbu)
\big(\bbx-E\left[\bbx|\bbu\right]\big)^\top \right] {\rm
Cov}(\bbx|\bbu)^{-1}\nonumber\\
&\quad  + {\rm Cov}(\bbx|\bbu)^{-1} E\left[
\big(\bbx-E\left[\bbx|\bbu\right]\big)\brho(\bbx|\bbu)^\top
\right] \nonumber\\
&\quad + {\rm Cov}(\bbx|\bbu)^{-1} E\left[
\big(\bbx-E\left[\bbx|\bbu\right]\big)\big(\bbx-E\left[\bbx|\bbu\right]\big)^\top
\right] {\rm Cov}(\bbx|\bbu)^{-1}\\
&= \bbj(\bbx|\bbu) + E\left[\brho(\bbx|\bbu)
\big(\bbx-E\left[\bbx|\bbu\right]\big)^\top \right] {\rm
Cov}(\bbx|\bbu)^{-1}\nonumber\\
&\quad  + {\rm Cov}(\bbx|\bbu)^{-1} E\left[
\big(\bbx-E\left[\bbx|\bbu\right]\big)\brho(\bbx|\bbu)^\top
\right] +{\rm Cov}(\bbx|\bbu)^{-1} \label{cond_crb_proof_1}
\end{align}
where for the second equality, we used the definition of the
conditional Fisher information matrix, and the conditional
covariance matrix. We note that
\begin{align}
\left(E\left[
\big(\bbx-E\left[\bbx|\bbu\right]\big)\brho(\bbx|\bbu)^\top
\right]\right)^\top&= E\left[\brho(\bbx|\bbu)
\big(\bbx-E\left[\bbx|\bbu\right]\big)^\top \right] \\
&= E\left[\brho(\bbx|\bbu) \bbx^\top
\right]-E\left[\brho(\bbx|\bbu)
E\left[\bbx|\bbu\right]^\top \right] \\
&= E\left[\brho(\bbx|\bbu) \bbx^\top \right] \label{cor_stein_implies_1}\\
&=-\bbi \label{cor_stein_implies_2}
\end{align}
where (\ref{cor_stein_implies_1}) is due to the third part of
Lemma~\ref{lemma_cond_stein_implies}, and
(\ref{cor_stein_implies_2}) is a result of the second part of
Corollary~\ref{cor_cond_stein_implies}. Using
(\ref{cor_stein_implies_2}) in (\ref{cond_crb_proof_1}) gives
\begin{align}
\bzero & \preceq \bbj(\bbx|\bbu) - {\rm Cov}(\bbx|\bbu)^{-1} -{\rm
Cov}(\bbx|\bbu)^{-1}+{\rm Cov}(\bbx|\bbu)^{-1}
\end{align}
which concludes the proof.

For the equality case, consider the conditional Gaussian
distribution
\begin{align}
f(\bx|\bu)=C \exp
\left(-\frac{1}{2}\big(\bx-E\left[\bbx|\bbu=\bu\right]\big)^\top{\rm
Cov}(\bbx|\bbu)^{-1}\big(\bx-E\left[\bbx|\bbu=\bu\right]\big)\right)
\end{align}
where $C$ is the normalizing factor. The conditional score
function is
\begin{align}
\brho(\bx|\bu)=-{\rm
Cov}(\bbx|\bbu)^{-1}\big(\bx-E\left[\bbx|\bbu=\bu\right]\big)
\end{align}
which implies $\bbj(\bbx|\bbu)={\rm Cov}(\bbx|\bbu)^{-1}$.
\end{proof}

We now present the conditional convolution identity which is
crucial to extend the second part of Lemma~\ref{Lemma_Matrix_FII}
to a conditional setting.

\begin{Lem}[Conditional Convolution Identity]
\label{lemma_cond_conv_identity} Let $\bbx,\bby,\bbu$ be
length-$n$ random vectors and let the density for any combination
of these random vectors exist. Moreover, let $\bbx$ and $\bby$ be
conditionally independent given $\bbu$, and let $\bbw$ be defined
as $\bbw=\bbx+\bby$. Then, we have
\begin{align}
\brho
(\bw|\bu)=E\left[\brho(\bbx|\bbu=\bu)|\bbw=\bw,\bbu=\bu\right]
=E\left[\brho(\bby|\bbu=\bu)|\bbw=\bw,\bbu=\bu\right]
\end{align}
\end{Lem}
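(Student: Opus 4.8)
The plan is to freeze the conditioning variable at $\bbu=\bu$ and reduce the statement to the classical (unconditional) convolution identity for the score function, which I then verify directly by differentiating under the integral sign. The key structural input, supplied entirely by the hypothesis, is that conditional independence of $\bbx$ and $\bby$ given $\bbu$ makes the conditional density of $\bbw=\bbx+\bby$ a convolution of the conditional densities: treating $\bu$ as a fixed parameter and changing variables $(\bx,\by)\mapsto(\bx,\bw)$ with $\bw=\bx+\by$ (whose Jacobian is one), the conditional density of $(\bbx,\bbw)$ given $\bbu=\bu$ is $f(\bx|\bu)f(\bw-\bx|\bu)$, and hence
\begin{align}
f(\bw|\bu)=\int f(\bw-\by|\bu)\,f(\by|\bu)\,d\by \nonumber
\end{align}

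First I would differentiate this representation with respect to $\bw$. Interchanging $\nabla_{\bw}$ with the integral and using that $\nabla_{\bx}f(\bx|\bu)=f(\bx|\bu)\,\brho(\bx|\bu)$ directly from the definition of the conditional score function, I obtain
\begin{align}
\nabla_{\bw}f(\bw|\bu)=\int \brho(\bw-\by|\bu)\,f(\bw-\by|\bu)\,f(\by|\bu)\,d\by \nonumber
\end{align}
Dividing both sides by $f(\bw|\bu)$ produces $\brho(\bw|\bu)$ on the left. On the right, the factorization of the joint conditional density, $f(\bx|\bw,\bu)=f(\bx|\bu)f(\bw-\bx|\bu)/f(\bw|\bu)$ (again a consequence of conditional independence), lets me recognize the integral, after the substitution $\bx=\bw-\by$, as precisely $E\left[\brho(\bbx|\bbu=\bu)\,\middle|\,\bbw=\bw,\bbu=\bu\right]$. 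This establishes the first equality. The second follows by symmetry, i.e.\ by starting instead from the representation $f(\bw|\bu)=\int f(\bw-\bx|\bu)f(\bx|\bu)\,d\bx$ and repeating the argument with the roles of $\bbx$ and $\bby$ exchanged.

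The main obstacle is the analytic justification for interchanging differentiation and integration; this rests on the smoothness and integrability of the conditional densities that are implicit in the standing assumption that all relevant densities exist and are differentiable, and it is handled exactly as in the unconditional convolution identity. Beyond this, the only genuinely new content relative to the classical result is bookkeeping: one must confirm that both the convolution structure for $f(\bw|\bu)$ and the factorization of $f(\bx|\bw,\bu)$ persist under the conditioning on $\bbu$, and both follow immediately from the conditional independence of $\bbx$ and $\bby$ given $\bbu$.
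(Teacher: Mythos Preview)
Your proposal is correct and follows essentially the same approach as the paper: differentiate the convolution representation of $f(\bw|\bu)$, divide by $f(\bw|\bu)$, and recognize the result as the conditional expectation of the score via the factorization $f(\bx|\bw,\bu)=f(\bx|\bu)f(\bw-\bx|\bu)/f(\bw|\bu)$. The only cosmetic difference is that the paper writes the convolution with $\bx$ as the integration variable, so $\bw$ appears inside $f_Y(\bw-\bx|\bu)$; it then converts $\partial/\partial w_i$ to $-\partial/\partial x_i$ and integrates by parts to shift the derivative onto $f_X$. Your choice of $\by$ as the dummy variable places $\bw$ directly in the $\bbx$-density and yields the score of $\bbx$ immediately, sparing the integration-by-parts step.
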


The proof of this lemma is given in
Appendix~\ref{proof_of_lemma_cond_conv_identity}. We will use this
lemma to prove the conditional Fisher information matrix
inequality, i.e., the generalization of the second part of
Lemma~\ref{Lemma_Matrix_FII}.

\begin{Lem}[Conditional Fisher Information Matrix Inequality]
\label{lemma_cond_matrix_fii} Let $\bbx,\bby,\bbu$ be as specified
in the previous lemma. For any square matrix $\bba$, we have
\begin{align}
\bbj(\bbx+\bby|\bbu)\preceq \bba
\bbj(\bbx|\bbu)\bba^\top+\left(\bbi-\bba\right)
\bbj(\bby|\bbu)\left(\bbi-\bba\right)^\top
\end{align}
\end{Lem}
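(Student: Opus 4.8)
The plan is to mimic the unconditional proof of the second part of Lemma~\ref{Lemma_Matrix_FII}, replacing every score function and every expectation by its conditional counterpart, with the conditional convolution identity (Lemma~\ref{lemma_cond_conv_identity}) playing the role that the ordinary convolution identity plays in the unconditional argument. Write $\bbw=\bbx+\bby$. The starting observation is that, since Lemma~\ref{lemma_cond_conv_identity} supplies both $\brho(\bbw|\bbu)=E[\brho(\bbx|\bbu)\,|\,\bbw,\bbu]$ and $\brho(\bbw|\bbu)=E[\brho(\bby|\bbu)\,|\,\bbw,\bbu]$, and since the coefficient matrices $\bba$ and $\bbi-\bba$ sum to $\bbi$, I would form the affine combination
\begin{align*}
\brho(\bbw|\bbu)=E\left[\bba\,\brho(\bbx|\bbu)+(\bbi-\bba)\,\brho(\bby|\bbu)\,\big|\,\bbw,\bbu\right].
\end{align*}
Thus the conditional score of the sum is the conditional expectation, given $(\bbw,\bbu)$, of the vector $\bm{\xi}:=\bba\,\brho(\bbx|\bbu)+(\bbi-\bba)\,\brho(\bby|\bbu)$.

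The second step is the elementary fact that conditioning cannot increase a second-moment matrix in the positive semi-definite order: for the projection $\bm{\eta}=E[\bm{\xi}\,|\,\bbw,\bbu]$, the law of total covariance gives $E[\bm{\xi}\bm{\xi}^\top]=E[\bm{\eta}\bm{\eta}^\top]+E[(\bm{\xi}-\bm{\eta})(\bm{\xi}-\bm{\eta})^\top]\succeq E[\bm{\eta}\bm{\eta}^\top]$. Here $\bm{\eta}=\brho(\bbw|\bbu)$, so $E[\bm{\eta}\bm{\eta}^\top]=\bbj(\bbw|\bbu)$ by the definition of the conditional Fisher information matrix, and therefore
\begin{align*}
\bbj(\bbw|\bbu)\preceq E[\bm{\xi}\bm{\xi}^\top].
\end{align*}
Expanding the right-hand side produces four terms: the two diagonal terms yield $\bba\,\bbj(\bbx|\bbu)\,\bba^\top$ and $(\bbi-\bba)\,\bbj(\bby|\bbu)\,(\bbi-\bba)^\top$ directly from the definition of $\bbj(\cdot|\bbu)$, while the two cross terms are both governed by the matrix $E[\brho(\bbx|\bbu)\,\brho(\bby|\bbu)^\top]$.

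It remains to show that this cross term vanishes, which is where the conditional independence hypothesis does the work. Conditioning on $\bbu$ first, and using that $\bbx$ and $\bby$ are conditionally independent given $\bbu$ (so that $\brho(\bbx|\bbu)$ and $\brho(\bby|\bbu)$ become functions of independent variables once $\bbu$ is fixed), the inner expectation factors as
\begin{align*}
E\left[\brho(\bbx|\bbu)\,\brho(\bby|\bbu)^\top\,\big|\,\bbu\right]=E\left[\brho(\bbx|\bbu)\,\big|\,\bbu\right]\,E\left[\brho(\bby|\bbu)^\top\,\big|\,\bbu\right],
\end{align*}
and each factor is $\bzero$ by the first part of Lemma~\ref{lemma_cond_stein_implies}. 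Averaging over $\bbu$ then kills both cross terms, leaving exactly the claimed inequality. I expect the only delicate point to be this cross-term computation: it is precisely the step that requires conditional, rather than ordinary, independence, since the factorization is valid only after conditioning on $\bbu$, and it relies on the conditional zero-mean property of the score. Given the conditional convolution identity and Lemma~\ref{lemma_cond_stein_implies} already in hand, the rest of the argument is a routine matrix expansion, so no further obstacle is anticipated.
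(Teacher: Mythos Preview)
Your proposal is correct and follows essentially the same approach as the paper's proof. The paper expands the nonnegative matrix $E\big[(\bba\brho(\bbx|\bbu)+(\bbi-\bba)\brho(\bby|\bbu)-\brho(\bbw|\bbu))(\cdot)^\top\big]\succeq\bzero$ term by term, whereas you package the same computation via the law of total covariance after identifying $\brho(\bbw|\bbu)=E[\bm{\xi}\,|\,\bbw,\bbu]$; the key ingredients---the conditional convolution identity for the score of $\bbw$, and the vanishing of the cross term $E[\brho(\bbx|\bbu)\brho(\bby|\bbu)^\top]$ via conditional independence and the first part of Lemma~\ref{lemma_cond_stein_implies}---are identical in both arguments.
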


The proof of this lemma is given in
Appendix~\ref{proof_of_lemma_cond_matrix_fii}. The following
implications of Lemma~\ref{lemma_cond_matrix_fii} correspond to
the conditional version of
Corollary~\ref{matrix_fisher_information_ineq}.

\begin{Cor}
Let $\bbx,\bby,\bbu$ be as specified in the previous lemma. Then,
we have
\begin{enumerate}
\item $\bbj(\bbx+\bby|\bbu)\preceq \bbj(\bbx|\bbu)$ \item
$\bbj(\bbx+\bby|\bbu)\preceq
\left[\bbj(\bbx|\bbu)^{-1}+\bbj(\bby|\bbu)^{-1}\right]^{-1}$
\end{enumerate}
\end{Cor}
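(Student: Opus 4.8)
The plan is to mirror exactly the proof of the unconditional statement in Corollary~\ref{matrix_fisher_information_ineq}, replacing the unconditional Fisher information matrix inequality (the second part of Lemma~\ref{Lemma_Matrix_FII}) with its conditional counterpart, Lemma~\ref{lemma_cond_matrix_fii}. The entire corollary will be obtained as a direct specialization of that lemma to two particular choices of the free square matrix $\bba$, so there is essentially no new argument to construct.

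For the first part, I would simply set $\bba=\bbi$ in Lemma~\ref{lemma_cond_matrix_fii}. The term $(\bbi-\bba)\bbj(\bby|\bbu)(\bbi-\bba)^\top$ then vanishes, leaving immediately $\bbj(\bbx+\bby|\bbu)\preceq\bbj(\bbx|\bbu)$.

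For the second part, I would choose
\[
\bba=\big[\bbj(\bbx|\bbu)^{-1}+\bbj(\bby|\bbu)^{-1}\big]^{-1}\bbj(\bbx|\bbu)^{-1},
\]
exactly as in the unconditional proof. Writing $M=\big[\bbj(\bbx|\bbu)^{-1}+\bbj(\bby|\bbu)^{-1}\big]^{-1}$, this $\bba$ equals $M\bbj(\bbx|\bbu)^{-1}$, and a one-line computation gives $\bbi-\bba=M\bbj(\bby|\bbu)^{-1}$. Substituting into the right-hand side of Lemma~\ref{lemma_cond_matrix_fii}, the two quadratic terms collapse to $M\bbj(\bbx|\bbu)^{-1}M+M\bbj(\bby|\bbu)^{-1}M=M\big(\bbj(\bbx|\bbu)^{-1}+\bbj(\bby|\bbu)^{-1}\big)M=MM^{-1}M=M$, which is precisely the claimed bound.

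The only facts beyond the unconditional argument that I must verify are that the conditional Fisher information matrices are symmetric and invertible. Symmetry is immediate from the definition $\bbj(\bbx|\bbu)=E\big[\brho(\bbx|\bbu)\brho(\bbx|\bbu)^\top\big]$, an expectation of outer products; this is exactly what lets me pass transposes freely through $M$ and the factors $\bbj(\cdot|\bbu)^{-1}$ in the collapse above. Invertibility, needed only for the second part, follows from the conditional Cramer--Rao inequality (Lemma~\ref{lemma_conditional_crb_vector}), which gives $\bbj(\bbx|\bbu)\succeq{\rm Cov}(\bbx|\bbu)^{-1}\succ\bzero$ whenever the conditional covariance is positive definite. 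Since the whole statement reduces to this algebraic identity, I do not expect any genuine obstacle; the one place to be mildly careful is tracking transposes in the collapse, which is handled entirely by the symmetry remark.
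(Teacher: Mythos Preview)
Your proposal is correct and follows essentially the same approach as the paper: set $\bba=\bbi$ for the first part and $\bba=\big[\bbj(\bbx|\bbu)^{-1}+\bbj(\bby|\bbu)^{-1}\big]^{-1}\bbj(\bbx|\bbu)^{-1}$ for the second, exactly mirroring the unconditional Corollary~\ref{matrix_fisher_information_ineq}. The paper's proof is terser and omits the explicit collapse computation and the symmetry/invertibility remarks you include, but the argument is identical.
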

\begin{proof}
The first part of the corollary can be obtained by setting
$\bba=\bbi$ in the previous lemma. For the second part, the
selection
$\bba=\left[\bbj(\bbx|\bbu)^{-1}+\bbj(\bby|\bbu)^{-1}\right]^{-1}\bbj(\bbx|\bbu)^{-1}$
yields the desired result.
\end{proof}

Using this corollary, one can prove the conditional version of
Lemma~\ref{Lemma_change_Fisher} as well, which is omitted. So far,
we have proved the conditional versions of the inequalities
related to the Fisher information matrix, that were used in the
proof of Theorem~\ref{thm_vector_generalization}. To claim that
the proof of Theorem~\ref{thm_vector_generalization} can be
adapted for Theorem~\ref{thm_conditional_vector_generalization},
we only need the conditional version of
Lemma~\ref{gradient_fisher}. In~\cite{Palomar_Gradient}, the
following result is implicity present.
\begin{Lem}
\label{gradient_fisher_conditional} Let $(\bbu,\bbx)$ be an
arbitrarily correlated random vector pair with finite second order
moments, and be independent of the random vector $\bbn$ which is
zero-mean Gaussian with covariance matrix $\bbsigma_N\succ\bzero$.
Then, we have
\begin{align}
\nabla_{\bbsigma_N} h(\bbx+\bbn|\bbu)=\frac{1}{2}
\bbj(\bbx+\bbn|\bbu)
\end{align}
\end{Lem}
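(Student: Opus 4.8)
The plan is to reduce this conditional identity to the unconditional gradient identity of Lemma~\ref{gradient_fisher} by conditioning on a fixed realization $\bbu=\bu$, applying the unconditional result to the conditional law $f(\bx|\bu)$, and then averaging over $\bbu$. The key observation making this work is that both the conditional differential entropy and the conditional Fisher information matrix are, by their very definitions, expectations over $\bbu$ of the corresponding quantities computed for the conditional density $f(\cdot|\bu)$.

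First I would fix $\bbu=\bu$. Since $(\bbu,\bbx)$ is independent of $\bbn$, the vector $\bbx$ conditioned on $\bbu=\bu$ is independent of $\bbn$, and for almost every $\bu$ the conditional density $f(\bx|\bu)$ has finite second-order moments, inherited from the finiteness of the second-order moments of the joint law. Hence Lemma~\ref{gradient_fisher} applies to the pair formed by the law $f(\cdot|\bu)$ and the Gaussian $\bbn$, yielding the pointwise identity
\begin{align}
\nabla_{\bbsigma_N} h(\bbx+\bbn|\bbu=\bu)=\frac{1}{2}\bbj(\bbx+\bbn|\bbu=\bu)
\end{align}
where $h(\bbx+\bbn|\bbu=\bu)$ and $\bbj(\bbx+\bbn|\bbu=\bu)$ denote the differential entropy and the Fisher information matrix of the density $f(\cdot|\bu)$ convolved with the density of $\bbn$.

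Next I would take the expectation over $\bbu$. Using $h(\bbx+\bbn|\bbu)=E_{\bbu}\left[h(\bbx+\bbn|\bbu=\bu)\right]$ together with the definition of the conditional Fisher information matrix, which gives $\bbj(\bbx+\bbn|\bbu)=E_{\bbu}\left[\bbj(\bbx+\bbn|\bbu=\bu)\right]$, it remains only to move the gradient $\nabla_{\bbsigma_N}$ inside the expectation over $\bbu$:
\begin{align}
\nabla_{\bbsigma_N} h(\bbx+\bbn|\bbu)=E_{\bbu}\left[\nabla_{\bbsigma_N} h(\bbx+\bbn|\bbu=\bu)\right]=\frac{1}{2}E_{\bbu}\left[\bbj(\bbx+\bbn|\bbu=\bu)\right]=\frac{1}{2}\bbj(\bbx+\bbn|\bbu)
\end{align}

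The hard part will be justifying the interchange of $\nabla_{\bbsigma_N}$ and the expectation over $\bbu$ in the display above, which is a differentiation-under-the-integral-sign argument requiring a uniform, integrable bound on the $\bbu$-indexed family of derivatives. Such a bound is available: applying the first part of Corollary~\ref{matrix_fisher_information_ineq} to the law $f(\cdot|\bu)$ and the Gaussian $\bbn$ gives $\bbj(\bbx+\bbn|\bbu=\bu)\preceq\bbj(\bbn)=\bbsigma_N^{-1}$, where the equality is the Cramer-Rao equality for Gaussian vectors from the first part of Lemma~\ref{Lemma_Matrix_FII}. Since $\bzero\preceq\bbj(\bbx+\bbn|\bbu=\bu)\preceq\bbsigma_N^{-1}$ holds uniformly in $\bu$, every entry of the derivative $\tfrac{1}{2}\bbj(\bbx+\bbn|\bbu=\bu)$ is bounded by a constant independent of $\bu$, so dominated convergence legitimizes the interchange and completes the proof.
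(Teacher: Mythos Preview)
Your proposal is correct and follows essentially the same route as the paper: condition on $\bbu=\bu$, apply the unconditional identity of Lemma~\ref{gradient_fisher}, and average over $\bbu$, with the only nontrivial step being the interchange of $\nabla_{\bbsigma_N}$ and the expectation. The paper justifies that interchange more tersely, appealing to the finiteness of the conditional differential entropy ensured by the second-moment hypothesis, whereas you supply an explicit dominating bound $\bzero\preceq\bbj(\bbx+\bbn|\bbu=\bu)\preceq\bbsigma_N^{-1}$ from Corollary~\ref{matrix_fisher_information_ineq}; your version is in fact a cleaner justification, since it gives a uniform-in-$\bu$ bound on the derivative directly.
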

\begin{proof}
Let $F_{U}(\bu)$ be the cumulative distribution function of
$\bbu$, and $f(\bx +\bn |\bbu =\bu)$ be the conditional density of
$\bbx +\bbn $ which is guaranteed to exist because $\bbn $ is
Gaussian. We have
\begin{align}
\nabla_{\bbsigma_N} h(\bbx+\bbn|\bbu)&= \nabla_{\bbsigma_N} \int h(\bbx+\bbn|\bbu=\bu) dF_{U}(\bu)\\
&= \int \nabla_{\bbsigma_N} h(\bbx+\bbn|\bbu=\bu) dF_{U}(\bu)\label{interchange}\\
&= \frac{1}{2}\int E\left[ \nabla \log f(\bbx+\bbn|\bbu=\bu)
\nabla \log f(\bbx+\bbn|\bbu=\bu) ^\top \right]dF_{U}(\bu)
\label{gradient_fisher_implies}\\
&= \frac{1}{2}E\left[ \nabla \log f(\bbx+\bbn|\bbu) \nabla \log f(\bbx+\bbn|\bbu) ^\top \right]\\
&= \frac{1}{2} \bbj(\bbx+\bbn|\bbu)
\label{def_of_cond_Fisher_imply_4}
\end{align}
where in (\ref{interchange}), we changed the order of integration
and differentiation, which can be done due to the finiteness of
the conditional differential entropy, which in turn is ensured by
the finite second-order moments of $(\bbu,\bbx)$,
(\ref{gradient_fisher_implies}) is a consequence of
Lemma~\ref{gradient_fisher}, and
(\ref{def_of_cond_Fisher_imply_4}) follows from the definition of
the conditional Fisher information matrix.
\end{proof}

Since we have derived all necessary tools, namely conditional
counterparts of
Lemmas~\ref{Lemma_Matrix_FII},~\ref{Lemma_change_Fisher},~\ref{gradient_fisher}
and Corollary~\ref{matrix_fisher_information_ineq}, the proof
Theorem~\ref{thm_vector_generalization} can be adapted to prove
Theorem~\ref{thm_conditional_vector_generalization}.

\subsection{Proof of Theorem~\ref{main_result} for Arbitrary $K$}
\label{sec:proof_main_result_arbitrary_K}

We now prove Theorem~\ref{main_result} for arbitrary $K$. To this
end, we will mainly use the intuition gained in the proof of
Theorem~\ref{thm_vector_generalization} and the tools developed in
the previous section. The only new ingredient that is needed is
the following lemma whose proof is given in
Appendix~\ref{proof_of_lemma_conditioning_increases_Fisher}.

\begin{Lem}
\label{lemma_conditioning_increases_Fisher} Let $(\bbv,\bbu,\bbx)$
be length-$n$ random vectors with well-defined densities.
Moreover, assume that the partial derivatives of $f(\bu|\bv,\bx)$
with respect to $x_i,~i=1,\ldots,n,$ exist and satisfy
\begin{align}
\max_{1\leq i\leq n}\left|\frac{\partial f(\bu |\bx,\bv)}{\partial
x_i}\right|\leq g(\bu)
\label{assumption_of_the_lemma_cond_inc_Fisher}
\end{align}
for some integrable function $g(\bu)$. Then, if $(\bbv,\bbu,\bbx)$
satisfy the Markov chain $\bbv\rightarrow \bbu \rightarrow \bbx $,
we have
\begin{align}
\bbj(\bbx |\bbu)\succeq \bbj(\bbx |\bbv)
\end{align}
\end{Lem}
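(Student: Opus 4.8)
The plan is to reduce the matrix inequality to a single pointwise identity between score functions, and then to finish with a conditional mean-square (Jensen-type) argument. The central claim I would establish first is the representation
\begin{align}
\brho(\bbx|\bbv)=E\left[\brho(\bbx|\bbu)\mid \bbx,\bbv\right]
\end{align}
that is, the coarser score $\brho(\bx|\bv)$ is exactly the conditional mean of the finer score $\brho(\bx|\bu)$ given $(\bbx,\bbv)$. This is the natural sharpening, under the Markov structure, of the vanishing-mean property of the conditional score recorded in the first part of Lemma~\ref{lemma_cond_stein_implies}.

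To prove this identity I would invoke the Markov chain $\bbv\rightarrow\bbu\rightarrow\bbx$, which gives $f(\bx|\bu,\bv)=f(\bx|\bu)$, so that Bayes' rule yields $f(\bu|\bx,\bv)=f(\bx|\bu)f(\bu|\bv)/f(\bx|\bv)$. Taking logarithms and then the gradient $\nabla$ with respect to $\bx$ (the term $\log f(\bu|\bv)$ drops out since it does not depend on $\bx$) gives
\begin{align}
\brho(\bx|\bu)-\brho(\bx|\bv)=\nabla\log f(\bu|\bx,\bv)
\end{align}
Applying $E[\cdot\mid \bbx=\bx,\bbv=\bv]$ to both sides, and using that $\brho(\bx|\bv)$ is deterministic given $(\bx,\bv)$, it remains only to check that the right-hand side averages to zero, i.e.
\begin{align}
\int \nabla\log f(\bu|\bx,\bv)\, f(\bu|\bx,\bv)\, d\bu=\int \nabla f(\bu|\bx,\bv)\, d\bu=\nabla\int f(\bu|\bx,\bv)\, d\bu=\nabla 1=\bzero
\end{align}

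The one genuinely delicate step --- and the reason for the hypothesis in (\ref{assumption_of_the_lemma_cond_inc_Fisher}) --- is the interchange of the gradient $\nabla$ and the integral $\int\cdot\, d\bu$ in the last display. The domination $\max_{i}|\partial f(\bu|\bx,\bv)/\partial x_i|\leq g(\bu)$ by an integrable $g$ is precisely what justifies differentiation under the integral sign via the dominated-convergence form of Leibniz's rule, so this is where I expect the main technical obstacle to lie; the surrounding algebra is purely formal.

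Finally, with the representation in hand the inequality follows from a standard conditional-variance argument. Since the matrix $E[(\brho(\bbx|\bbu)-\brho(\bbx|\bbv))(\brho(\bbx|\bbu)-\brho(\bbx|\bbv))^\top]\succeq\bzero$ and $\brho(\bbx|\bbv)=E[\brho(\bbx|\bbu)\mid \bbx,\bbv]$ is measurable with respect to $(\bbx,\bbv)$, the tower property makes each cross term equal to $E[\brho(\bbx|\bbv)\brho(\bbx|\bbv)^\top]$, so that expansion collapses to
\begin{align}
E\left[\brho(\bbx|\bbu)\brho(\bbx|\bbu)^\top\right]-E\left[\brho(\bbx|\bbv)\brho(\bbx|\bbv)^\top\right]\succeq\bzero
\end{align}
By the definition of the conditional Fisher information matrix this is exactly $\bbj(\bbx|\bbu)\succeq\bbj(\bbx|\bbv)$, completing the proof.
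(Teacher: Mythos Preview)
Your proof is correct and follows essentially the same approach as the paper: both rely on the decomposition $\brho(\bx|\bu)=\brho(\bx|\bv)+\nabla_{\bx}\log f(\bu|\bx,\bv)$ obtained from the Markov chain, and both use the domination hypothesis~(\ref{assumption_of_the_lemma_cond_inc_Fisher}) to justify $\int \nabla_{\bx} f(\bu|\bx,\bv)\,d\bu=\bzero$. The only cosmetic difference is that you first package this as the representation $\brho(\bbx|\bbv)=E[\brho(\bbx|\bbu)\mid\bbx,\bbv]$ and then invoke the tower property, whereas the paper expands the outer product of the decomposed score directly and shows the cross terms vanish; these are two ways of writing the same computation.
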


We now start the proof of Theorem~\ref{main_result} for arbitrary
$K$. First, we rewrite the bound given in
Theorem~\ref{degraded_multiuser_wiretap} for the $K$th user's
secrecy rate as follows
\begin{align}
I(U_K;\bby_K)-I(U_K;\bbz)&=I(\bbx;\bby_K)-I(\bbx;\bbz)-\left[I(\bbx;\bby_K|U_K)-I(\bbx;\bbz|U_K)\right] \label{Markov_chain_implies_1} \\
&\leq \frac{1}{2}\log
\frac{|\bbs+\bbsigma_K|}{|\bbsigma_K|}-\frac{1}{2}\log
\frac{|\bbs+\bbsigma_Z|}{|\bbsigma_Z|}-\left[I(\bbx;\bby_K|U_K)-I(\bbx;\bbz|U_K)\right]
\label{worst_additive_implies_1}
\end{align}
where in (\ref{Markov_chain_implies_1}), we used the Markov chain
$U_K\rightarrow \bbx\rightarrow (\bby_K,\bbz )$, and obtained
(\ref{worst_additive_implies_1}) using the worst additive noise
lemma given in Lemma~\ref{worst_additive}. Moreover, using the
Markov chain $U_K\rightarrow \bbx \rightarrow \bby_K\rightarrow
\bbz$, the other difference term in
(\ref{worst_additive_implies_1}) can be bounded as follows.
\begin{align}
0\leq I(\bbx;\bby_K|U_K)-I(\bbx;\bbz|U_K) &\leq I(\bbx;\bby_K)-I(\bbx;\bbz) \\
&\leq \frac{1}{2}\log
\frac{|\bbs+\bbsigma_K|}{|\bbsigma_K|}-\frac{1}{2}\log
\frac{|\bbs+\bbsigma_Z|}{|\bbsigma_Z|} \label{dummy_range}
\end{align}
The proofs of Theorems~\ref{thm_vector_generalization}
and~\ref{thm_conditional_vector_generalization} reveal that for
any value of $I(\bbx;\bby_K|U_K)-I(\bbx;\bbz|U_K)$ in the range
given in (\ref{dummy_range}), there exists positive semi-definite
matrix $\tilde{\bbk}_K$ such that
\begin{align}
\bbj(\bbx+\bbn_K|U_K)^{-1}-\bbsigma_K \preceq \tilde{\bbk}_K
\preceq \bbs \label{finding_ordered_matrices_1}
\end{align}
and
\begin{align}
I(\bbx;\bby_K|U_K)-I(\bbx;\bbz|U_K) &=\frac{1}{2}\log
\frac{|\tilde{\bbk}_K+\bbsigma_K|}{|\bbsigma_K|}-\frac{1}{2}\log
\frac{|\tilde{\bbk}_K+\bbsigma_Z|}{|\bbsigma_Z|}
\label{equality_Kth_user}
\\
I(\bbx;\bby_{K-1}|U_K)-I(\bbx;\bbz|U_K) &\leq
\frac{1}{2}\log\frac{|\tilde{\bbk}_K+\bbsigma_{K-1}|}{|\bbsigma_{K-1}|}-\frac{1}{2}\log
\frac{|\tilde{\bbk}_K+\bbsigma_Z|}{|\bbsigma_Z|}
\label{fixing_K_th_user_implies}
\end{align}
Using (\ref{equality_Kth_user}) in
(\ref{worst_additive_implies_1}) yields the desired bound on the
$K$th user's secrecy rate as follows
\begin{align}
R_K&\leq \frac{1}{2}
\log\frac{|\bbs+\bbsigma_{K}|}{|\tilde{\bbk}_K+\bbsigma_{K}|}-
\frac{1}{2}
\log\frac{|\bbs+\bbsigma_{Z}|}{|\tilde{\bbk}_K+\bbsigma_{Z}|}
\end{align}

We now bound the $(K-1)$th user's secrecy rate. To this end, first
note that
\begin{align}
R_{K-1}&\leq I(U_{K-1};\bby_{K-1}|U_K)-I(U_{K-1};\bbz|U_K)\\
&=I(\bbx;\bby_{K-1}|U_K)-I(\bbx;\bbz|U_K)-\left[I(\bbx;\bby_{K-1}|U_{K-1})-I(\bbx;\bbz|U_{K-1})\right]\label{Markov_chain_implies_2}\\
&\leq \frac{1}{2}\log
\frac{|\tilde{\bbk}_K+\bbsigma_{K-1}|}{|\bbsigma_{K-1}|}
-\frac{1}{2}\log \frac{|\tilde{\bbk}_K+\bbsigma_Z|}{|\bbsigma_Z|}
- \left[I(\bbx;\bby_{K-1}|U_{K-1})-I(\bbx;\bbz|U_{K-1})\right]
\label{fixing_K_th_user_implies_1}
\end{align}
where in order to obtain (\ref{Markov_chain_implies_2}), we used
the Markov chain $U_K\rightarrow U_{K-1}\rightarrow \bbx
\rightarrow (\bby_{K-1},\bbz)$, and
(\ref{fixing_K_th_user_implies_1}) comes from
(\ref{fixing_K_th_user_implies}). Using the Markov chain
$U_K\rightarrow U_{K-1}\rightarrow \bbx \rightarrow
\bby_{K-1}\rightarrow \bbz$, the mutual information difference in
(\ref{fixing_K_th_user_implies_1}) is bounded as
\begin{align}
0 \leq I(\bbx;\bby_{K-1}|U_{K-1})-I(\bbx;\bbz|U_{K-1})& \leq I(\bbx;\bby_{K-1}|U_K)-I(\bbx;\bbz|U_K) \\
&\leq \frac{1}{2}\log
\frac{|\tilde{\bbk}_K+\bbsigma_{K-1}|}{|\bbsigma_{K-1}|}
-\frac{1}{2}\log \frac{|\tilde{\bbk}_K+\bbsigma_Z|}{|\bbsigma_Z|}
\label{towards_fixing_(K-1)_1}
\end{align}
Using the analysis carried out in the proof of
Theorem~\ref{thm_vector_generalization}, we can get a more refined
lower bound as follows
\begin{align}
I(\bbx;\bby_{K-1}|U_{K-1})-I(\bbx;\bbz|U_{K-1})&\geq \frac{1}{2}
\log \frac{|\bbj(\bbx+\bbn_{K-1}|U_{K-1})^{-1}|}{|\bbsigma_{K-1}|}\nonumber\\
&\quad -\frac{1}{2} \log\
\frac{|\bbj(\bbx+\bbn_{K-1}|U_{K-1})^{-1}+\bbsigma_Z-\bbsigma_{K-1}|}{|\bbsigma_{Z}|}
\label{towards_fixing_(K-1)_2}
\end{align}
Combining (\ref{towards_fixing_(K-1)_1}) and
(\ref{towards_fixing_(K-1)_2}) yields
\begin{align}
\lefteqn{\frac{1}{2} \log \frac{|\bbj(\bbx+\bbn_{K-1}|U_{K-1})^{-1}|}{|\bbj(\bbx+\bbn_{K-1}|U_{K-1})^{-1}+\bbsigma_Z-\bbsigma_{K-1}|}}&\nonumber\\
&\hspace{2cm}\leq
I(\bbx;\bby_{K-1}|U_{K-1})-I(\bbx;\bbz|U_{K-1})+\frac{1}{2}\log \frac{|\bbsigma_{K-1}|}{|\bbsigma_Z|}\nonumber\\
&\hspace{2cm}\leq \frac{1}{2}\log
\frac{|\tilde{\bbk}_K+\bbsigma_{K-1}|}{|\tilde{\bbk}_K+\bbsigma_Z|}
\label{towards_fixing_(K-1)_3}
\end{align}
Now, using the lower bound on $\tilde{\bbk}_{K}$ given in
(\ref{finding_ordered_matrices_1}), we get
\begin{align}
\tilde{\bbk}_{K}&\succeq \bbj(\bbx+\bbn_K|U_K)^{-1}-\bbsigma_K \\
& \succeq \bbj(\bbx+\bbn_{K-1}|U_K)^{-1}-\bbsigma_{K-1}
\label{towards_fixing_(K-1)_4}
\end{align}
where (\ref{towards_fixing_(K-1)_4}) is obtained using
Lemma~\ref{Lemma_change_Fisher}. Moreover, since we have
$U_{K}\rightarrow U_{K-1}\rightarrow \bbx+\bbn_{K-1}$, the
following order exists
\begin{align}
\bbj(\bbx+\bbn_{K-1}|U_{K-1})\succeq \bbj(\bbx+\bbn_{K-1}|U_K)
\label{conditioning_decreases_Fisher_step5}
\end{align}
due to Lemma~\ref{lemma_conditioning_increases_Fisher}. Equation
(\ref{conditioning_decreases_Fisher_step5}) is equivalent to
\begin{align}
\bbj(\bbx+\bbn_{K-1}|U_{K-1})^{-1}\preceq
\bbj(\bbx+\bbn_{K-1}|U_K)^{-1}
\end{align}
using which in (\ref{towards_fixing_(K-1)_4}), we get
\begin{align}
\tilde{\bbk}_{K}&\succeq
\bbj(\bbx+\bbn_{K-1}|U_{K-1})^{-1}-\bbsigma_{K-1}
\label{towards_fixing_(K-1)_5}
\end{align}
We now consider the function
\begin{align}
r(t)=\frac{1}{2} \log
\frac{|\bba+\bbb+t\bbdelta|}{|\bba+t\bbdelta|},\quad 0\leq t\leq 1
\end{align}
with the following parameters
\begin{align}
\bba&=\bbj(\bbx+\bbn_{K-1}|U_{K-1})^{-1}\\
\bbb&=\bbsigma_Z-\bbsigma_{K-1}\\
\bbdelta&=
\tilde{\bbk}_{K}+\bbsigma_{K-1}-\bbj(\bbx+\bbn_{K-1}|U_{K-1})^{-1}
\end{align}
where $\bbdelta\succeq \bzero$ due to
(\ref{towards_fixing_(K-1)_5}). Using this function, we can
paraphrase the bound in (\ref{towards_fixing_(K-1)_3}) as
\begin{align}
-r(0)\leq
I(\bbx;\bby_{K-1}|U_{K-1})-I(\bbx;\bbz|U_{K-1})+\frac{1}{2}\log
\frac{|\bbsigma_{K-1}|}{|\bbsigma_Z|} \leq -r(1)
\end{align}
As shown in Lemma~\ref{lemma_heart_of_the_contr}, $r(t)$ is
continuous and monotonically decreasing in $t$. Thus, there exists
a $t^*$ such that
\begin{align}
-r(t^*)=I(\bbx;\bby_{K-1}|U_{K-1})-I(\bbx;\bbz|U_{K-1})+\frac{1}{2}\log
\frac{|\bbsigma_{K-1}|}{|\bbsigma_Z|}
\end{align}
due to the intermediate value theorem. Let
$\tilde{\bbk}_{K-1}=\bba+t^* \bbdelta-\bbsigma_{K-1}$, then we get
\begin{align}
I(\bbx;\bby_{K-1}|U_{K-1})-I(\bbx;\bbz|U_{K-1})=\frac{1}{2}\log
\frac{|\tilde{\bbk}_{K-1}+\bbsigma_{K-1}|}{|\bbsigma_{K-1}|}
-\frac{1}{2}\log
\frac{|\tilde{\bbk}_{K-1}+\bbsigma_Z|}{|\bbsigma_Z|}
\label{towards_fixing_(K-1)_6}
\end{align}
We note that using (\ref{towards_fixing_(K-1)_6}) in
(\ref{fixing_K_th_user_implies_1}) yields the desired bound on the
$(K-1)$th user's secrecy rate as follows
\begin{align}
R_{K-1}&\leq
 \frac{1}{2}\log \frac{|\tilde{\bbk}_K+\bbsigma_{K-1}|}{|\tilde{\bbk}_{K-1}+\bbsigma_{K-1}|}
-\frac{1}{2}\log
\frac{|\tilde{\bbk}_K+\bbsigma_Z|}{|\tilde{\bbk}_{K-1}+\bbsigma_Z|}
\end{align}
Moreover, since $\bbdelta\succeq \bzero$ and $0\leq t\leq 1$,
$\tilde{\bbk}_{K-1}=\bba+t^* \bbdelta-\bbsigma_{K-1}$ satisfies
the following orderings
\begin{align}
\bbj(\bbx+\bbn_{K-1}|U_{K-1})^{-1}-\bbsigma_{K-1}\preceq
\tilde{\bbk}_{K-1}\preceq \tilde{\bbk}_{K}
\label{towards_fixing_(K-1)_7}
\end{align}
Furthermore, the lower bound in (\ref{towards_fixing_(K-1)_7})
implies the following order
\begin{align}
\tilde{\bbk}_{K-1} \succeq
\bbj(\bbx+\bbn|U_{K-1})^{-1}-\bbsigma_{N} \label{dummy_ordering}
\end{align}
for any Gaussian random vector $\bbn$ such that $\bbsigma_N\preceq
\bbsigma_{K-1}$, and is independent of $U_{K-1},\bbx$, which is a
consequence of Lemma~\ref{Lemma_change_Fisher}. Using
(\ref{dummy_ordering}), and following the proof of
Theorem~\ref{thm_vector_generalization}, we can show that
\begin{align}
I(\bbx;\bby_{K-2}|U_{K-1})-I(\bbx;\bbz|U_{K-1})\leq
\frac{1}{2}\log\frac{|\tilde{\bbk}_{K-1}+\bbsigma_{K-2}|}{|\bbsigma_{K-2}|}
-\frac{1}{2}\log\frac{|\tilde{\bbk}_{K-1}+\bbsigma_Z|}{|\bbsigma_Z|}
\end{align}

Thus, as a recap, we have showed that there exists
$\tilde{\bbk}_{K-1}$ such that
\begin{align}
\bbj(\bbx+\bbn_{K-1}|U_{K-1})^{-1}-\bbsigma_{K-1}\preceq
\tilde{\bbk}_{K-1}\preceq \tilde{\bbk}_{K}
\end{align}
and
\begin{align}
I(\bbx;\bby_{K-1}|U_{K-1})-I(\bbx;\bbz|U_{K-1}) &=\frac{1}{2}\log
\frac{|\tilde{\bbk}_{K-1}+\bbsigma_{K-1}|}{|\bbsigma_{K-1}|}
-\frac{1}{2}\log \frac{|\tilde{\bbk}_{K-1}+\bbsigma_Z|}{|\bbsigma_Z|}\\
I(\bbx;\bby_{K-2}|U_{K-1})-I(\bbx;\bbz|U_{K-1})&\leq
\frac{1}{2}\log
\frac{|\tilde{\bbk}_{K-1}+\bbsigma_{K-2}|}{|\bbsigma_{K-2}|}
-\frac{1}{2}\log
\frac{|\tilde{\bbk}_{K-1}+\bbsigma_Z|}{|\bbsigma_Z|}
\end{align}
which are analogous to (\ref{finding_ordered_matrices_1}),
(\ref{equality_Kth_user}), (\ref{fixing_K_th_user_implies}). Thus,
proceeding in the same manner, for any selection of the joint
distribution $p(u_{K})p(u_{K-1}|u_{K})\ldots p(\bx|u_2)$, we can
show the existence of matrices
$\big\{\tilde{\bbk}_{k}\big\}_{k=1}^{K+1}$ such that
\begin{align}
\bzero=\tilde{\bbk}_1\preceq \tilde{\bbk}_2\preceq \quad \ldots
\quad\preceq \tilde{\bbk}_K \preceq \tilde{\bbk}_{K+1}=\bbs
\label{psd_matrices_found_1}
\end{align}
and
\begin{align}
I(\bbx;\bby_{k}|U_k)-I(\bbx;\bbz|U_k)&=\frac{1}{2}
\log\frac{|\tilde{\bbk}_k+\bbsigma_k|}{|\bbsigma_k|}
-\frac{1}{2} \log\frac{|\tilde{\bbk}_k+\bbsigma_Z|}{|\bbsigma_Z|},\quad k=2,\ldots,K \label{psd_matrices_found_2} \\
I(\bbx;\bby_{k-1}|U_k)-I(\bbx;\bbz|U_k)&\leq \frac{1}{2}
\log\frac{|\tilde{\bbk}_k+\bbsigma_{k-1}|}{|\bbsigma_{k-1}|}
-\frac{1}{2} \log
\frac{|\tilde{\bbk}_k+\bbsigma_Z|}{|\bbsigma_Z|},\quad
k=2,\ldots,K+1 \label{psd_matrices_found_3}
\end{align}
where $U_{K+1}=\phi$. We now define
$\bbk_k=\tilde{\bbk}_{k+1}-\tilde{\bbk}_k,~k=1,\ldots,K,$ which
yields $\tilde{\bbk}_{k+1}=\sum_{i=1}^k \bbk_i$, and in
particular, $\bbs=\sum_{i=1}^K \bbk_i$. Using these new variables
in conjunction with (\ref{psd_matrices_found_2}) and
(\ref{psd_matrices_found_3}) results in
\begin{align}
R_k&\leq I(U_k;\bby_k|U_{k+1})-I(U_k;\bbz|U_{k+1})\\
&=
I(\bbx;\bby_k|U_{k+1})-I(\bbx;\bbz|U_{k+1})-\left[I(\bbx;\bby_k|U_{k})
-I(\bbx;\bbz|U_{k})\right]\\
&\leq \frac{1}{2}
\log\frac{\big|\tilde{\bbk}_{k+1}+\bbsigma_{k}\big|}{\left|\bbsigma_{k}\right|}
-\frac{1}{2} \log \frac{\big|\tilde{\bbk}_{k+1}+\bbsigma_Z\big|}{|\bbsigma_Z|}\nonumber\\
&\quad -\frac{1}{2}
\log\frac{\big|\tilde{\bbk}_k+\bbsigma_{k}\big|}{|\bbsigma_{k}|}
+\frac{1}{2} \log\frac{\big|\tilde{\bbk}_k+\bbsigma_Z\big|}{|\bbsigma_Z|} \\
&=\frac{1}{2}
\log\frac{\big|\tilde{\bbk}_{k+1}+\bbsigma_{k}\big|}{\big|\tilde{\bbk}_{k}+\bbsigma_{k}\big|}
-\frac{1}{2} \log \frac{\big|\tilde{\bbk}_{k+1}+\bbsigma_Z\big|}{\big|\tilde{\bbk}_{k}+\bbsigma_Z\big|}\\
&=\frac{1}{2}
\log\frac{\big|\sum_{i=1}^k\bbk_{i}+\bbsigma_{k}\big|}{\big|\sum_{i=1}^{k-1}\bbk_{i}+\bbsigma_{k}\big|}
-\frac{1}{2} \log
\frac{\big|\sum_{i=1}^k\bbk_{i}+\bbsigma_Z\big|}{\big|\sum_{i=1}^{k-1}\bbk_{i}+\bbsigma_Z\big|}
\label{bounds_found_1}
\end{align}
for $k=2,\ldots,K$. For $k=1$, the bound in
(\ref{psd_matrices_found_3}), by setting $k=2$ in the
corresponding expression, yields the desired bound on the first
user's secrecy rate
\begin{align}
R_1&\leq I(\bbx;\bby_1|U_2)-I(\bbx;\bbz|U_2)\\
&\leq \frac{1}{2}
\log\frac{|\bbk_{1}+\bbsigma_{1}|}{|\bbsigma_{1}|} -\frac{1}{2}
\log\frac{|\bbk_{1}+\bbsigma_Z|}{|\bbsigma_Z|}
\label{bounds_found_2}
\end{align}
Since for any selection of the joint distribution
$p(u_K)p(u_{K-1}|u_K)\ldots p(\bx|u_2)$, we can establish the
bounds in (\ref{bounds_found_1}) and (\ref{bounds_found_2}) with
positive semi-definite matrices $\left\{\bbk_i\right\}_{i=1}^K$
such that $\bbs=\sum_{i=1}^K\bbk_{i}$, the union of these bounds
over such matrices would be an outer bound for the secrecy
capacity region, completing the converse proof of
Theorem~\ref{main_result} for an arbitrary $K$.

\section{Aligned Gaussian MIMO Multi-receiver Wiretap \newline Channel}
\label{proof_of_theorem_main_result_aligned}

We now consider the aligned Gaussian MIMO multi-receiver wiretap
channel, and prove its secrecy capacity region. To that end, we
basically use our capacity result for the degraded Gaussian MIMO
multi-receiver wiretap channel in
Section~\ref{sec:proof_of_main_result} in conjunction with the
channel enhancement technique~\cite{Shamai_MIMO}. Due to the
presence of an eavesdropper in our channel model, there are some
differences between the way we invoke the channel enhancement
technique and the way it was used in its original version that
appeared in~\cite{Shamai_MIMO}. These differences will be pointed
out during our proof.

Given the covariance matrices $\left\{\bbk_i\right\}_{i=1}^K$ such
that $\sum_{i=1}^K \bbk_i \preceq \bbs$, let us define the
following rates,
\begin{align}
R_k^{\rm
DPC}\left(\pi,\left\{\bbk_i\right\}_{i=1}^K,\left\{\bbsigma_i\right\}_{i=1}^K,\bbsigma_Z\right)&=
\frac{1}{2}\log\frac{\left|\sum_{i=1}^k
\bbk_{\pi(i)}+\bbsigma_{\pi(k)}\right|}{\left|\sum_{i=1}^{k-1}
\bbk_{\pi(i)}+\bbsigma_{\pi(k)}\right|}\nonumber\\
&\quad - \frac{1}{2}\log\frac{\left|\sum_{i=1}^k
\bbk_{\pi(i)}+\bbsigma_Z\right|}{\left|\sum_{i=1}^{k-1}
\bbk_{\pi(i)}+\bbsigma_Z\right|},\quad k=1,\ldots,K
\end{align}
where $\pi(\cdot)$ is a one-to-one permutation on
$\{1,\ldots,K\}$. We also note that the subscript of $R_k^{\rm
DPC}\left(\pi,\left\{\bbk_i\right\}_{i=1}^K,\left\{\bbsigma_i\right\}_{i=1}^K,\bbsigma_Z\right)$
does not denote the $k$th user, instead it denotes the $(K-k+1)$th
user in line to be encoded. Rather, the secrecy rate of the $k$th
user is given by
\begin{align}
R_k=R_{\pi^{-1}(k)}^{\rm
DPC}\left(\pi,\left\{\bbk_i\right\}_{i=1}^K,\left\{\bbsigma_i\right\}_{i=1}^K,\bbsigma_Z\right)
\end{align}
when dirty-paper coding with stochastic encoding is used with an
encoding order of $\pi$. We define the following region:
\begin{eqnarray}
\hspace{-0.5cm}&&\mathcal{R}^{\rm
DPC}\left(\pi,\bbs,\left\{\bbsigma_i\right\}_{i=1}^K,\bbsigma_Z\right)\nonumber
\\
&&\hspace{0.0cm}=\left\{(R_1,\ldots,R_K)\left|
\begin{array}{rcl}
&R_k=R_{\pi^{-1}(k)}^{\rm
DPC}\left(\pi,\left\{\bbk_i\right\}_{i=1}^K,\left\{\bbsigma_i\right\}_{i=1}^K,\bbsigma_Z\right),~k=1,\ldots,K,\\
&\textrm{for some }\left\{\bbk_i\right\}_{i=1}^K \textrm{ such
that }\bbk_i\succeq 0,~i=1,\ldots,K,\\
&\textrm{and }\sum_{i=1}^K\bbk_i \preceq\bbs
\end{array}\right.\right\}\nonumber\\
\end{eqnarray}
The secrecy capacity region of the aligned Gaussian MIMO broadcast
channel is given by the following theorem.
\begin{Theo}
\label{theorem_main_result_aligned} The secrecy capacity region of
the aligned Gaussian MIMO multi-receiver wiretap channel is given
by the convex closure of the following union
\begin{align}
\bigcup_{\pi\in\Pi} \mathcal{R}^{\rm
DPC}\left(\pi,\bbs,\left\{\bbsigma_i\right\}_{i=1}^K,\bbsigma_Z\right)
\end{align}
where $\Pi$ is the set of all possible one-to-one permutations on
$\{1,\ldots,K\}$.
\end{Theo}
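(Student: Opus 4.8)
The plan is to prove the inner bound (achievability) and the converse separately, the latter by adapting the channel enhancement technique of \cite{Shamai_MIMO} to the presence of an eavesdropper. For achievability, I would show that dirty-paper coding with stochastic encoding and Gaussian inputs attains every tuple in $\mathcal{R}^{\rm DPC}(\pi,\bbs,\{\bbsigma_i\}_{i=1}^K,\bbsigma_Z)$. Fixing an encoding order $\pi$ and matrices $\{\bbk_i\}_{i=1}^K$ with $\sum_{i=1}^K\bbk_i\preceq\bbs$, one writes $\bbx=\sum_i\bbv_i$ as a sum of independent Gaussian vectors with covariances $\bbk_{\pi(i)}$ and encodes the users in the order $\pi$, each user dirty-paper coded against the codewords of the users preceding it. The standard secrecy analysis of dirty-paper coding yields precisely the rates $R^{\rm DPC}_k$, and taking the union over $\{\bbk_i\}$, then over $\pi$, and finally the convex closure via time sharing gives the inner bound.

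The converse is the substantive step. It suffices to show that every supporting hyperplane of the secrecy capacity region touches the claimed region, i.e., that for each nonnegative weight vector $\mu=(\mu_1,\ldots,\mu_K)$ the maximum of $\sum_k\mu_k R_k$ over all achievable tuples does not exceed its maximum over $\bigcup_\pi\mathcal{R}^{\rm DPC}$. Fix $\mu$; by relabeling the users I may assume that $\mu$ induces a fixed encoding order, which I take to be the identity. I would first solve the Gaussian problem $\max\sum_k\mu_k R^{\rm DPC}_k$ over $\{\bbk_k\}$ with $\sum_k\bbk_k=\bbs$, obtaining an optimal allocation $\{\bbk_k^*\}$ together with its KKT conditions. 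These conditions are matrix identities relating the legitimate gradient matrices $\big(\sum_{i\leq k}\bbk_i^*+\bbsigma_k\big)^{-1}$, the eavesdropper matrices $\big(\sum_{i\leq k}\bbk_i^*+\bbsigma_Z\big)^{-1}$, and the Lagrange multipliers enforcing $\bbk_k\succeq\bzero$ and $\sum_k\bbk_k=\bbs$.

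Guided by these KKT conditions, I would then construct enhanced noise covariance matrices $\{\hat\bbsigma_k\}$ for the legitimate receivers while leaving the eavesdropper's $\bbsigma_Z$ untouched, so that: (i) the enhanced matrices are totally ordered, $\bzero\prec\hat\bbsigma_1\preceq\cdots\preceq\hat\bbsigma_K\preceq\bbsigma_Z$, turning the enhanced channel into a degraded Gaussian MIMO multi-receiver wiretap channel governed by Theorem~\ref{main_result}; (ii) $\hat\bbsigma_k\preceq\bbsigma_k$ for every $k$, so that a receiver in the enhanced channel can degrade its observation back to the original one, whence any code for the aligned channel remains admissible on the enhanced channel with the same (unchanged) eavesdropper, making the enhanced secrecy capacity region an outer bound for the aligned one; and (iii) the Gaussian rates $R^{\rm DPC}_k$ evaluated with $\{\hat\bbsigma_k\}$ at $\{\bbk_k^*\}$ coincide with the original aligned rates at $\{\bbk_k^*\}$, so that this boundary point is preserved. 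Applying the converse of Theorem~\ref{main_result} to the enhanced degraded channel then bounds $\sum_k\mu_k R_k$ by the weighted sum-rate at $\{\bbk_k^*\}$, which by (iii) is a point of $\mathcal{R}^{\rm DPC}(\mathrm{id},\ldots)$; ranging over $\mu$ places every achievable tuple in the convex closure, completing the converse.

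The main obstacle I anticipate is the simultaneous enforcement of (i)--(iii) in the enhancement. Unlike the broadcast channel of \cite{Shamai_MIMO}, each secrecy rate here is a difference of two log-determinant terms, and the eavesdropper channel must be held fixed at $\bbsigma_Z$ while only the legitimate noises are enhanced; one must therefore verify that the $\{\hat\bbsigma_k\}$ built from the multiplier matrices are consistent with all $K$ rate expressions at once and satisfy the extra constraint $\hat\bbsigma_K\preceq\bbsigma_Z$ needed to keep the eavesdropper last in the degradedness chain. The delicate cases are those in which some $\bbk_k^*$ is rank-deficient or the ordering constraints are active, so that the corresponding multipliers must be treated explicitly; this is precisely where the construction departs from \cite{Shamai_MIMO,Tie_Liu_MIMO_WT} and must be carried out with care.
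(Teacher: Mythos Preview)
Your proposal follows the same overall strategy as the paper: achievability via dirty-paper coding combined with stochastic encoding, and converse via channel enhancement driven by the KKT conditions of the Gaussian weighted sum-rate problem, with the eavesdropper noise $\bbsigma_Z$ held fixed. The properties (i)--(iii) you list for the enhanced noises $\{\hat\bbsigma_k\}$ correspond to parts 1, 2, and 4 of the paper's Lemma~\ref{lemma_construct_degraded_channel}.

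There is, however, a gap in the converse logic as you have stated it. From (i) and (ii) together with Theorem~\ref{main_result}, what you obtain is
\[
\sum_k \mu_k R_k \;\le\; \max_{\{\bbk_i\}:\sum_i\bbk_i\preceq\bbs}\ \sum_k \mu_k\, R_k^{\mathrm{enh}}\big(\{\bbk_i\}\big),
\]
where $R_k^{\mathrm{enh}}$ are the superposition secrecy rates of the enhanced degraded channel. Your property (iii) only asserts that the enhanced rates \emph{at the particular allocation} $\{\bbk_k^*\}$ coincide with the original DPC rates there; it does not assert that $\{\bbk_k^*\}$ is the maximizer of the enhanced problem. Without that, the right-hand side above could exceed the value at $\{\bbk_k^*\}$, and (iii) would not close the argument. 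The paper handles this with two further structural properties of the enhancement (parts 3 and 5 of Lemma~\ref{lemma_construct_degraded_channel}): convex-combination identities of the form
\[
\mu_j\Big(\textstyle\sum_{i\le j}\bbk_i^*+\tilde\bbsigma_j\Big)^{-1}+(\mu_{j+1}-\mu_j)\Big(\textstyle\sum_{i\le j}\bbk_i^*+\bbsigma_Z\Big)^{-1}=\mu_{j+1}\Big(\textstyle\sum_{i\le j}\bbk_i^*+\tilde\bbsigma_{j+1}\Big)^{-1}
\]
and the boundary identity $(\bbs+\tilde\bbsigma_m)(\sum_i\bbk_i^*+\tilde\bbsigma_m)^{-1}=(\bbs+\bbsigma_Z)(\sum_i\bbk_i^*+\bbsigma_Z)^{-1}$. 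These, combined with the concavity of $\log|\cdot|$, are used to prove directly that $\sum_k\mu_k R_k^{\mathrm{enh}}(\{\bbk_k^*\})\ge\sum_k\mu_k R_k^{\mathrm{enh}}(\{\bbk_i\})$ for every feasible $\{\bbk_i\}$, i.e., that the KKT point of the original problem is also optimal for the enhanced one. So your list of requirements on the enhancement is one item short; the missing ingredient is precisely this optimality-transfer property, and it is where parts 3 and 5 of the enhancement lemma enter.
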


We will show the achievability of the secrecy rates in
Theorem~\ref{theorem_main_result_aligned} by extending Marton's
achievable scheme for broadcast channels~\cite{Marton} to
multi-receiver wiretap channels. For that purpose, we will use
Theorem~1 of~\cite{Wei_Yu}, where the authors provided an
achievable region for Gaussian vector broadcast channels using
Marton's achievable scheme in~\cite{Marton}. While using this
result, we will combine it with a stochastic encoding scheme for
secrecy purposes. To provide a converse proof for
Theorem~\ref{theorem_main_result_aligned}, we will follow the
channel enhancement technique~\cite{Shamai_MIMO}. We will show
that for any point on the boundary of the secrecy capacity region,
there exists a degraded channel such that its secrecy capacity
region includes the secrecy capacity region of the original
channel, and furthermore, the boundaries of these two regions
intersect at this specific point.

\subsection{Achievability}
\label{sec:ach_theorem_main_result_aligned}

To show the achievability of the secrecy rates in
Theorem~\ref{theorem_main_result_aligned}, we mostly rely on the
derivation of the dirty-paper coding region for the Gaussian MIMO
broadcast channel in Theorem~1 of \cite{Wei_Yu}. We employ the
achievable scheme in~\cite{Wei_Yu} in conjunction with a
stochastic encoding scheme due to secrecy concerns. Without loss
of generality, we consider the identity permutation, i.e.,
$\pi(k)=k,~k=1,\ldots,K$. Let $(\bbv_1,\ldots,\bbv_K)$ be
arbitrarily correlated random vectors such that
\begin{align}
(\bbv_1,\ldots,\bbv_K)\rightarrow \bbx\rightarrow
(\bby_1,\ldots,\bby_K,\bbz)
\label{ach_proof_dpc_dummy_Markov_chain}
\end{align}
Using these correlated random vectors, we can construct codebooks
$\left\{\bbv_{k,1}^n(W_k,\tilde{W}_k)\right\}_{k=1}^K$, where
$W_k\in\left\{1,\ldots,2^{nR_k}\right\}$,
$\tilde{W}_k\in\big\{1,\ldots,2^{n\tilde{R}_k}\big\}
,~k=1,\ldots,K$, such that each legitimate receiver can decode the
following rates
\begin{align}
R_k+\tilde{R}_k=\frac{1}{2}\log
\frac{\left|\sum_{i=1}^{k}\bbk_i+\bbsigma_k\right|}{\left|\sum_{i=1}^{k-1}\bbk_i+\bbsigma_k\right|},
\quad k=1,\ldots,K \label{dpc_rates}
\end{align}
for some positive semi-definite matrices
$\left\{\bbk_i\right\}_{i=1}^K$ such that
$\sum_{k=1}^K\bbk_k\preceq \bbs$~\cite{Wei_Yu}. The messages
$\big\{\tilde{W}_k\big\}_{k=1}^K$ do not carry any information,
and their sole purpose is to confuse the eavesdropper. In other
words, the purpose of these messages is to make the eavesdropper
spend its decoding capability on them, preventing the eavesdropper
to decode the confidential messages $\left\{W_k\right\}_{k=1}^K$.
Thus, we need to select the rates of these dummy messages
$\big\{\tilde{R}_k\big\}_{k=1}^K$ as follows
\begin{align}
\tilde{R}_k=\frac{1}{2}\log
\frac{\left|\sum_{i=1}^{k}\bbk_i+\bbsigma_Z\right|}{\left|\sum_{i=1}^{k-1}\bbk_i+\bbsigma_Z\right|},
\quad k=1,\ldots,K \label{dummy_message_rates}
\end{align}
To achieve the rates given in~(\ref{dpc_rates}), $\left\{\bbv_k
\right\}_{k=1}^K$ should be taken as jointly Gaussian with
appropriate covariance matrices. Moreover, it is sufficient to
choose $\bbx$ as a deterministic function of $\left\{\bbv_{k}
\right\}_{k=1}^K$, and the resulting unconditional distribution of
$\bbx$ is also Gaussian with covariance matrix $\sum_{k=1}^K
\bbk_{k}$~\cite{Wei_Yu}.

To complete the proof, we need to show that the above codebook
structure fulfills all of the secrecy constraints
in~(\ref{perfect_secrecy}). To this end, we take a shortcut, by
using the fact that, if a codebook satisfies
\begin{align}
\lim_{n\rightarrow\infty} \frac{1}{n}H(W_1,\ldots,W_K|\bbz^n)\geq
\sum_{k=1}^{K}R_k \label{sufficient_condition_for_secrecy}
\end{align}
then it also satisfies all of the remaining secrecy constraints
in~(\ref{perfect_secrecy})~\cite{Ekrem_Ulukus_BC_Secrecy}. Thus,
we only check~(\ref{sufficient_condition_for_secrecy})
\begin{align}
\lefteqn{\frac{1}{n}H(W_1,\ldots,W_K|\bbz^n)=\frac{1}{n}H(W_1,\ldots,W_K,\bbz^n)
-\frac{1}{n}H(\bbz^n)}\\
&=\frac{1}{n}H(\bbv_{1,1}^n,\ldots,\bbv_{K,1}^n,W_1,\ldots,W_K,\bbz^n)
-\frac{1}{n}H(\bbv_{1,1}^n,\ldots,\bbv_{K,1}^n|W_1,\ldots,W_K,\bbz^n)\nonumber\\
&\quad -\frac{1}{n}H(\bbz^n)\\
&=\frac{1}{n}H(\bbv_{1,1}^n,\ldots,\bbv_{K,1}^n)
+\frac{1}{n}H(W_1,\ldots,W_K,\bbz^n|\bbv_{1,1}^n,\ldots,\bbv_{K,1}^n)\nonumber\\
&\quad -\frac{1}{n}H(\bbv_{1,1}^n,\ldots,\bbv_{K,1}^n|W_1,\ldots,W_K,\bbz^n)-\frac{1}{n}H(\bbz^n)\\
&\geq \frac{1}{n}H(\bbv_{1,1}^n,\ldots,\bbv_{K,1}^n)
-\frac{1}{n}I(\bbv_{1,1}^n,\ldots,\bbv_{K,1}^n;\bbz^n)
-\frac{1}{n}H(\bbv_{1,1}^n,\ldots,\bbv_{K,1}^n|W_1,\ldots,W_K,\bbz^n)
\label{ach_proof_dpc_step_1}
\end{align}
We will treat each of the three terms in
(\ref{ach_proof_dpc_step_1}) separately. Since
$(\bbv_{1,1}^n,\ldots,\bbv_{K,1}^n)$ can take $2^{n\sum_{k=1}^K
\big(R_k+\tilde{R}_k\big)}$ values uniformly, for the first term
in (\ref{ach_proof_dpc_step_1}), we have
\begin{align}
\frac{1}{n}H(\bbv_{1,1}^n,\ldots,\bbv_{K,1}^n)=\sum_{k=1}^K
R_k+\sum_{k=1}^K \tilde{R}_k \label{ach_proof_dpc_step_2}
\end{align}
The second term in (\ref{ach_proof_dpc_step_1}) can be bounded as
\begin{align}
\frac{1}{n}I(\bbv_{1,1}^n,\ldots,\bbv_{K,1}^n;\bbz^n)&\leq
I(\bbv_{1,1},\ldots,\bbv_{K,1};\bbz)+\epsilon_n
\label{ach_proof_dpc_step_3}\\
&\leq I(\bbx;\bbz)+\epsilon_n \label{ach_proof_dpc_step_4}\\
& =\frac{1}{2} \log
\frac{\left|\sum_{k=1}^K\bbk_k+\bbsigma_Z\right|}{|\bbsigma_Z|}+\epsilon_n
\label{ach_proof_dpc_step_5}
\end{align}
where $\epsilon_n\rightarrow 0$ as $n\rightarrow \infty$. The
first inequality can be shown following Lemma~8 of~\cite{Wyner},
the second inequality follows from the Markov chain
in~(\ref{ach_proof_dpc_dummy_Markov_chain}), and the equality in
(\ref{ach_proof_dpc_step_5}) comes from our choice of $\bbx$,
which is Gaussian with covariance matrix $\sum_{k=1}^K \bbk_k$. We
now consider the third term in~(\ref{ach_proof_dpc_step_1}).
First, we note that given $(W_1=w_1,\ldots,W_K=w_K)$,
$\big(\bbv_{1,1}^n,\ldots,\bbv_{K,1}^n \big)$ can take
$2^{n\sum_{k=1}^K\tilde{R}_k}$ values, where
$\sum_{k=1}^K\tilde{R}_k$ is given by
\begin{align}
\sum_{k=1}^K\tilde{R}_k=\frac{1}{2} \log
\frac{\left|\sum_{k=1}^K\bbk_k+\bbsigma_Z\right|}{|\bbsigma_Z|}
\label{ach_proof_dpc_step_6}
\end{align}
using our selection in (\ref{dummy_message_rates}). Thus,
(\ref{ach_proof_dpc_step_6}) implies that given
$(W_1=w_1,\ldots,W_K=w_K)$, the eavesdropper can decode
$\big(\bbv_{1,1}^n,\ldots,\bbv_{K,1}^n \big)$ with vanishingly
small probability of error. Hence, using Fano's lemma, we get
\begin{align}
\frac{1}{n}H(\bbv_{1,1}^n,\ldots,\bbv_{K,1}^n|W_1,\ldots,W_K,\bbz^n)\leq
\frac{1}{n}\left[1+\gamma_n \left(\sum_{k=1}^K
\tilde{R}_k\right)\right] \label{ach_proof_dpc_step_7}
\end{align}
where $\gamma_n\rightarrow 0$ as $n\rightarrow \infty$. Thus,
plugging (\ref{ach_proof_dpc_step_2}),
(\ref{ach_proof_dpc_step_5}) and (\ref{ach_proof_dpc_step_7}) into
(\ref{ach_proof_dpc_step_1}) yields
\begin{align}
\lim_{n\rightarrow\infty} \frac{1}{n}H(W_1,\ldots,W_K|\bbz^n) \geq
\sum_{k=1}^K R_k
\end{align}
which ensures that the rates
\begin{align}
R_k=\frac{1}{2}\log
\frac{\left|\sum_{i=1}^{k}\bbk_i+\bbsigma_k\right|}{\left|\sum_{i=1}^{k-1}\bbk_i+\bbsigma_k\right|}
-\frac{1}{2}\log
\frac{\left|\sum_{i=1}^{k}\bbk_i+\bbsigma_Z\right|}{\left|\sum_{i=1}^{k-1}\bbk_i+\bbsigma_Z\right|},
\quad k=1,\ldots,K
\end{align}
can be transmitted in perfect secrecy.

\subsection{Converse}
To show the converse, we consider the maximization of the
following expression
\begin{align}
\sum_{k=1}^K \mu_k R_k \label{converse_proof_aligned_step_1}
\end{align}
where $\mu_k\geq 0,~k=1,\ldots,K$. We note that the maximum value
of (\ref{converse_proof_aligned_step_1}) traces the boundary of
the secrecy capacity region, i.e., its maximum value for any
non-negative vector $[\mu_1~\ldots~\mu_K]$ will give us a point on
the boundary of the secrecy capacity region. Let us define
$\pi(\cdot)$ to be a one-to-one permutation on $\{1,\ldots,K\}$
such that
\begin{align}
0\leq \mu_{\pi(1)}\leq \ldots \leq \mu_{\pi(K)}
\end{align}
Furthermore, let $0< m\leq K$ of $\left\{\mu_k\right\}_{k=1}^K$ be
strictly positive, i.e., $\mu_{\pi(1)}=\ldots=\mu_{\pi(K-m)}=0,$
and $\mu_{\pi(K-m+1)}>0$. We now define another permutation
$\pi^{\prime}(\cdot)$ on the strictly positive elements of
$\left\{\mu_k\right\}_{k=1}^K$ such that
$\pi^{\prime}(l)=\pi(K-m+l),~\l=1,\ldots,m$. Then,
(\ref{converse_proof_aligned_step_1}) can be expressed as
\begin{align}
\sum_{k=1}^K \mu_k R_k=\sum_{k=1}^K \mu_{\pi(k)}
R_{\pi(k)}=\sum_{k=1}^m \mu_{\pi^{\prime}(k)} R_{\pi^{\prime}(k)}
\label{converse_proof_aligned_step_2}
\end{align}
We will show that
\begin{align}
\max \sum_{k=1}^K \mu_k R_k &=\max \sum_{k=1}^m \mu_{\pi^{\prime}(k)} R_{\pi^{\prime}(k)}\\
& \leq \max \sum_{k=1}^m \frac{\mu_{\pi^{\prime}(k)}}{2}\log
\frac{\left|\sum_{i=1}^k
\bbk_{\pi^{\prime}(i)}+\bbsigma_{\pi^{\prime}(k)}\right|}{\left|\sum_{i=1}^{k-1}
\bbk_{\pi^{\prime}(i)}+\bbsigma_{\pi^{\prime}(k)}\right|}
\nonumber\\
&\qquad\qquad -\sum_{k=1}^m \frac{\mu_{\pi^{\prime}(k)}}{2}\log
\frac{\left|\sum_{i=1}^k
\bbk_{\pi^{\prime}(i)}+\bbsigma_{Z}\right|}{\left|\sum_{i=1}^{k-1}
\bbk_{\pi^{\prime}(i)}+\bbsigma_{Z}\right|}
\label{converse_proof_aligned_step_3}
\end{align}
where the last maximization is over all positive semi-definite
matrices $\left\{ \bbk_{\pi^{\prime}(k)}\right\}_{k=1}^m$ such
that $\sum_{k=1}^m\bbk_{\pi^{\prime}(k)} \preceq \bbs$. Since the
right hand side of (\ref{converse_proof_aligned_step_3}) is
achievable, if we can show that
(\ref{converse_proof_aligned_step_3}) holds for any non-negative
vector $[\mu_1~\ldots~\mu_K]$, this will complete the proof of
Theorem~\ref{theorem_main_result_aligned}. To simplify the
notation, without loss of generality, we assume that
$\pi^{\prime}(k)=k,~k=1,\ldots,m$. This assumption is equivalent
to the assumption that $0<\mu_1\leq \ldots \leq \mu_m,\textrm{ and
}\mu_k=0,~k=m+1,\ldots,K$.

We now investigate the maximization in
(\ref{converse_proof_aligned_step_3}). The objective function in
(\ref{converse_proof_aligned_step_3}) is generally non-convex in
the covariance matrices $\left\{
\bbk_{\pi^{\prime}(k)}\right\}_{k=1}^m$ implying that the KKT
conditions for this problem are necessary, but not sufficient. Let
us construct the Lagrangian for this optimization problem
\begin{align}
L\left(\left\{\bbm_i\right\}_{i=1}^m,\bbm_Z\right)=\sum_{k=1}^m
\mu_k R_k^{G}+\sum_{k=1}^m {\rm tr}(\bbk_k\bbm_k)+ {\rm
tr}\left(\left(\bbs-\sum_{k=1}^m \bbk_k\right)\bbm_Z\right)
\label{converse_proof_aligned_step_4}
\end{align}
where the Lagrange multipliers
$\left\{\bbm_i\right\}_{i=1}^m,\bbm_Z$ are positive semi-definite
matrices, and we defined $\left\{R_k^{G}\right\}_{k=1}^m$ as
follows,
\begin{align}
R_k^{G}=\frac{1}{2}\log \frac{\left|\sum_{i=1}^k
\bbk_i+\bbsigma_k\right|}{\left|\sum_{i=1}^{k-1}
\bbk_i+\bbsigma_k\right|} -\frac{1}{2}\log
\frac{\left|\sum_{i=1}^k
\bbk_i+\bbsigma_Z\right|}{\left|\sum_{i=1}^{k-1}
\bbk_i+\bbsigma_Z\right|},\quad k=1,\ldots,m
\end{align}
The gradient of
$L\left(\left\{\bbm_i\right\}_{i=1}^m,\bbm_Z\right)$ with respect
to $\bbk_j$ for any $j=1,\ldots,m-1$, is given by
\begin{align}
\nabla_{\bbk_j}L\left(\left\{\bbm_i\right\}_{i=1}^m,\bbm_Z\right)
&=\sum_{k=j}^m
\frac{\mu_k}{2}\left(\sum_{i=1}^k\bbk_i+\bbsigma_k\right)^{-1}
-\sum_{k=j+1}^m \frac{\mu_k}{2}\left(\sum_{i=1}^{k-1}\bbk_i+\bbsigma_k\right)^{-1}\nonumber\\
&\quad -\sum_{k=j}^m
\frac{\mu_k}{2}\left(\sum_{i=1}^k\bbk_i+\bbsigma_Z\right)^{-1}
+\sum_{k=j+1}^m \frac{\mu_k}{2}\left(\sum_{i=1}^{k-1}\bbk_i+\bbsigma_Z\right)^{-1}\nonumber\\
&\quad+\bbm_j-\bbm_Z \label{converse_proof_aligned_step_5}
\end{align}
and the gradient of
$L\left(\left\{\bbm_i\right\}_{i=1}^m,\bbm_Z\right)$ with respect
to $\bbk_m$ is given by
\begin{align}
\nabla_{\bbk_m}L\left(\left\{\bbm_i\right\}_{i=1}^m,\bbm_Z\right)
&=\frac{\mu_m}{2}\left(\sum_{i=1}^m\bbk_i+\bbsigma_m\right)^{-1} -
\frac{\mu_m}{2}\left(\sum_{i=1}^m\bbk_i+\bbsigma_Z\right)^{-1}
+\bbm_m-\bbm_Z \label{converse_proof_aligned_step_6}
\end{align}
The KKT conditions are given by
\begin{align}
\nabla_{\bbk_j}L\left(\left\{\bbm_i\right\}_{i=1}^m,\bbm_Z\right)&=\bzero,\quad j=1,\ldots,m \label{converse_proof_aligned_step_7}\\
{\rm tr}(\bbk_j\bbm_j)&=0,\quad j=1,\ldots,m \label{converse_proof_aligned_step_8}\\
{\rm tr}\left(\left(\bbs-\sum_{k=1}^m
\bbk_k\right)\bbm_Z\right)&=0
\label{converse_proof_aligned_step_9}
\end{align}
We note that since ${\rm tr}(\bbk_j\bbm_j)={\rm
tr}(\bbm_j\bbk_j)$, and $\bbm_j\succeq \bzero,\bbk_j\succeq
\bzero$, we have $\bbm_j\bbk_j=\bbk_j\bbm_j=\bzero$. Thus, the KKT
conditions in (\ref{converse_proof_aligned_step_8}) are equivalent
to
\begin{align}
\bbm_j\bbk_j=\bbk_j\bbm_j&=\bzero,\quad j=1,\ldots,m
\label{converse_proof_aligned_step_10}
\end{align}
Similarly, we also have
\begin{align}
\bbm_Z \left(\bbs-\sum_{k=1}^m \bbk_k\right)=
\left(\bbs-\sum_{k=1}^m \bbk_k\right)\bbm_Z =\bzero
\label{converse_proof_aligned_step_11}
\end{align}
Subtracting the gradient of the Lagrangian with respect to
$\bbk_{j+1}$ from the one with respect to $\bbk_j$, for
$j=1,\ldots,m-1$, we get
\begin{align}
\lefteqn{\nabla_{\bbk_j}L\left(\left\{\bbm_i\right\}_{i=1}^m,\bbm_Z\right)-\nabla_{\bbk_{j+1}}L\left(\left\{\bbm_i\right\}_{i=1}^m,\bbm_Z\right)}\nonumber\\
&=\sum_{k=j}^m
\frac{\mu_k}{2}\left(\sum_{i=1}^k\bbk_i+\bbsigma_k\right)^{-1}
-\sum_{k=j+1}^m \frac{\mu_k}{2}\left(\sum_{i=1}^{k-1}\bbk_i+\bbsigma_k\right)^{-1}\nonumber\\
&\quad -\sum_{k=j}^m
\frac{\mu_k}{2}\left(\sum_{i=1}^k\bbk_i+\bbsigma_Z\right)^{-1}
+\sum_{k=j+1}^m \frac{\mu_k}{2}\left(\sum_{i=1}^{k-1}\bbk_i+\bbsigma_Z\right)^{-1}+\bbm_j-\bbm_Z\nonumber\\
&\quad -\sum_{k=j+1}^m
\frac{\mu_k}{2}\left(\sum_{i=1}^k\bbk_i+\bbsigma_k\right)^{-1}
+\sum_{k=j+2}^m \frac{\mu_k}{2}\left(\sum_{i=1}^{k-1}\bbk_i+\bbsigma_k\right)^{-1}\nonumber\\
&\quad +\sum_{k=j+1}^m
\frac{\mu_k}{2}\left(\sum_{i=1}^k\bbk_i+\bbsigma_Z\right)^{-1}
-\sum_{k=j+2}^m \frac{\mu_k}{2}\left(\sum_{i=1}^{k-1}\bbk_i+\bbsigma_Z\right)^{-1}-\bbm_{j+1}+\bbm_Z\\
&= \frac{\mu_j}{2}\left(\sum_{i=1}^j\bbk_i+\bbsigma_j\right)^{-1}
- \frac{\mu_{j+1}}{2}\left(\sum_{i=1}^{j}\bbk_i+\bbsigma_{j+1}\right)^{-1}\nonumber\\
&\quad -
\frac{\mu_j}{2}\left(\sum_{i=1}^j\bbk_i+\bbsigma_Z\right)^{-1} +
\frac{\mu_{j+1}}{2}\left(\sum_{i=1}^{j}\bbk_i+\bbsigma_Z\right)^{-1}+\bbm_j-\bbm_{j+1}
\label{converse_proof_aligned_step_12}
\end{align}
Thus, using (\ref{converse_proof_aligned_step_10}),
(\ref{converse_proof_aligned_step_11}),
(\ref{converse_proof_aligned_step_12}), we can express the KKT
conditions in (\ref{converse_proof_aligned_step_7}),
(\ref{converse_proof_aligned_step_8}),
(\ref{converse_proof_aligned_step_9}) as follows
\begin{align}
\mu_j\left(\sum_{i=1}^j\bbk_i+\bbsigma_j\right)^{-1}
+(\mu_{j+1}-\mu_j)\left(\sum_{i=1}^j\bbk_i+\bbsigma_Z\right)^{-1}+\bbm_j&=
\mu_{j+1}\left(\sum_{i=1}^{j}\bbk_i+\bbsigma_{j+1}\right)^{-1}\nonumber\\
&\quad +\bbm_{j+1},\quad j=1,\ldots,m-1
\label{converse_proof_aligned_step_13}\\
\mu_m \left(\sum_{i=1}^m\bbk_i+\bbsigma_m\right)^{-1}+\bbm_m&=
 \mu_m \left(\sum_{i=1}^m\bbk_i+\bbsigma_Z\right)^{-1}
+\bbm_Z \label{converse_proof_aligned_step_14}\\
\bbk_j\bbm_j=\bbm_j\bbk_j&=\bzero,\quad j=1,\ldots,m
\label{converse_proof_aligned_step_15}\\
\bbm_Z \left(\bbs-\sum_{k=1}^m \bbk_k\right)
=\left(\bbs-\sum_{k=1}^m \bbk_k\right)\bbm_Z &=\bzero
\label{converse_proof_aligned_step_16}
\end{align}
where we also embed the multiplications by 2 into the Lagrange
multipliers.

We now present a lemma which will be instrumental in constructing
a degraded Gaussian MIMO multi-receiver wiretap channel, such that
the secrecy capacity region of the constructed channel includes
the secrecy capacity region of the original channel, and the
boundary of the secrecy capacity region of this constructed
channel coincides with the boundary of the secrecy capacity region
of the original channel at a certain point for a given
non-negative vector $[\mu_1~\ldots~\mu_K]$.

\begin{Lem}
\label{lemma_construct_degraded_channel} Given the covariance
matrices $\{\bbk_j\}_{j=1}^m$ satisfying the KKT conditions given
in
(\ref{converse_proof_aligned_step_13})-(\ref{converse_proof_aligned_step_16}),
there exist noise covariance matrices
$\big\{\tilde{\bbsigma}_j\big\}_{j=1}^m$ such that
\begin{enumerate}
\item $\tilde{\bbsigma}_j \preceq \bbsigma_j,~j=1,\ldots,m$. \item
$\bzero \prec \tilde{\bbsigma}_1  \preceq \ldots
\preceq\tilde{\bbsigma}_m \preceq \bbsigma_Z  $ \item
$\mu_j\left(\sum_{i=1}^j\bbk_i+\tilde{\bbsigma}_j\right)^{-1}
+(\mu_{j+1}-\mu_j)\left(\sum_{i=1}^j\bbk_i+\bbsigma_Z\right)^{-1}=
\mu_{j+1}\left(\sum_{i=1}^{j}\bbk_i+\tilde{\bbsigma}_{j+1}\right)^{-1},\vspace{0.32cm}\newline
\vspace{0.32cm} \textrm{for }j=1,\ldots,m-1, \textrm{ and }
\newline
\mu_m\left(\sum_{i=1}^m\bbk_i+\tilde{\bbsigma}_m\right)^{-1}=
\mu_{m}\big(\sum_{i=1}^m\bbk_i+\bbsigma_Z\big)^{-1}+\bbm_Z$ \item
$\left(\sum_{i=1}^j\bbk_i+\tilde{\bbsigma}_j\right)^{-1}\left(\sum_{i=1}^{j-1}\bbk_i+\tilde{\bbsigma}_j\right)
=\left(\sum_{i=1}^j\bbk_i+\bbsigma_j\right)^{-1}\left(\sum_{i=1}^{j-1}\bbk_i+\bbsigma_j\right)
\vspace{0.32cm}
\newline \textrm{for }j=1,\ldots,m$
\item
$\left(\bbs+\tilde{\bbsigma}_m\right)\left(\sum_{i=1}^m\bbk_i+\tilde{\bbsigma}_m\right)^{-1}
=\left(\bbs+\bbsigma_Z\right)\left(\sum_{i=1}^m\bbk_i+\bbsigma_Z\right)^{-1}$
\end{enumerate}
\end{Lem}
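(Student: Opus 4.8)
The plan is to build the matrices $\{\tilde{\bbsigma}_j\}_{j=1}^m$ explicitly from the stationarity and complementary slackness conditions (\ref{converse_proof_aligned_step_13})--(\ref{converse_proof_aligned_step_16}), working downward from $j=m$ to $j=1$, and to read off all five properties from a single invariant. Writing $\bbb_j=\sum_{i=1}^{j}\bbk_i$, the quantity I would carry through the induction is
\begin{align}
\left(\bbb_j+\tilde{\bbsigma}_j\right)^{-1}=\left(\bbb_j+\bbsigma_j\right)^{-1}+\frac{1}{\mu_j}\bbm_j,\qquad j=1,\ldots,m. \label{proposal_invariant}
\end{align}
I would take property~3 itself as the recursive \emph{definition} of $\tilde{\bbsigma}_j$: its second equation defines $\tilde{\bbsigma}_m$ directly, and its first equation defines $\tilde{\bbsigma}_j$ from the already-built $\tilde{\bbsigma}_{j+1}$. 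Each equation pins down $\left(\bbb_j+\tilde{\bbsigma}_j\right)^{-1}$, and hence $\tilde{\bbsigma}_j$, uniquely; checking that the right-hand side of the recursion is positive definite (so that the inverse exists) uses only $\mu_{j+1}\ge\mu_j$ and the already-established $\tilde{\bbsigma}_{j+1}\preceq\bbsigma_Z$.

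For the base case, subtracting the stationarity condition (\ref{converse_proof_aligned_step_14}) from the second equation of property~3 gives (\ref{proposal_invariant}) at $j=m$ at once, and the same two equations, together with $\bbm_m,\bbm_Z\succeq\bzero$, give $\tilde{\bbsigma}_m\preceq\bbsigma_m$ and $\tilde{\bbsigma}_m\preceq\bbsigma_Z$. For the inductive step, subtracting the stationarity condition (\ref{converse_proof_aligned_step_13}) from the first equation of property~3 shows that (\ref{proposal_invariant}) for index $j$ is equivalent to the \emph{shifted} identity
\begin{align}
\left(\bbb_j+\tilde{\bbsigma}_{j+1}\right)^{-1}=\left(\bbb_j+\bbsigma_{j+1}\right)^{-1}+\frac{1}{\mu_{j+1}}\bbm_{j+1}, \label{proposal_shifted}
\end{align}
that is, the invariant at level $j+1$ but with the full sum $\bbb_{j+1}$ replaced by $\bbb_j=\bbb_{j+1}-\bbk_{j+1}$. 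Proving (\ref{proposal_shifted}) from the invariant at level $j+1$ is the step I expect to be the crux, since the partial sum is perturbed by $\bbk_{j+1}$ and the matrices do not commute, so the clean scalar dichotomy ``$\bbk_{j+1}=\bzero$ or $\tilde{\bbsigma}_{j+1}=\bbsigma_{j+1}$'' must be replaced by an orthogonality argument. The mechanism that rescues it is complementary slackness: writing $A=\bbb_{j+1}+\tilde{\bbsigma}_{j+1}$, $B=\bbb_{j+1}+\bbsigma_{j+1}$, $K=\bbk_{j+1}$, $M=\mu_{j+1}^{-1}\bbm_{j+1}$, the invariant reads $A^{-1}=B^{-1}+M$ while (\ref{converse_proof_aligned_step_15}) gives $MK=KM=\bzero$; right-multiplying the invariant by $K$ yields $A^{-1}K=B^{-1}K$, and then a direct check gives $(B-K)\big[(A-K)^{-1}-M\big]=(B-K)(A-K)^{-1}-BM=BA^{-1}-BM=\bbi$, which is exactly (\ref{proposal_shifted}). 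Thus the invariant propagates all the way down.

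With (\ref{proposal_invariant}) established for every $j$, the remaining properties are short. Property~1 follows because $\bbm_j\succeq\bzero$ forces $\left(\bbb_j+\tilde{\bbsigma}_j\right)^{-1}\succeq\left(\bbb_j+\bbsigma_j\right)^{-1}$, hence $\tilde{\bbsigma}_j\preceq\bbsigma_j$. The degradation ordering in property~2 comes from the defining recursion: since $\mu_{j+1}\ge\mu_j$ and $\tilde{\bbsigma}_{j+1}\preceq\bbsigma_Z$, the right-hand side of the first equation of property~3 is bounded below by $\mu_j\left(\bbb_j+\tilde{\bbsigma}_{j+1}\right)^{-1}$, giving $\left(\bbb_j+\tilde{\bbsigma}_j\right)^{-1}\succeq\left(\bbb_j+\tilde{\bbsigma}_{j+1}\right)^{-1}$ and therefore $\tilde{\bbsigma}_j\preceq\tilde{\bbsigma}_{j+1}$; iterating also preserves $\tilde{\bbsigma}_j\preceq\bbsigma_Z$. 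Property~4, after rewriting the claimed matrix equality as $\left(\bbb_j+\tilde{\bbsigma}_j\right)^{-1}\bbk_j=\left(\bbb_j+\bbsigma_j\right)^{-1}\bbk_j$, is obtained by right-multiplying (\ref{proposal_invariant}) by $\bbk_j$ and using $\bbm_j\bbk_j=\bzero$. Property~5, after writing $\bbs+\tilde{\bbsigma}_m=\left(\bbb_m+\tilde{\bbsigma}_m\right)+\left(\bbs-\bbb_m\right)$, reduces to $\left(\bbs-\bbb_m\right)\bbm_Z=\bzero$, which is the complementary slackness condition (\ref{converse_proof_aligned_step_16}). The one point still needing care is the strict positivity $\bzero\prec\tilde{\bbsigma}_1$ in property~2: this is not automatic from (\ref{proposal_invariant}) alone, and I would establish it using $\bbs\succ\bzero$ together with the structure of the optimal multipliers, if necessary through a limiting argument on strictly positive definite approximations.
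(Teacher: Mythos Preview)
Your plan is essentially the paper's proof, carried out in a slightly different order. The paper \emph{defines} each $\tilde{\bbsigma}_{j+1}$ directly by what you call the shifted identity,
\[
\mu_{j+1}\Big(\bbb_j+\tilde{\bbsigma}_{j+1}\Big)^{-1}=\mu_{j+1}\Big(\bbb_j+\bbsigma_{j+1}\Big)^{-1}+\bbm_{j+1},\qquad j=0,\ldots,m-1,
\]
and then \emph{derives} your invariant $(\bbb_{j+1}+\tilde{\bbsigma}_{j+1})^{-1}=(\bbb_{j+1}+\bbsigma_{j+1})^{-1}+\mu_{j+1}^{-1}\bbm_{j+1}$ from it via $\bbm_{j+1}\bbk_{j+1}=\bzero$; plugging both into the KKT equations (\ref{converse_proof_aligned_step_13})--(\ref{converse_proof_aligned_step_14}) yields property~3. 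You instead take property~3 as the recursive definition and derive the invariant and the shifted identity from it. The core algebraic step --- passing between the invariant and its shifted version using complementary slackness --- is literally the same computation, and your $A,B,K,M$ bookkeeping is a clean way to write it. The paper's non-recursive definition has the minor advantage that well-definedness of each $\tilde{\bbsigma}_j$ is immediate (since $\bbb_j+\bbsigma_{j+1}\succ\bzero$), whereas in your recursive route you must also note that $A-K=AB^{-1}(B-K)$ is invertible before writing $(A-K)^{-1}$. Properties~1,~2,~4,~5 are then dispatched exactly as you describe, and the paper does the same.

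One specific correction: the strict positivity $\tilde{\bbsigma}_1\succ\bzero$ that you flag as needing a separate argument actually falls out of your own machinery. Run your shifted-identity computation once more at level $j+1=1$ (i.e., with $K=\bbk_1$ and $\bbb_0=\bzero$) to obtain $\tilde{\bbsigma}_1^{-1}=\bbsigma_1^{-1}+\mu_1^{-1}\bbm_1$, hence $\tilde{\bbsigma}_1=\big(\bbsigma_1^{-1}+\mu_1^{-1}\bbm_1\big)^{-1}\succ\bzero$ because $\bbsigma_1\succ\bzero$. This is precisely how the paper closes property~2; neither $\bbs\succ\bzero$ nor any limiting argument is needed.
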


The proof of this lemma is given in
Appendix~\ref{proof_of_lemma_construct_degraded_channel}.


Without loss of generality, we have already fixed
$[\mu_1~\ldots~\mu_K]$ such that $0<\mu_1\leq\ldots\leq \mu_m$,
and $\mu_k=0,~k=m+1,\ldots,K$ for some $0<m\leq K$. For this fixed
$[\mu_1~\ldots~\mu_K]$, assume that $\{\bbk_k^{*}\}_{k=1}^m$
achieves the maximum of (\ref{converse_proof_aligned_step_3}).
Since these covariance matrices need to satisfy the KKT conditions
given in
(\ref{converse_proof_aligned_step_13})-(\ref{converse_proof_aligned_step_16}),
Lemma~\ref{lemma_construct_degraded_channel} ensures the existence
of the covariance matrices
$\big\{\tilde{\bbsigma}_j\big\}_{j=1}^m$ that have the properties
listed in Lemma~\ref{lemma_construct_degraded_channel}. Thus, we
can define a degraded Gaussian MIMO  multi-receiver wiretap
channel that has the following noise covariance matrices
\begin{align}
\hat{\bbsigma}_k=\left\{
\begin{array}{rcl}
\tilde{\bbsigma}_k,&\quad 1\leq k\leq m \\
\alpha_{k-m}\tilde{\bbsigma}_1,&\quad m+1 \leq k\leq K
\end{array}
\right.
\end{align}
where $0 < \alpha_{k-m} \leq 1 $ are chosen to satisfy
$\alpha_{k-m}\tilde{\bbsigma}_1\preceq \bbsigma_k$ for
$k=m+1,\ldots,K$, where the existence of such
$\{\alpha_{k-m}\}_{k=m+1}^K$ are ensured by the positive
definiteness of $\left\{\bbsigma_{k}\right\}_{k=1}^K$. The noise
covariance matrix of the eavesdropper is the same as in the
original channel, i.e., $\bbsigma_Z$. Since this channel is
degraded, its secrecy capacity region is given by
Theorem~\ref{main_result}. Moreover, since $\hat{\bbsigma}_k
\preceq \bbsigma_k,~k=1,\ldots,K,$ and the noise covariance
matrices in the constructed degraded channel and the original
channel are the same, the secrecy capacity region of this degraded
channel outer bounds that of the original channel. Next, we show
that for the so-far fixed $[\mu_1~\ldots~\mu_K]$, the boundaries
of these two regions intersect at this point. For this purpose,
reconsider the maximization problem in
(\ref{converse_proof_aligned_step_1})
\begin{align}
\max \sum_{k=1}^K \mu_k R_k&=\max \sum_{k=1}^m \mu_k R_k \label{fixed_gamma_implies}\\
&\leq \max_{\substack{\bbk_i\succeq 0,~i=1,\ldots,K\\
\sum_{i=1}^K\bbk_i \preceq \bbs} }\sum_{k=1}^m \frac{\mu_k }{2}
\left[\log\frac{\left|\sum_{i=1}^k \bbk_i +\sum_{i=m+1}^K \bbk_i
+\tilde{\bbsigma}_k\right|}
{\left|\sum_{i=1}^{k-1} \bbk_i +\sum_{i=m+1}^K \bbk_i +\tilde{\bbsigma}_k\right|}\right. \nonumber \\
&\hspace{4.8cm}\left.-\log\frac{\left|\sum_{i=1}^k \bbk_i
+\sum_{i=m+1}^K \bbk_i +\bbsigma_Z\right|}
{\left|\sum_{i=1}^{k-1} \bbk_i +\sum_{i=m+1}^K \bbk_i +\bbsigma_Z\right|}\right]\label{constructed_channel_outer_bounds}\\
&= \max_{\substack{\bbk_i\succeq 0,~i=1,\ldots,m\\
\sum_{i=1}^m\bbk_i \preceq \bbs} }\sum_{k=1}^m \frac{\mu_k }{2}
\left[\log \frac{\left|\sum_{i=1}^k \bbk_i
+\tilde{\bbsigma}_k\right|} {\left|\sum_{i=1}^{k-1} \bbk_i
+\tilde{\bbsigma}_k\right|}-\log\frac{\left|\sum_{i=1}^k \bbk_i
+\bbsigma_Z\right|} {\left|\sum_{i=1}^{k-1} \bbk_i
+\bbsigma_Z\right|}\right] \label{no_rate_to_zero_gammas}
\end{align}
where (\ref{fixed_gamma_implies}) is implied by the fact that for
the fixed $[\mu_1~\ldots~\mu_K]$, we assumed that
$\mu_k=0,~k=m+1,\ldots,K$ and $0<\mu_1\leq \ldots\leq \mu_m$,
(\ref{constructed_channel_outer_bounds}) follows from the facts
that the constructed degraded channel includes the secrecy
capacity region of the original channel, and the secrecy capacity
region of the degraded channel is given by
Theorem~\ref{main_result}. The last equation, i.e.,
(\ref{no_rate_to_zero_gammas}), comes from the fact that, since
$\mu_k=0,~k=m+1,\ldots,K$, there is no loss of optimality in
choosing $\bbk_k=\bzero,~k=m+1,\ldots,K$. We now claim that the
maximum in (\ref{no_rate_to_zero_gammas}) is achieved by
$\{\bbk^{*}_k\}_{k=1}^m$. To prove this claim, we first define
\begin{align}
R_k^{*}& =\frac{1}{2}\log\frac{\left|\sum_{i=1}^k
\bbk_i^{*}+\tilde{\bbsigma}_k\right|}{\left|\sum_{i=1}^{k-1}\bbk_i^{*}+\tilde{\bbsigma}_k\right|}
-\frac{1}{2}\log \frac{\left|\sum_{i=1}^k
\bbk_i^{*}+\bbsigma_Z\right|}{\left|\sum_{i=1}^{k-1}\bbk_i^{*}+\bbsigma_Z\right|},\quad
k=1,\ldots,m
\end{align}
and
\begin{align}
\hat{R}_k& =\frac{1}{2}\log
\frac{\left|\sum_{i=1}^k
\bbk_i+\tilde{\bbsigma}_k\right|}{\left|\sum_{i=1}^{k-1}\bbk_i+\tilde{\bbsigma}_k\right|}
-\frac{1}{2}\log \frac{\left|\sum_{i=1}^k
\bbk_i+\bbsigma_Z\right|}{\left|\sum_{i=1}^{k-1}\bbk_i+\bbsigma_Z\right|},\quad
k=1,\ldots,m
\end{align}
for some arbitrary positive semi-definite matrices
$\{\bbk_i\}_{i=1}^m$ such that $\sum_{i=1}^m\bbk_i \preceq \bbs$.
To prove that the maximum in (\ref{no_rate_to_zero_gammas}) is
achieved by $\{\bbk^{*}_k\}_{k=1}^m$, we will show that
\begin{align}
\sum_{k=1}^m \mu_k R^{*}_k-\sum_{k=1}^m \mu_k\hat{R}_k \geq 0
\label{dummy_equations_forever}
\end{align}
To this end, consider the first summation in
(\ref{dummy_equations_forever})
\begin{align}
\sum_{k=1}^m \mu_k R^{*}_k &=\sum_{k=1}^m
\frac{\mu_k}{2}\left(\log\left| \sum_{i=1}^k
\bbk_i^{*}+\tilde{\bbsigma}_k\right|
-\log\left| \sum_{i=1}^k \bbk_i^{*}+\bbsigma_Z\right|\right)\nonumber \\
&\quad -\sum_{k=2}^m \frac{\mu_k}{2}\left(\log\left| \sum_{i=1}^{k-1} \bbk_i^{*}+\tilde{\bbsigma}_k\right|
-\log \left| \sum_{i=1}^{k-1} \bbk_i^{*}+\bbsigma_Z\right|\right)\nonumber\\
&\quad -\frac{\mu_1}{2}\log\frac{\big|\tilde{\bbsigma}_1\big|}{\left|\bbsigma_Z\right|}\\
&=\sum_{k=1}^m \frac{\mu_k}{2}\left(\log\left| \sum_{i=1}^k \bbk_i^{*}+\tilde{\bbsigma}_k\right|
-\log \left| \sum_{i=1}^k \bbk_i^{*}+\bbsigma_Z\right|\right)\nonumber \\
&\quad -\sum_{k=1}^{m-1} \frac{\mu_{k+1}}{2}\left(\log \left|
\sum_{i=1}^{k} \bbk_i^{*}+\tilde{\bbsigma}_{k+1}\right|
-\log \left| \sum_{i=1}^{k} \bbk_i^{*}+\bbsigma_Z\right|\right)\nonumber\\
&\quad -\frac{\mu_1}{2}\log\frac{\big|\tilde{\bbsigma}_1\big|}{\left|\bbsigma_Z\right|} \\
&=\frac{\mu_m}{2} \log\frac{\left| \sum_{i=1}^m \bbk_i^{*}+\tilde{\bbsigma}_m\right|}{\left| \sum_{i=1}^m \bbk_i^{*}+\bbsigma_Z\right|}\nonumber\\
&\quad +
\sum_{k=1}^{m-1} \frac{\mu_k}{2}\left(\log\left| \sum_{i=1}^k \bbk_i^{*}+\tilde{\bbsigma}_k\right|
-\log\left| \sum_{i=1}^k \bbk_i^{*}+\bbsigma_Z\right|\right)\nonumber \\
&\quad -\sum_{k=1}^{m-1} \frac{\mu_{k+1}}{2}\left(\log\left| \sum_{i=1}^{k} \bbk_i^{*}+\tilde{\bbsigma}_{k+1}\right|
-\log \left| \sum_{i=1}^{k} \bbk_i^{*}+\bbsigma_Z\right|\right)\nonumber\\
&\quad -\frac{\mu_1}{2}\log \frac{\big|\tilde{\bbsigma}_1\big|}{\left|\bbsigma_Z\right|} \\
&=\frac{\mu_m}{2} \log\frac{\left| \sum_{i=1}^m
\bbk_i^{*}+\tilde{\bbsigma}_m\right|}{\left| \sum_{i=1}^m
\bbk_i^{*}+\bbsigma_Z\right|} +
\sum_{k=1}^{m-1} \frac{\mu_k}{2}\log \left| \sum_{i=1}^k \bbk_i^{*}+\tilde{\bbsigma}_k\right|\nonumber\\
&\quad +\sum_{k=1}^{m-1} \frac{\mu_{k+1}-\mu_k}{2}\log \left|
\sum_{i=1}^k \bbk_i^{*}+\bbsigma_Z\right| -\sum_{k=1}^{m-1}
\frac{\mu_{k+1}}{2}\log\left| \sum_{i=1}^{k}
\bbk_i^{*}+\tilde{\bbsigma}_{k+1}\right|\nonumber\\
&\quad
-\frac{\mu_1}{2}\log\frac{\big|\tilde{\bbsigma}_1\big|}{\left|\bbsigma_Z\right|}
\label{dummy_equations_forever_1}
\end{align}
Similarly, we have
\begin{align}
\sum_{k=1}^m \mu_k \hat{R}_k &=\frac{\mu_m}{2} \log\frac{\left|
\sum_{i=1}^m \bbk_i+\tilde{\bbsigma}_m\right|}{\left| \sum_{i=1}^m
\bbk_i+ \bbsigma_Z\right|} +
\sum_{k=1}^{m-1} \frac{\mu_k}{2}\log\left| \sum_{i=1}^k \bbk_i+\tilde{\bbsigma}_k\right|\nonumber\\
&\quad +\sum_{k=1}^{m-1} \frac{\mu_{k+1}-\mu_k}{2}\log\left|
\sum_{i=1}^k \bbk_i+\bbsigma_Z\right| -\sum_{k=1}^{m-1}
\frac{\mu_{k+1}}{2}\log\left| \sum_{i=1}^{k}
\bbk_i+\tilde{\bbsigma}_{k+1}\right|\nonumber\\ &\quad
-\frac{\mu_1}{2}\log\frac{\big|\tilde{\bbsigma}_1\big|}{\left|\bbsigma_Z\right|}
\label{dummy_equations_forever_2}
\end{align}
We define the following matrices
\begin{align}
\bbdelta_k=\sum_{i=1}^{k}\bbk_i-\sum_{i=1}^k\bbk^{*}_i,\quad
k=1,\ldots,m \label{dummy_equations_forever_3}
\end{align}
Using (\ref{dummy_equations_forever_1}),
(\ref{dummy_equations_forever_2}) and
(\ref{dummy_equations_forever_3}), the difference in
(\ref{dummy_equations_forever}) can be expressed as
\begin{align}
\sum_{k=1}^m \mu_k R^{*}_k-\sum_{k=1}^m \mu_k\hat{R}_k&=
\frac{\mu_m}{2} \log\frac{\left| \sum_{i=1}^m
\bbk_i^{*}+\tilde{\bbsigma}_m\right|}{\left| \sum_{i=1}^m
\bbk_i^{*}+\bbsigma_Z\right|} -\frac{\mu_m}{2} \log\frac{\left|
\sum_{i=1}^m \bbk_i+\tilde{\bbsigma}_m\right|}{\left| \sum_{i=1}^m
\bbk_i+\bbsigma_Z\right|}
\nonumber \\
&\quad-
\sum_{k=1}^{m-1} \frac{\mu_k}{2}
\log\left|\bbi+\left( \sum_{i=1}^k \bbk_i^{*}+\tilde{\bbsigma}_k\right)^{-1}\bbdelta_k\right|\nonumber\\
&\quad -\sum_{k=1}^{m-1} \frac{\mu_{k+1}-\mu_k}{2}
\log\left|\bbi+\left( \sum_{i=1}^k \bbk_i^{*}+\bbsigma_Z\right)^{-1}\bbdelta_k\right|\nonumber\\
&\quad +\sum_{k=1}^{m-1} \frac{\mu_{k+1}}{2}\log \left|
\bbi+\left(\sum_{i=1}^{k}
\bbk_i^{*}+\tilde{\bbsigma}_{k+1}\right)^{-1}\bbdelta_k\right|
\label{dummy_equations_forever_4}
\end{align}
We first note that
\begin{align}
\frac{\left| \sum_{i=1}^m
\bbk_i^{*}+\tilde{\bbsigma}_m\right|}{\left| \sum_{i=1}^m
\bbk_i^{*}+\bbsigma_Z\right|} =\frac{\left|
\bbs+\tilde{\bbsigma}_m\right|}{\left| \bbs+\bbsigma_Z\right|}\geq
\frac{\left| \sum_{i=1}^m \bbk_i+\tilde{\bbsigma}_m\right|}{\left|
\sum_{i=1}^m \bbk_i+\bbsigma_Z\right|}
\label{dummy_equations_forever_5}
\end{align}
where the equality is due to the fifth part of
Lemma~\ref{lemma_construct_degraded_channel}, and the inequality
follows from the fact that the function
\begin{align}
\frac{\big|\bba+\tilde{\bbsigma}_m\big|}{|\bba+\bbsigma_Z|}
\end{align}
is monotonically increasing in the positive semi-definite matrix
$\bba$ as can be deduced from
(\ref{shamao_inequality_without_proof}), and that
$\sum_{i=1}^m\bbk_i \preceq \bbs$. Furthermore, we have
\begin{align}
\lefteqn{\frac{\mu_k}{\mu_{k+1}}\log\left|\bbi+\left( \sum_{i=1}^k
\bbk_i^{*}+\tilde{\bbsigma}_k\right)^{-1}\bbdelta_k\right|
+\frac{\mu_{k+1}-\mu_k}{\mu_{k+1}}
\log\left|\bbi+\left( \sum_{i=1}^k \bbk_i^{*}+\bbsigma_Z\right)^{-1}\bbdelta_k\right|}\nonumber\\
&\leq  \log\left|\bbi+\frac{\mu_k}{\mu_{k+1}}\left( \sum_{i=1}^k
\bbk_i^{*}+\tilde{\bbsigma}_k\right)^{-1}\bbdelta_k
+\frac{\mu_{k+1}-\mu_k}{\mu_{k+1}}\left( \sum_{i=1}^k
\bbk_i^{*}+\bbsigma_Z\right)^{-1}\bbdelta_k\right|
\label{log_det_is_concave} \\
& =  \log\left|\bbi+\left( \sum_{i=1}^k
\bbk_i^{*}+\tilde{\bbsigma}_{k+1}\right)^{-1}\bbdelta_k \right|
\label{kkt_imply_5}
\end{align}
where the inequality in (\ref{log_det_is_concave}) follows from
the concavity of $\log |\cdot|$ in positive semi-definite
matrices, and (\ref{kkt_imply_5}) follows from the third part of
Lemma~\ref{lemma_construct_degraded_channel}. Using
(\ref{dummy_equations_forever_5}) and (\ref{kkt_imply_5}) in
(\ref{dummy_equations_forever_4}) yields
\begin{align}
\sum_{k=1}^m \mu_k R^{*}_k-\sum_{k=1}^m \mu_k\hat{R}_k \geq 0
\end{align}
which implies that the maximum in (\ref{no_rate_to_zero_gammas})
is achieved by $\{\bbk^{*}_k\}_{k=1}^m$. Thus, using this fact in
(\ref{no_rate_to_zero_gammas}), we get
\begin{align}
\max \sum_{k=1}^K \mu_k R_k& \leq \sum_{k=1}^m \frac{\mu_k }{2}
\left[\log\frac{\left|\sum_{i=1}^k \bbk_i^{*}
+\tilde{\bbsigma}_k\right|} {\left|\sum_{i=1}^{k-1} \bbk_i^{*}
+\tilde{\bbsigma}_k\right|}-\log\frac{\left|\sum_{i=1}^k
\bbk_i^{*} +\bbsigma_Z\right|}
{\left|\sum_{i=1}^{k-1} \bbk_i^{*} +\bbsigma_Z\right|}\right] \\
&=\sum_{k=1}^m \frac{\mu_k }{2} \left[\log\frac{\left|\sum_{i=1}^k
\bbk_i^{*} +\bbsigma_k\right|} {\left|\sum_{i=1}^{k-1} \bbk_i^{*}
+ \bbsigma_k\right|}-\log\frac{\left|\sum_{i=1}^k \bbk_i^{*}
+\bbsigma_Z\right|} {\left|\sum_{i=1}^{k-1} \bbk_i^{*}
+\bbsigma_Z\right|}\right] \label{dummy_equations_forever_6}
\end{align}
where the equality follows from the fourth part of
Lemma~\ref{lemma_construct_degraded_channel}. Since the right hand
side of (\ref{dummy_equations_forever_6}) is achievable, and we
can get a similar outer bound for any non-negative vector
$[\mu_1~\ldots~\mu_K]$, this completes the converse proof for the
aligned Gaussian MIMO channel.

\section{General Gaussian MIMO Multi-receiver Wiretap \break Channel}

In this final part of the paper, we consider the general Gaussian
multi-receiver wiretap channel and prove its secrecy capacity
region. The main idea in this section is to construct an aligned
channel that is indexed by a scalar variable, and then show that
this aligned channel has the same secrecy capacity region as the
original channel in the limit of this indexing parameter on the
constructed aligned channel. This argument was previously used
in~\cite{Shamai_MIMO,Tie_Liu_MIMO_WT}. The way we use this
argument here is different from~\cite{Shamai_MIMO} because there
are no secrecy constraints in~\cite{Shamai_MIMO}, and it is
different from~\cite{Tie_Liu_MIMO_WT} because there are multiple
legitimate receivers here.

Given the covariance matrices $\left\{\bbk_k\right\}_{k=1}^K$ such
that $\sum_{k=1}^K \bbk_k \preceq \bbs$, we define the following
rates
\begin{align}
\lefteqn{R_k^{\rm
DPC}\left(\pi,\left\{\bbk_i\right\}_{i=1}^K,\left\{\bbsigma_i\right\}_{i=1}^K,\bbsigma_Z,\left\{\bbh_i\right\}_{i=1}^K,\bbh_Z\right)}\nonumber\\
&=\frac{1}{2}\log \frac{\left|\bbh_{\pi(k)}\left(\sum_{i=1}^k
\bbk_{\pi(i)}\right)\bbh_{\pi(k)}^\top+\bbsigma_{\pi(k)}\right|}{\left|\bbh_{\pi(k)}\left(\sum_{i=1}^{k-1}
\bbk_{\pi(i)}\right)\bbh_{\pi(k)}^\top+\bbsigma_{\pi(k)}\right|} -
\frac{1}{2}\log\frac{\left|\bbh_{Z}\left(\sum_{i=1}^k
\bbk_{\pi(i)}\right)\bbh_Z^\top+\bbsigma_Z\right|}{\left|\bbh_{Z}\left(\sum_{i=1}^{k-1}
\bbk_{\pi(i)}\right)\bbh_{Z}^\top+\bbsigma_Z\right|},\nonumber\\
&\hspace{12cm} k=1,\ldots,K
\end{align}
where $\pi(\cdot)$ is a one-to-one permutation on
$\{1,\ldots,K\}$. We also note that the subscript of $R_k^{\rm
DPC}\left(\pi,\left\{\bbk_i\right\}_{i=1}^K,\left\{\bbsigma_i\right\}_{i=1}^K,\bbsigma_Z,\left\{\bbh_i\right\}_{i=1}^K,\bbh_Z\right)$
does not denote the $k$th user, instead it denotes the $(K-k+1)$th
user in line to be encoded. Rather, the secrecy rate of the $k$th
user is given by
\begin{align}
R_k=R_{\pi^{-1}(k)}^{\rm
DPC}\left(\pi,\left\{\bbk_i\right\}_{i=1}^K,\left\{\bbsigma_i\right\}_{i=1}^K,\bbsigma_Z,\left\{\bbh_i\right\}_{i=1}^K,\bbh_Z\right)
\end{align}
when dirty-paper coding with stochastic encoding is used with an
encoding order of $\pi$.

We define the following region.
\begin{eqnarray}
&&\mathcal{R}^{\rm
DPC}\left(\pi,\bbs,\left\{\bbsigma_i\right\}_{i=1}^K,\bbsigma_Z,\left\{\bbh_i\right\}_{i=1}^K,\bbh_Z\right)\nonumber
\\
&&\hspace{0.0cm}=\left\{(R_1,\ldots,R_K)\left|
\begin{array}{rcl}
&R_k=R_{\pi^{-1}(k)}^{\rm
DPC}\left(\pi,\left\{\bbk_i\right\}_{i=1}^K,\left\{\bbsigma_i\right\}_{i=1}^K,\bbsigma_Z,\left\{\bbh_i\right\}_{i=1}^K,\bbh_Z\right),\\
&~k=1,\ldots,K,\textrm{ for some }\left\{\bbk_i\right\}_{i=1}^K
\textrm{ such
that }\bbk_i\succeq 0,\\
&~i=1,\ldots,K, \textrm{ and }\sum_{i=1}^K\bbk_i \preceq\bbs
\end{array}\right.\right\}\nonumber\\
\end{eqnarray}
The secrecy capacity region of the general Gaussian MIMO broadcast
channel is given by the following theorem.
\begin{Theo}
\label{theorem_main_result_general_mimo} The secrecy capacity
region of the general Gaussian MIMO multi-receiver wiretap channel
is given by the convex closure of the following union
\begin{align}
\bigcup_{\pi\in\Pi} \mathcal{R}^{\rm
DPC}\left(\pi,\bbs,\left\{\bbsigma_i\right\}_{i=1}^K,\bbsigma_Z,\left\{\bbh_i\right\}_{i=1}^K,\bbh_Z\right)
\end{align}
where $\Pi$ is the set of all possible one-to-one permutations on
$\{1,\ldots,K\}$.
\end{Theo}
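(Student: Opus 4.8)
The plan is to obtain Theorem~\ref{theorem_main_result_general_mimo} from the aligned result Theorem~\ref{theorem_main_result_aligned} by the scalar-indexed limiting argument of \cite{Shamai_MIMO,Tie_Liu_MIMO_WT}. Achievability requires little new work: the dirty-paper-coding-with-stochastic-encoding scheme of Section~\ref{sec:ach_theorem_main_result_aligned} carries over once the DPC region of \cite{Wei_Yu} is invoked for the \emph{general} Gaussian vector broadcast channel, i.e., with $\{\bbh_i\}_{i=1}^K,\bbh_Z$ inserted in place of identity channels. The dummy-message rates are again matched to the eavesdropper's DPC terms, and the secrecy check collapses to the single aggregate constraint \eqref{sufficient_condition_for_secrecy} exactly as before. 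The case of a singular constraint $\bbs\succeq\bzero$ is disposed of by the equivalent-channel reduction already noted after the covariance-constraint discussion, so I may assume $\bbs\succ\bzero$. Hence the entire burden is the converse.

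For the converse I would construct, for each integer $n$, an \emph{aligned} channel $A_n$ (all legitimate receivers and the eavesdropper having $t$ antennas and identity channel matrices) whose secrecy capacity region contains that of the general channel and collapses onto the general DPC region as $n\to\infty$. Each link is replaced by its aligned sufficient statistic. For legitimate user $k$, the aligned information matrix is set to $\bbh_k^\top\bbsigma_k^{-1}\bbh_k+\tfrac1n\bbi$, i.e.\ noise covariance $\tilde{\bbsigma}_k^{(n)}=\big(\bbh_k^\top\bbsigma_k^{-1}\bbh_k+\tfrac1n\bbi\big)^{-1}$; since $\bbh_k^\top\bbsigma_k^{-1}\bbh_k+\tfrac1n\bbi\succeq\bbh_k^\top\bbsigma_k^{-1}\bbh_k$, receiver $k$ in $A_n$ is at least as informative as in the general channel, and it converges to it as $n\to\infty$. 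The eavesdropper is aligned to a covariance $\tilde{\bbsigma}_Z^{(n)}$ chosen in the \emph{opposite} sense, so that the eavesdropper of $A_n$ is no more capable than the general eavesdropper while still converging to it. With these choices every code for the general channel is a code for $A_n$ that is no less secure, giving $\mathcal{C}_{\rm general}\subseteq\mathcal{C}(A_n)$. Because $A_n$ is aligned, Theorem~\ref{theorem_main_result_aligned} evaluates $\mathcal{C}(A_n)$ as the convex closure of $\bigcup_{\pi}\mathcal{R}^{\rm DPC}(\pi,\bbs,\{\tilde{\bbsigma}_i^{(n)}\}_{i=1}^K,\tilde{\bbsigma}_Z^{(n)})$.

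The final step is a continuity argument. As $n\to\infty$ the regularized covariances converge to the general-channel equivalents, so each $\log\det$ rate term of $A_n$ converges to the corresponding term of Theorem~\ref{theorem_main_result_general_mimo}, uniformly over the compact feasible set $\{\{\bbk_i\}_{i=1}^K:\bbk_i\succeq\bzero,\ \sum_i\bbk_i\preceq\bbs\}$; the permutation union and the convex-closure operation are preserved under the limit, so $\mathcal{C}(A_n)$ decreases to the region claimed in Theorem~\ref{theorem_main_result_general_mimo}. Intersecting the outer bounds gives $\mathcal{C}_{\rm general}\subseteq\bigcap_n\mathcal{C}(A_n)$, which equals that region; together with achievability this yields equality and closes the proof.

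\textbf{Main obstacle.} The hard part is the eavesdropper in the outer-bound step. In \cite{Shamai_MIMO} there is no eavesdropper, so one only enhances the legitimate receivers; in \cite{Tie_Liu_MIMO_WT} there is a single legitimate user, so the enhancement of the legitimate link and the treatment of the eavesdropper can be balanced directly. Here I must simultaneously align and enhance all $K$ legitimate links \emph{and} align the eavesdropper so that it is degraded relative to the general one. This is delicate precisely because a finite-noise, $t$-antenna alignment of a rank-deficient $\bbh_Z$ tends to make the eavesdropper \emph{stronger} in its perfectly-secret (null-space) directions rather than weaker, so the choice of $\tilde{\bbsigma}_Z^{(n)}$ must be made with care to preserve $\mathcal{C}_{\rm general}\subseteq\mathcal{C}(A_n)$ for every finite $n$ while still converging to the general eavesdropper. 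Guaranteeing this, and ensuring that the coupled DPC/permutation/convex-closure structure varies continuously in $n$, is the step that distinguishes the present limiting argument from those of \cite{Shamai_MIMO,Tie_Liu_MIMO_WT}.
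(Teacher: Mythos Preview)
Your achievability sketch and the legitimate-user enhancement are fine and match the paper. The gap is exactly the point you flag as the ``main obstacle,'' and it is fatal to the plan as written: when $\bbh_Z$ has rank $\hat r_Z<t$, there is \emph{no} aligned $t$-antenna eavesdropper with strictly positive-definite noise that is no more capable than the general one. Indeed, ``no more capable'' would require $\tilde\bbsigma_Z^{(n)^{-1}}\preceq \bbh_Z^\top\bbsigma_Z^{-1}\bbh_Z$, and the right-hand side is singular, so no positive-definite $\tilde\bbsigma_Z^{(n)^{-1}}$ can satisfy it. Thus the inclusion $\mathcal{C}_{\rm general}\subseteq\mathcal{C}(A_n)$ cannot be obtained for finite $n$ by weakening the eavesdropper, and without it your intersection argument collapses.

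The paper resolves this by going the other way: in the $\alpha$-indexed channel the eavesdropper is also \emph{enhanced} (so $\bbx\to\bar\bbz\to\hat\bbz$), which at first sight looks like the wrong direction for an outer bound. The key extra step is to show, for any fixed code that is perfectly secret against the original eavesdropper, that the additional leakage to the enhanced eavesdropper is at most $\tfrac{1}{2}\log\big(|\alpha^2\,\hat\bbi_Z^A\bbv_Z\bbs\bbv_Z^\top(\hat\bbi_Z^A)^\top+\hat\bbsigma_Z^A|/|\hat\bbsigma_Z^A|\big)$, which vanishes as $\alpha\to 0$. Hence the secrecy capacity region of the aligned $\alpha$-channel converges to a region \emph{containing} that of the general channel, and since the aligned DPC rates converge to the general DPC rates, the outer bound follows. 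In short: you cannot align-and-weaken the eavesdropper; you must align-and-strengthen it and then control the excess equivocation loss in the limit.
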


\subsection{Proof of
Theorem~\ref{theorem_main_result_general_mimo}}
\label{proof_of_theorem_main_result_general_mimo}

Achievability of the region given in
Theorem~\ref{theorem_main_result_general_mimo} can be shown by
following the achievability proof of
Theorem~\ref{theorem_main_result_aligned} given in
Section~\ref{sec:ach_theorem_main_result_aligned}, hence it is
omitted. For the converse, we basically use the ideas presented in
\cite{Shamai_MIMO,Tie_Liu_MIMO_WT}. Following Section V-B
of~\cite{Shamai_MIMO}, we can construct an equivalent channel
which has the same secrecy capacity region as the original channel
defined in
(\ref{general_mimo_original_def1})-(\ref{general_mimo_original_def2}).
In this constructed equivalent channel, all receivers, including
the eavesdropper, and the transmitter have the same number of
antennas, which is $t$,
\begin{align}
\hat{\bby}_k&=\hat{\bbh}_k\bbx +\hat{\bbn}_k,\quad k=1,\ldots,K \label{general_mimo_equivalent_def1}\\
\hat{\bbz}&=\hat{\bbh}_Z \bbx +\hat{\bbn}_Z
\label{general_mimo_equivalent_def2}
\end{align}
where $\hat{\bbh}_k=\hat{\bblambda}_k \bbv_k$, $\bbv_k$ is a
$t\times t$ orthonormal matrix, and $\hat{\bblambda}_k$ is a
$t\times t$ diagonal matrix whose first $(t-\hat{r}_k)$ diagonal
entries are zero, and the rest of the diagonal entries are
strictly positive. Here, $\hat{r}_k$ is the rank of the original
channel gain matrix, $\bbh_k$. The noise covariance matrix of the
Gaussian random vector $\hat{\bbn}_k$ is given by
$\hat{\bbsigma}_k$ which has the following block diagonal form
\begin{align}
\hat{\bbsigma}_k= \left[
\begin{array}{cc}
\hat{\bbsigma}^A_k & \bzero \\
\bzero & \hat{\bbsigma}^B_k
\end{array}
\right] \label{noise_in_equivalent_of_general_users}
\end{align}
where $\hat{\bbsigma}^A_k$ is of size $(t-\hat{r}_k)\times
(t-\hat{r}_k)$, and $\hat{\bbsigma}^B_k$ is of size
$\hat{r}_k\times \hat{r}_k$.

Similar notations hold for the eavesdropper's observation
$\hat{\bbz}$ as well. In particular,
$\hat{\bbh}_Z=\hat{\bblambda}_Z\bbv_Z$ where $\bbv_Z$ is a
$t\times t$ orthonormal matrix, and  $\hat{\bblambda}_Z$ is a
$t\times t$ diagonal matrix whose first $(t-\hat{r}_Z)$ diagonal
entries are zero, and the rest of the diagonal entries are
strictly positive. Here, $\hat{r}_Z$ is the rank of the original
channel gain matrix of the eavesdropper, $\bbh_Z$. The covariance
matrix of the Gaussian random vector $\hat{\bbn}_Z$ is given by
$\hat{\bbsigma}_Z$ which has the following block diagonal form
\begin{align}
\hat{\bbsigma}_Z= \left[
\begin{array}{cc}
\hat{\bbsigma}^A_Z & \bzero \\
\bzero & \hat{\bbsigma}^B_Z
\end{array}
\right] \label{noise_in_equivalent_of_general_eavesdropper}
\end{align}
where $\hat{\bbsigma}_Z^A$ is of size $(t-\hat{r}_Z)\times
(t-\hat{r}_Z)$ and $\hat{\bbsigma}_Z^B$ is of size $\hat{r}_Z
\times \hat{r}_Z$. Since this new channel in
(\ref{general_mimo_equivalent_def1})-(\ref{general_mimo_equivalent_def2})
can be constructed from the original channel in
(\ref{general_mimo_original_def1})-(\ref{general_mimo_original_def2})
through invertible transformations~\cite{Shamai_MIMO}, both have
the same secrecy capacity region. Moreover, these transformations
preserve the dirty-paper coding region as well, i.e.,
\begin{align}
\lefteqn{R_k^{\rm
DPC}\left(\pi,\left\{\bbk_i\right\}_{i=1}^K,\left\{\bbsigma_i\right\}_{i=1}^K,\bbsigma_Z,\left\{\bbh_i\right\}_{i=1}^K,\bbh_Z\right)}\nonumber\\
&=\frac{1}{2}\log \frac{\left|\bbh_{\pi(k)}\left(\sum_{i=1}^k
\bbk_{\pi(i)}\right)\bbh_{\pi(k)}^\top+\bbsigma_{\pi(k)}\right|}{\left|\bbh_{\pi(k)}\left(\sum_{i=1}^{k-1}
\bbk_{\pi(i)}\right)\bbh_{\pi(k)}^\top+\bbsigma_{\pi(k)}\right|} -
\frac{1}{2}\log\frac{\left|\bbh_{Z}\left(\sum_{i=1}^k
\bbk_{\pi(i)}\right)\bbh_Z^\top+\bbsigma_Z\right|}{\left|\bbh_{Z}\left(\sum_{i=1}^{k-1}
\bbk_{\pi(i)}\right)\bbh_{Z}^\top+\bbsigma_Z\right|} \nonumber \\
&=\frac{1}{2}\log
\frac{\left|\hat{\bbh}_{\pi(k)}\left(\sum_{i=1}^k
\bbk_{\pi(i)}\right)\hat{\bbh}_{\pi(k)}^\top+\hat{\bbsigma}_{\pi(k)}\right|}{\left|\hat{\bbh}_{\pi(k)}\left(\sum_{i=1}^{k-1}
\bbk_{\pi(i)}\right)\hat{\bbh}_{\pi(k)}^\top+\hat{\bbsigma}_{\pi(k)}\right|}
- \frac{1}{2}\log\frac{\left|\hat{\bbh}_{Z}\left(\sum_{i=1}^k
\bbk_{\pi(i)}\right)\hat{\bbh}_Z^\top+\hat{\bbsigma}_Z\right|}{\left|\hat{\bbh}_{Z}\left(\sum_{i=1}^{k-1}
\bbk_{\pi(i)}\right)\hat{\bbh}_{Z}^\top+\hat{\bbsigma}_Z\right|}
,\nonumber\\
&\hspace{12cm} k=1,\ldots,K
\end{align}

We now define another channel which does not have the same secrecy
capacity region or the dirty paper coding region as the original
channel:
\begin{align}
\bar{\bby}_k&=\bar{\bbh}_k\bbx +\hat{\bbn}_k,\quad k=1,\ldots,K \label{general_mimo_alpha_def1}\\
\bar{\bbz}&=\bar{\bbh}_Z \bbx +\hat{\bbn}_Z
\label{general_mimo_alpha_def2}
\end{align}
where
$\bar{\bbh}_k=\left(\hat{\bblambda}_k+\alpha\hat{\bbi}_k\right)\bbv_k$
and $\alpha>0$, and $\hat{\bbi}_k$ is a $t\times t$ diagonal
matrix whose first $(t-\hat{r}_k)$ diagonal entries are 1, and the
rest of the diagonal entries are zero. Similarly,
$\bar{\bbh}_Z=\left(\hat{\bblambda}_Z+\alpha\hat{\bbi}_Z\right)\bbv_Z$,
where $\hat{\bbi}_Z$ is a $t\times t$ diagonal matrix whose first
$(t-\hat{r}_Z)$ diagonal entries are 1, and the rest are zero. We
note that $\{\bar{\bbh}_k\}_{k=1}^K,\bar{\bbh}_Z$ are invertible,
hence the channel defined by
(\ref{general_mimo_alpha_def1})-(\ref{general_mimo_alpha_def2})
can be considered as an aligned Gaussian MIMO multi-receiver
wiretap channel. Thus, since it is an aligned Gaussian MIMO
multi-receiver wiretap channel, its secrecy capacity region is
given by Theorem~\ref{theorem_main_result_aligned}.

We now show that as $\alpha\rightarrow 0$, the secrecy capacity
region of the channel described by
(\ref{general_mimo_alpha_def1})-(\ref{general_mimo_alpha_def2})
converges to a region that includes the secrecy capacity region of
the original channel in
(\ref{general_mimo_original_def1})-(\ref{general_mimo_original_def2}).
Since the original channel in
(\ref{general_mimo_original_def1})-(\ref{general_mimo_original_def2})
and the channel in
(\ref{general_mimo_equivalent_def1})-(\ref{general_mimo_equivalent_def2})
have the same secrecy capacity region and the dirty-paper coding
region, checking that the secrecy capacity region of the channel
described by
(\ref{general_mimo_alpha_def1})-(\ref{general_mimo_alpha_def2})
converges, as $\alpha\rightarrow 0$, to a region that includes the
secrecy capacity region of the channel described by
(\ref{general_mimo_equivalent_def1})-(\ref{general_mimo_equivalent_def2}),
is sufficient. To this end, consider an arbitrary
$(2^{nR_1},\ldots,2^{nR_K},n)$ code which can be transmitted with
vanishingly small probability of error and in perfect secrecy when
it is used in the channel given in
(\ref{general_mimo_equivalent_def1})-(\ref{general_mimo_equivalent_def2}).
We will show that the same code can also be transmitted with
vanishingly small probability of error and in perfect secrecy when
it is used in the channel given in
(\ref{general_mimo_alpha_def1})-(\ref{general_mimo_alpha_def2}) as
$\alpha\rightarrow 0$. This will imply that the secrecy capacity
region of the channel given in
(\ref{general_mimo_alpha_def1})-(\ref{general_mimo_alpha_def2})
converges to a region that includes the secrecy capacity region of
the channel given in
(\ref{general_mimo_equivalent_def1})-(\ref{general_mimo_equivalent_def2}).
We first note that
\begin{align}
\bar{\bby}_k &= \left(\hat{\bblambda}_k+\alpha
\hat{\bbi}_k\right)\bbv_k\bbx+\hat{\bbn}_k \\
&= \left[
\begin{array}{cc}
\alpha \hat{\bbi}^A_k \bbv_k \bbx \\
\hat{\bblambda}^B_k\bbv_k \bbx
\end{array}
\right]+ \left[
\begin{array}{cc}
\hat{\bbn}^A_k  \\
\hat{\bbn}^B_k
\end{array}
\right] \\
&=\left[
\begin{array}{cc}
 \bar{\bby}^A_k  \\
\bar{\bby}^B_k
\end{array}
\right],\qquad \qquad \qquad \qquad  k=1,\ldots,K
\end{align}
where $\hat{\bbi}^A_k$ contains the first $(t-\hat{r}_k)$ rows of
$\hat{\bbi}_k$, and $\hat{\bblambda}^B_k$ contains the last
$\hat{r}_k$ rows of $\hat{\bblambda}_k$. $\hat{\bbn}_k^A$ is a
Gaussian random vector that contains the first $(t-\hat{r}_k)$
entries of $\hat{\bbn}_k$, and $\hat{\bbn}_k^B$ is a vector that
contains the last $\hat{r}_k$ entries. The covariance matrices of
$\hat{\bbn}^A_k,\hat{\bbn}^B_k$ are
$\hat{\bbsigma}^A_k,\hat{\bbsigma}^B_k$, respectively, and
$\hat{\bbn}_k^A$ and $\hat{\bbn}_k^B$  are independent as can be
observed through (\ref{noise_in_equivalent_of_general_users}).
Similarly, we can write
\begin{align}
\hat{\bby}_k &= \hat{\bblambda}_k \bbv_k\bbx+\hat{\bbn}_k \\
&= \left[
\begin{array}{cc}
\bzero \\
\hat{\bblambda}^B_k\bbv_k \bbx
\end{array}
\right]+ \left[
\begin{array}{cc}
\hat{\bbn}^A_k  \\
\hat{\bbn}^B_k
\end{array}
\right] \\
&=\left[
\begin{array}{cc}
 \hat{\bby}^A_k  \\
\hat{\bby}^B_k
\end{array}
\right],\qquad \qquad \qquad \qquad  k=1,\ldots,K
\end{align}
We note that $\bar{\bby}_k^B=\hat{\bby}_k^B,~k=1,\ldots,K$, thus
we have
\begin{align}
\bbx\rightarrow \bar{\bby}_k \rightarrow \hat{\bby}_k,\quad
k=1,\ldots,K
\end{align}
which ensures the any message rate that is decodable by the $k$th
user of the channel given in
(\ref{general_mimo_equivalent_def1})-(\ref{general_mimo_equivalent_def2})
is also decodable by the $k$th user of the channel given in
(\ref{general_mimo_alpha_def1})-(\ref{general_mimo_alpha_def2}).
Thus, any $(2^{nR_1},\ldots,2^{nR_K},n)$ code which can be
transmitted with vanishingly small probability of error in the
channel defined by
(\ref{general_mimo_equivalent_def1})-(\ref{general_mimo_equivalent_def2})
can be transmitted with vanishingly small probability of error in
the channel defined by
(\ref{general_mimo_alpha_def1})-(\ref{general_mimo_alpha_def2}) as
well.

We now check the secrecy constraints. To this end, we note that
\begin{align}
\bar{\bbz} &= \left(\hat{\bblambda}_Z+\alpha
\hat{\bbi}_Z\right)\bbv_Z\bbx+\hat{\bbn}_Z \\
&= \left[
\begin{array}{cc}
\alpha \hat{\bbi}^A_Z \bbv_Z \bbx \\
\hat{\bblambda}^B_Z\bbv_Z \bbx
\end{array}
\right]+ \left[
\begin{array}{cc}
\hat{\bbn}^A_Z  \\
\hat{\bbn}^B_Z
\end{array}
\right] \\
&=\left[
\begin{array}{cc}
 \bar{\bbz}^A \\
\bar{\bbz}^B
\end{array}
\right]
\end{align}
where $\hat{\bbi}^A_Z$ contains the first $(t-\hat{r}_Z)$ rows of
$\hat{\bbi}_Z$, and $\hat{\bblambda}_Z^B$ contains the last
$\hat{r}_Z$ rows of $\hat{\bblambda}_Z$. $\hat{\bbn}_Z^A$ is a
Gaussian random vector that contains the first $t-\hat{r}_Z$
entries of $\hat{\bbn}_Z$, and $\hat{\bbn}_Z^B$ is a vector that
contains the last $\hat{r}_Z$ entries. The covariance matrices of
$\hat{\bbn}_Z^A,\hat{\bbn}_Z^B$ are
$\hat{\bbsigma}_Z^A,\hat{\bbsigma}_Z^B$, respectively, and
$\hat{\bbn}_Z^A$ and $\hat{\bbn}_Z^B$ are independent as can be
observed through
(\ref{noise_in_equivalent_of_general_eavesdropper}). Similarly, we
can write
\begin{align}
\hat{\bbz} &= \hat{\bblambda}_Z\bbv_Z\bbx+\hat{\bbn}_Z \\
&= \left[
\begin{array}{cc}
\bzero \\
\hat{\bblambda}^B_Z\bbv_Z \bbx
\end{array}
\right]+ \left[
\begin{array}{cc}
\hat{\bbn}^A_Z  \\
\hat{\bbn}^B_Z
\end{array}
\right] \\
&=\left[
\begin{array}{cc}
 \hat{\bbz}^A \\
\hat{\bbz}^B
\end{array}
\right]
\end{align}
We note that $\bar{\bbz}^B=\hat{\bbz}^B$, and thus we have
\begin{align}
\bbx\rightarrow \bar{\bbz}\rightarrow \hat{\bbz}
\end{align}
We now show that any $(2^{nR_1},\ldots,2^{nR_K})$ code that
achieves the perfect secrecy rates $(R_1,\ldots,\break R_K)$ in
the channel given in
(\ref{general_mimo_equivalent_def1})-(\ref{general_mimo_equivalent_def2})
also achieves the same perfect secrecy rates in the channel given
in (\ref{general_mimo_alpha_def1})-(\ref{general_mimo_alpha_def2})
when $\alpha\rightarrow 0$. To this end, let $\mathcal{S}$ be a
non-empty subset of $\{1,\ldots,K\}$. We consider the following
equivocation
\begin{align}
H(W_{\mathcal{S}}|\bar{\bbz}^n)&=H(W_{\mathcal{S}})-I(W_{\mathcal{S}};\bar{\bbz}^n)\\
&=H(W_{\mathcal{S}}|\hat{\bbz}^n)+I(W_{\mathcal{S}};\hat{\bbz}^n)-I(W_{\mathcal{S}};\bar{\bbz}^n)\\
&=H(W_{\mathcal{S}}|\hat{\bbz}^{A,n},\hat{\bbz}^{B,n})+I(W_{\mathcal{S}};\hat{\bbz}^{A,n},\hat{\bbz}^{B,n})-I(W_{\mathcal{S}};\bar{\bbz}^{A,n},\bar{\bbz}^{B,n})\\
&=H(W_{\mathcal{S}}|\hat{\bbz}^{A,n},\hat{\bbz}^{B,n})+I(W_{\mathcal{S}};\hat{\bbz}^{B,n})-I(W_{\mathcal{S}};\bar{\bbz}^{A,n},\hat{\bbz}^{B,n})
\label{alpha_channel_achieves_same_secrecy_1}\\
&=H(W_{\mathcal{S}}|\hat{\bbz}^{A,n},\hat{\bbz}^{B,n})-I(W_{\mathcal{S}};\bar{\bbz}^{A,n}|\hat{\bbz}^{B,n})
\label{alpha_channel_achieves_same_secrecy_2}
\end{align}
where (\ref{alpha_channel_achieves_same_secrecy_1}) follows from
the facts that $W_{\mathcal{S}}$ and
$\hat{\bbz}^{A,n}=\hat{\bbn}^{A,n}$ are independent, and
$\bar{\bbz}^{B,n}=\hat{\bbz}^{B,n}$. We now bound the mutual
information term in (\ref{alpha_channel_achieves_same_secrecy_2})
\begin{align}
I(W_{\mathcal{S}};\bar{\bbz}^{A,n}|\hat{\bbz}^{B,n})&\leq
I(\bbx^n;\bar{\bbz}^{A,n}|\hat{\bbz}^{B,n}) \label{alpha_channel_achieves_same_secrecy_3} \\
& =
h(\bar{\bbz}^{A,n}|\hat{\bbz}^{B,n})-h(\bar{\bbz}^{A,n}|\hat{\bbz}^{B,n},\bbx^n)\\
&=
h(\bar{\bbz}^{A,n}|\hat{\bbz}^{B,n})-h(\bar{\bbz}^{A,n}|\bbx^n)\label{alpha_channel_achieves_same_secrecy_4}\\
&\leq
h(\bar{\bbz}^{A,n})-h(\bar{\bbz}^{A,n}|\bbx^n)\label{alpha_channel_achieves_same_secrecy_5}\\
&=I(\bbx^n;\bar{\bbz}^{A,n})\\
&\leq \sum_{i=1}^n I(\bbx_i;\bar{\bbz}^{A}_i)
\label{alpha_channel_achieves_same_secrecy_6} \\
&\leq \sum_{i=1}^n \max_{E\left[\bbx_i \bbx_i^\top\right]\preceq \bbs}I(\bbx_i;\bar{\bbz}^{A}_i) \\
&\leq \sum_{i=1}^n \frac{1}{2} \log \frac{\left|\alpha^2
\hat{\bbi}^{A}_Z\bbv_Z \bbs \bbv_Z^\top (\hat{\bbi}_Z^A
)^\top+\hat{\bbsigma}_Z^A
\right|}{\left|\hat{\bbsigma}_Z^A\right|}
\label{alpha_channel_achieves_same_secrecy_7}\\
&= \frac{n}{2} \log \frac{\left|\alpha^2 \hat{\bbi}^{A}_Z\bbv_Z
\bbs \bbv_Z^\top (\hat{\bbi}_Z^A )^\top+\hat{\bbsigma}_Z^A
\right|}{\left|\hat{\bbsigma}_Z^A\right|}
\label{alpha_channel_achieves_same_secrecy_8}
\end{align}
where (\ref{alpha_channel_achieves_same_secrecy_3}) follows from
the Markov chain $W_{\mathcal{S}}\rightarrow \bbx^n \rightarrow
(\bar{\bbz}^{A,n},\hat{\bbz}^{B,n})$,
(\ref{alpha_channel_achieves_same_secrecy_4}) is due to the Markov
chain $\bar{\bbz}^{A,n}\rightarrow \bbx^n \rightarrow
\hat{\bbz}^{B,n}$, (\ref{alpha_channel_achieves_same_secrecy_5})
comes from the fact that conditioning cannot increase entropy,
(\ref{alpha_channel_achieves_same_secrecy_6}) is a consequence of
the fact that channel is memoryless,
(\ref{alpha_channel_achieves_same_secrecy_7}) is due to the fact
that subject to a covariance constraint, Gaussian distribution
maximizes the differential entropy. Thus, plugging
(\ref{alpha_channel_achieves_same_secrecy_8}) into
(\ref{alpha_channel_achieves_same_secrecy_2}) yields
\begin{align}
\frac{1}{n}H(W_{\mathcal{S}}|\bar{\bbz}^n)\geq
\frac{1}{n}H(W_{\mathcal{S}}|\hat{\bbz}^{n}) -\frac{1}{2} \log
\frac{\left|\alpha^2 \hat{\bbi}^{A}_Z\bbv_Z \bbs \bbv_Z^\top
(\hat{\bbi}_Z^A )^\top+\hat{\bbsigma}_Z^A
\right|}{\left|\hat{\bbsigma}_Z^A\right|}
\end{align}
which implies that
\begin{align}
\lim_{n\rightarrow\infty}\frac{1}{n}H(W_{\mathcal{S}}|\bar{\bbz}^n)&\geq
\lim_{n\rightarrow
\infty}\frac{1}{n}H(W_{\mathcal{S}}|\hat{\bbz}^{n})-\lim_{\alpha\rightarrow
0}\frac{1}{2} \log \frac{\left|\alpha^2 \hat{\bbi}^{A}_Z\bbv_Z
\bbs \bbv_Z^\top (\hat{\bbi}_Z^A )^\top+\hat{\bbsigma}_Z^A
\right|}{\left|\hat{\bbsigma}_Z^A\right|}\\
&=\lim_{n\rightarrow
\infty}\frac{1}{n}H(W_{\mathcal{S}}|\hat{\bbz}^{n})\label{alpha_channel_achieves_same_secrecy_9}\\
&\geq \sum_{k\in\mathcal{S}}R_k
\label{alpha_channel_achieves_same_secrecy_10}
\end{align}
where (\ref{alpha_channel_achieves_same_secrecy_9}) follows from
the fact that $\log |\alpha^2 \bba+\bbb|$ is continuous in
$\alpha$ for positive definite matrices $\bba,\bbb$, and
(\ref{alpha_channel_achieves_same_secrecy_10}) comes from our
assumption that the codebook under consideration achieves perfect
secrecy in the channel given in
(\ref{general_mimo_equivalent_def1})-(\ref{general_mimo_equivalent_def2}).
Thus, we have shown that if a codebook achieves the perfect
secrecy rates $(R_1,\ldots,R_K)$ in the channel defined by
(\ref{general_mimo_equivalent_def1})-(\ref{general_mimo_equivalent_def2}),
then it also achieves the same perfect secrecy rates in the
channel defined by
(\ref{general_mimo_alpha_def1})-(\ref{general_mimo_alpha_def2}) as
$\alpha\rightarrow 0$. Thus, the secrecy capacity region of the
latter channel converges to a region that includes the secrecy
capacity region of the channel in
(\ref{general_mimo_equivalent_def1})-(\ref{general_mimo_equivalent_def2}),
and also the secrecy capacity region of the original channel in
(\ref{general_mimo_original_def1})-(\ref{general_mimo_original_def2}).
Since the channel in
(\ref{general_mimo_alpha_def1})-(\ref{general_mimo_alpha_def2}) is
an aligned channel, its secrecy capacity region is given by
Theorem~\ref{theorem_main_result_aligned}, and it is equal to the
dirty-paper coding region. Thus, to find the region that the
secrecy capacity region of the channel in
(\ref{general_mimo_alpha_def1})-(\ref{general_mimo_alpha_def2})
converges to as $\alpha\rightarrow 0$, it is sufficient to
consider the region which the dirty-paper coding region converges
to as $\alpha\rightarrow 0$. For that purpose, pick the $k$th
user, and the identity encoding order, i.e.,
$\pi(k)=k,~k=1,\ldots,K$. The corresponding secrecy rate is
\begin{align}
\lefteqn{\frac{1}{2}\log
\frac{\left|\bar{\bbh}_{\pi(k)}\left(\sum_{i=1}^k
\bbk_{\pi(i)}\right)\bar{\bbh}_{\pi(k)}^\top+\hat{\bbsigma}_{\pi(k)}\right|}{\left|\bar{\bbh}_{\pi(k)}\left(\sum_{i=1}^{k-1}
\bbk_{\pi(i)}\right)\bar{\bbh}_{\pi(k)}^\top+\hat{\bbsigma}_{\pi(k)}\right|}
- \frac{1}{2}\log\frac{\left|\bar{\bbh}_{Z}\left(\sum_{i=1}^k
\bbk_{\pi(i)}\right)\bar{\bbh}_Z^\top+\hat{\bbsigma}_Z\right|}{\left|\bar{\bbh}_{Z}\left(\sum_{i=1}^{k-1}
\bbk_{\pi(i)}\right)\bar{\bbh}_{Z}^\top+\hat{\bbsigma}_Z\right|}}\nonumber
\\
&=\frac{1}{2}\log \frac{\left|\left(\hat{\bbh}_{\pi(k)}+\alpha
\hat{\bbi}_{\pi(k)}\bbv_{\pi(k)}\right) \left(\sum_{i=1}^k
\bbk_{\pi(i)}\right)\left(\hat{\bbh}_{\pi(k)}+\alpha
\hat{\bbi}_{\pi(k)}\bbv_{\pi(k)}\right)^\top+\hat{\bbsigma}_{\pi(k)}\right|}
{\left|\left(\hat{\bbh}_{\pi(k)}+\alpha
\hat{\bbi}_{\pi(k)}\bbv_{\pi(k)}\right)\left(\sum_{i=1}^{k-1}
\bbk_{\pi(i)}\right)\left(\hat{\bbh}_{\pi(k)}+\alpha
\hat{\bbi}_{\pi(k)}\bbv_{\pi(k)}\right)^\top+
\hat{\bbsigma}_{\pi(k)}\right|}
\nonumber\\
&\quad-\frac{1}{2}\log \frac{\left|\left(\hat{\bbh}_{Z}+\alpha
\hat{\bbi}_Z \bbv_Z\right)\left(\sum_{i=1}^k
\bbk_{\pi(i)}\right)\left(\hat{\bbh}_{Z}+\alpha \hat{\bbi}_Z
\bbv_Z\right)^\top
+\hat{\bbsigma}_{Z}\right|}{\left|\left(\hat{\bbh}_{Z}+\alpha
\hat{\bbi}_Z \bbv_Z\right)\left(\sum_{i=1}^{k-1}
\bbk_{\pi(i)}\right)\left(\hat{\bbh}_{Z}+\alpha \hat{\bbi}_Z
\bbv_Z\right)^\top +\hat{\bbsigma}_{Z}\right|}
\end{align}
which converges to
\begin{align}
\frac{1}{2}\log \frac{\left|\hat{\bbh}_{\pi(k)}\left(\sum_{i=1}^k
\bbk_{\pi(i)}\right)\hat{\bbh}_{\pi(k)}^\top+\hat{\bbsigma}_{\pi(k)}\right|}{\left|\hat{\bbh}_{\pi(k)}\left(\sum_{i=1}^{k-1}
\bbk_{\pi(i)}\right)\hat{\bbh}_{\pi(k)}^\top
+\hat{\bbsigma}_{\pi(k)}\right|} -\frac{1}{2}\log
\frac{\left|\hat{\bbh}_{Z}\left(\sum_{i=1}^k
\bbk_{\pi(i)}\right)\hat{\bbh}_{Z}^\top+
\hat{\bbsigma}_{Z}\right|}{\left|\hat{\bbh}_{Z}\left(\sum_{i=1}^{k-1}
\bbk_{\pi(i)}\right)\hat{\bbh}_{Z}^\top+\hat{\bbsigma}_{Z}\right|}
\label{alpha_channel_achieves_same_secrecy_11}
\end{align}
as $\alpha\rightarrow 0$ due to the continuity of $\log |\cdot|$
in positive semi-definite matrices. Moreover,
(\ref{alpha_channel_achieves_same_secrecy_11}) is equal to
\begin{align}
\frac{1}{2}\log \frac{\left|\bbh_{\pi(k)}\left(\sum_{i=1}^k
\bbk_{\pi(i)}\right)\bbh_{\pi(k)}^\top+\bbsigma_{\pi(k)}\right|}{\left|\bbh_{\pi(k)}\left(\sum_{i=1}^{k-1}
\bbk_{\pi(i)}\right)\bbh_{\pi(k)}^\top +\bbsigma_{\pi(k)}\right|}
-\frac{1}{2}\log \frac{\left|\bbh_{Z}\left(\sum_{i=1}^k
\bbk_{\pi(i)}\right)\bbh_{Z}^\top+
\bbsigma_{Z}\right|}{\left|\bbh_{Z}\left(\sum_{i=1}^{k-1}
\bbk_{\pi(i)}\right)\bbh_{Z}^\top+\bbsigma_{Z}\right|}
\end{align}
which implies that the secrecy capacity region of the general
Gaussian MIMO multi-receiver wiretap channel is given by the
dirty-paper coding region, completing the proof.

\section{Conclusions}

We characterized the secrecy capacity region of the Gaussian MIMO
multi-receiver wiretap channel. We showed that it is achievable
with a variant of dirty-paper coding with Gaussian signals. Before
reaching this result, we first visited the scalar case, and showed
the necessity of a new proof technique for the converse. In
particular, we showed that the extensions of existing converses
for the Gaussian scalar broadcast channels fall short of resolving
the ambiguity regarding the auxiliary random variables. We showed
that, unlike the stand-alone use of the entropy-power
inequality~\cite{Stam,Blachman}, the use of the relationships
either between the MMSE and the mutual information or between the
Fisher information and the differential entropy resolves this
ambiguity. Extending this methodology to degraded vector channels,
we found the secrecy capacity region of the degraded Gaussian MIMO
multi-receiver wiretap channel. Once we obtained the secrecy
capacity region of the degraded MIMO channel, we generalized it to
arbitrary channels by using the channel enhancement method and
some limiting arguments as in~\cite{Shamai_MIMO,Tie_Liu_MIMO_WT}.

\appendix

\appendixpage

\section{Proof of Lemma~\ref{lemma_conditional_stein_identity}}
\label{proof_of_lemma_conditional_stein_identity} Let
$\rho_i(\bx|\bu)=\frac{\partial \log f(\bx|\bu)}{\partial x_i}$,
i.e., the $i$th component of $\brho(\bx|\bu)$. Then, we have
\begin{align}
E\left[g(\bbx)\rho_{i}(\bbx|\bbu)\right]&=\int g(\bx)
\frac{\frac{\partial f(\bx|\bu)}{\partial x_i}}{f(\bx|\bu)}
f(\bx,\bu)~d\bx ~d\bu\\
&=\int g(\bx) \frac{\partial f(\bx|\bu)}{\partial x_i} f(\bu)~d\bx
~d\bu \\
&=\int \left[\int_{-\infty}^{+\infty}g(\bx) \frac{\partial
f(\bx|\bu)}{\partial x_i} dx_i \right] f(\bu)~d\bx^{-} ~d\bu
\label{cond_stein_proof_1}
\end{align}
where $d\bx^{-}=dx_1\ldots dx_{i-1}dx_{i+1}\ldots dx_{n}$. The
inner integral can be evaluated using integration by parts as
\begin{align}
\int_{-\infty}^{+\infty}g(\bx) \frac{\partial f(\bx|\bu)}{\partial
x_i} dx_i &=
\big[g(\bx)f(\bx|\bu)\big]\Big|_{x_{i}=-\infty}^{+\infty}-
\int_{-\infty}^{+\infty}f(\bx|\bu) \frac{\partial g(\bx)}{\partial
x_i} dx_i \\
&= -\int_{-\infty}^{+\infty}f(\bx|\bu) \frac{\partial
g(\bx)}{\partial x_i} dx_i
\label{assumption_of_the_lemma_implies_1}
\end{align}
where (\ref{assumption_of_the_lemma_implies_1}) comes from the
assumption in (\ref{assumption_of_the_lemma}). Plugging
(\ref{assumption_of_the_lemma_implies_1}) into
(\ref{cond_stein_proof_1}) yields
\begin{align}
E\left[g(\bbx)\rho_{i}(\bbx|\bbu)\right]&=-\int \frac{\partial
g(\bx)}{\partial x_i} f(\bx,\bu)~d\bx ~d\bu \\
&= -E\left[\frac{\partial g(\bx)}{\partial x_i}\right]
\end{align}
which concludes the proof.

\section{Proof of Lemma~\ref{lemma_cond_stein_implies}}
\label{proof_of_lemma_cond_stein_implies}

Let $\rho_i(\bx|\bu)=\frac{\partial \log f(\bx|\bu)}{\partial
x_i}$, i.e., the $i$th component of $\brho(\bx|\bu)$. Then, we
have
\begin{align}
E\left[\brho(\bbx|\bbu)|\bbu=\bu\right]&=\int \frac{\frac{\partial
f(\bx|\bu)}{\partial x_i}}{f(\bx|\bu)}f(\bx|\bu) d\bx \\
&=\int \left[\int_{-\infty}^{+\infty}\frac{\partial
f(\bx|\bu)}{\partial x_i}dx_i\right] d\bx^-
\end{align}
where $d\bx^-=dx_1\ldots dx_{i-1}dx_{i+1}\ldots dx_n$. The inner
integral is
\begin{align}
\int_{-\infty}^{+\infty}\frac{\partial f(\bx|\bu)}{\partial
x_i}dx_i=f(\bx|\bu)\Big|_{x_i=-\infty}^{+\infty}=0
\end{align}
since $f(\bx|\bu)$ is a valid probability density function. This
completes the proof of the first part. For the second part, we
have
\begin{align}
E\left[g(\bbu)\brho(\bbx|\bbu)\right]=E\left[g(\bbu)E\big[\brho(\bbx|\bbu)|\bbu=\bu\big]\right]=0
\end{align}
where the second equality follows from the fact that the inner
expectation is zero as the first part of this lemma states. The
last part of the lemma follows by selecting
$g(\bbu)=E\left[\bbx|\bbu\right]$ in the second part of this
lemma.

\section{Proof of Lemma~\ref{lemma_cond_conv_identity}}

\label{proof_of_lemma_cond_conv_identity}

Throughout this proof, the subscript of $f$ will denote the random
vector for which $f$ is the density. For example, $f_{X}(\bx|\bu)$
is the conditional density of $\bbx$. We first note that
\begin{align}
f_{W}(\bw|\bu)&=\int f_{X,W}(\bx,\bw|\bu)d\bx =\int
f_{X}(\bx|\bu)f_{Y}(\bw-\bx|\bu)d\bx
\label{cond_independence_imply}
\end{align}
where the second equality is due to the conditional independence
of $\bbx$ and $\bby$ given $\bbu$. Differentiating both sides of
(\ref{cond_independence_imply}), we get
\begin{align}
\frac{\partial f_{W}(\bw|\bu)}{\partial w_i}&=\int
f_{X}(\bx|\bu)\frac{\partial f_{Y}(\bw-\bx|\bu)}{\partial
w_i}d\bx\\
&=-\int f_{X}(\bx|\bu)\frac{\partial f_{Y}(\bw-\bx|\bu)}{\partial
x_i}d\bx \label{change_of_variables} \\
&=\big[-f_{X}(\bx|\bu)
f_{Y}(\bw-\bx|\bu)\big]\Big|_{x_i=-\infty}^{\infty}+\int
f_{Y}(\bw-\bx|\bu)\frac{\partial f_{X}(\bx|\bu)}{\partial x_i}d\bx
\label{proper_densities} \\
&=\int f_{Y}(\bw-\bx|\bu)\frac{\partial f_{X}(\bx|\bu)}{\partial
x_i}d\bx \label{cond_conv_proof_1}
\end{align}
where (\ref{change_of_variables}) is due to
\begin{align}
\frac{\partial f_{Y}(\bw-\bx|\bu)}{\partial w_i}&=\frac{\partial
f_{Y}(\bw-\bx|\bu)}{\partial (w_i-x_i)}  \frac{\partial
(w_i-x_i)}{\partial w_i} \\
&=-\frac{\partial f_{Y}(\bw-\bx|\bu)}{\partial (w_i-x_i)}
\frac{\partial
(w_i-x_i)}{\partial x_i} \\
&=-\frac{\partial f_{Y}(\bw-\bx|\bu)}{\partial x_i}
\end{align}
and (\ref{proper_densities}) follows from the fact that
$f_{X}(\bx|\bu),f_{Y}(\bw-\bx|\bu)$ vanish at infinity since they
are probability density functions. Using
(\ref{cond_conv_proof_1}), we get
\begin{align}
\brho_i(\bw|\bu)=\frac{\frac{\partial f_{W}(\bw|\bu)}{\partial
w_i}}{f_{W}(\bw|\bu)} &=\int \frac{
f_{Y}(\bw-\bx|\bu)}{f_{W}(\bw|\bu)}\frac{\partial
f_{X}(\bx|\bu)}{\partial x_i}d\bx\\
 &=\int \frac{f_{X}(\bx|\bu)
f_{Y}(\bw-\bx|\bu)}{f_{W}(\bw|\bu)}\frac{\frac{\partial
f_{X}(\bx|\bu)}{\partial x_i}}{f_{X}(\bx|\bu)}d\bx\\
 &=\int f_{X}(\bx|\bu,\bw)\frac{\frac{\partial
f_{X}(\bx|\bu)}{\partial x_i}}{f_{X}(\bx|\bu)}d\bx
\label{cond_conv_proof_2}\\
&=E\left[\frac{1}{f_{X}(\bx|\bu)}~~ \frac{\partial
f_{X}(\bx|\bu)}{\partial x_i}\Bigg|\bbw=\bw,\bbu=\bu \right]
\label{cond_conv_proof_3}
\end{align}
where (\ref{cond_conv_proof_2}) follows from the fact that
\begin{align}
f_{X}(\bx|\bu,\bw)=\frac{f_{X,W}(\bx,\bw|\bu)}{f_{W}(\bw|\bu)}=\frac{f_{X}(\bx|\bu)f_{Y}(\bw-\bx|\bu)}{f_{W}(\bw|\bu)}
\end{align}
Equation (\ref{cond_conv_proof_3}) implies
\begin{align}
\brho
(\bw|\bu)=E\left[\brho(\bbx|\bbu=\bu)|\bbw=\bw,\bbu=\bu\right]
\end{align}
and due to symmetry, we also have
\begin{align}
\brho
(\bw|\bu)=E\left[\brho(\bby|\bbu=\bu)|\bbw=\bw,\bbu=\bu\right]
\end{align}
which completes the proof.

\section{Proof of Lemma~\ref{lemma_cond_matrix_fii}}

\label{proof_of_lemma_cond_matrix_fii}

Let $\bbw=\bbx+\bby$. We have
\begin{align}
\bzero &\preceq E\left[\Big(\bba \brho (\bbx|\bbu)+(\bbi-\bba)\brho(\bby|\bbu)-\brho(\bbw|\bbu)\Big)\right.\nonumber\\
&\hspace{3cm}\left.\Big(\bba \brho (\bbx|\bbu)+(\bbi-\bba)\brho(\bby|\bbu)-\brho(\bbw|\bbu)\Big)^\top\right] \\
& =\bba  E\left[ \brho (\bbx|\bbu)\brho
(\bbx|\bbu)^\top\right]\bba^\top
+\bba  E\left[ \brho (\bbx|\bbu)\brho (\bby|\bbu)^\top\right](\bbi-\bba)^\top \nonumber \\
&\quad -\bba E\left[ \brho (\bbx|\bbu)\brho
(\bbw|\bbu)^\top\right]
+  (\bbi-\bba) E\left[\brho(\bby|\bbu)\brho(\bbx|\bbu)^\top\right]\bba^\top \nonumber\\
&\quad +(\bbi-\bba)
E\left[\brho(\bby|\bbu)\brho(\bby|\bbu)^\top\right](\bbi-\bba)^\top
-(\bbi-\bba) E\left[\brho(\bby|\bbu)\brho(\bbw|\bbu)^\top\right]\nonumber \\
&\quad -
E\left[\brho(\bbw|\bbu)\brho(\bbx|\bbu)^\top\right]\bba^\top
- E\left[\brho(\bbw|\bbu)\brho(\bby|\bbu)^\top\right](\bbi-\bba)^\top \nonumber \\
&\quad +E\left[\brho(\bbw|\bbu)\brho(\bbw|\bbu)^\top\right]
\label{proof_cond_matrix_fii_step_1}
\end{align}
We note that, from the definition of the conditional Fisher
information matrix, we have
\begin{align}
E\left[ \brho (\bbx|\bbu)\brho (\bbx|\bbu)^\top\right]&=\bbj(\bbx|\bbu)\label{def_of_cond_Fisher_imply_1}\\
E\left[ \brho (\bby|\bbu)\brho (\bby|\bbu)^\top\right]&=\bbj(\bby|\bbu)\label{def_of_cond_Fisher_imply_2}\\
E\left[ \brho (\bbw|\bbu)\brho
(\bbw|\bbu)^\top\right]&=\bbj(\bbw|\bbu)\label{def_of_cond_Fisher_imply_3}
\end{align}
Moreover, we have
\begin{align}
E\left[\brho(\bbx|\bbu)\brho(\bby|\bbu)^\top\right]&=\left(E\left[\brho(\bby|\bbu)\brho(\bbx|\bbu)^\top\right]\right)^\top\\
&=\left(E\left[E\left[\brho(\bbx|\bbu)\big|\bbu=\bu\right]E\left[\brho(\bby|\bbu)\big|\bbu=\bu\right]\right]\right)^\top\label{conditional_independence_imply}\\
&=\bzero \label{stein_implies_1}
\end{align}
where (\ref{conditional_independence_imply}) comes from the fact
that given $\bbu=\bu$, $\bbx$ and $\bby$ are conditionally
independent, and (\ref{stein_implies_1}) follows from the first
part of Lemma~\ref{lemma_cond_stein_implies}, namely
\begin{align}
E\left[\brho(\bbx|\bbu)\big|\bbu=\bu\right]=E\left[\brho(\bby|\bbu)\big|\bbu=\bu\right]=\bzero
\end{align}
Furthermore, we have
\begin{align}
E\left[\brho(\bbx|\bbu)\brho(\bbw|\bbu)^\top\right]&= E\left[E\left[\brho(\bbx|\bbu=\bu)\big|\bbw=\bw,\bbu=\bu\right]\brho(\bbw|\bbu)^\top\right]\\
&=E\left[\brho(\bbw|\bbu) \brho(\bbw|\bbu)^\top\right]\label{cond_conv_identity_implies}\\
&=\bbj(\bbw|\bbu) \label{def_of_cond_Fisher_imply}
\end{align}
where (\ref{cond_conv_identity_implies}) follows from
Lemma~\ref{lemma_cond_conv_identity}, and
(\ref{def_of_cond_Fisher_imply}) comes from the definition of the
conditional Fisher information matrix. Similarly, we also have
\begin{align}
E\left[\brho(\bby|\bbu)\brho(\bbw|\bbu)^\top\right]=
E\left[\brho(\bbw|\bbu)\brho(\bbx|\bbu)^\top\right]
=E\left[\brho(\bbw|\bbu)\brho(\bby|\bbu)^\top\right]=\bbj(\bbw|\bbu)
\label{cond_conv_identity_implies_2}
\end{align}
Thus, using
(\ref{def_of_cond_Fisher_imply_1})-(\ref{def_of_cond_Fisher_imply_3}),
(\ref{stein_implies_1}),
(\ref{def_of_cond_Fisher_imply})-(\ref{cond_conv_identity_implies_2})
in (\ref{proof_cond_matrix_fii_step_1}), we get
\begin{align}
\bzero &\preceq  \bba  \bbj(\bbx|\bbu)\bba^\top -\bba
\bbj(\bbw|\bbu) +(\bbi-\bba) \bbj(\bby|\bbu)(\bbi-\bba)^\top
-(\bbi-\bba) \bbj(\bbw|\bbu)\\
&\quad - \bbj(\bbw|\bbu)\bba^\top
-\bbj(\bbw|\bbu)(\bbi-\bba)^\top +\bbj(\bbw|\bbu) \\
&= \bba  \bbj(\bbx|\bbu)\bba^\top +(\bbi-\bba)
\bbj(\bby|\bbu)(\bbi-\bba)^\top - \bbj(\bbw|\bbu)
\end{align}
which completes the proof.

\section{Proof of Lemma~\ref{lemma_conditioning_increases_Fisher}}

\label{proof_of_lemma_conditioning_increases_Fisher}

Consider $\bbj(\bbx|\bbu)$
\begin{align}
\bbj(\bbx|\bbu)&=\bbj(\bbx|\bbu,\bbv) \label{Markov_chain_implies} \\
&= E\left[\nabla_{\bx} \log f(\bbx|\bbu,\bbv) \nabla_{\bx} \log f(\bbx|\bbu,\bbv)^\top\right] \\
&= E\left[\nabla_{\bx} \log f(\bbx,\bbu,\bbv) \nabla_{\bx} \log f(\bbx,\bbu,\bbv)^\top\right] \label{derivative_zero}\\
&= E\left[\big(\nabla_{\bx} \log f(\bbx,\bbv)+\nabla_{\bx} \log f(\bbu|\bbx,\bbv)\big)\right.\nonumber\\
&\hspace{4.5cm}\left.  \big(\nabla_{\bx} \log f(\bbx,\bbv)+\nabla_{\bx} \log f(\bbu|\bbx,\bbv)\big)^\top\right]\label{bayes_rule} \\
&= E\left[\nabla_{\bx} \log f(\bbx,\bbv)\nabla_{\bx} \log f(\bbx,\bbv)^\top\right]\nonumber\\
&\quad + E\left[\nabla_{\bx} \log f(\bbx,\bbv)  \nabla_{\bx} \log f(\bbu|\bbx,\bbv)^\top\right]\nonumber\\
&\quad+E\left[\nabla_{\bx} \log f(\bbu|\bbx,\bbv) \nabla_{\bx} \log f(\bbx,\bbv)^\top\right]\nonumber\\
&\quad + E\left[\nabla_{\bx} \log f(\bbu|\bbx,\bbv) \nabla_{\bx}
\log f(\bbu|\bbx,\bbv)^\top\right]
\label{conditioning_decreases_Fisher_step1}
\end{align}
where (\ref{Markov_chain_implies}) is due to the Markov chain
$\bbv\rightarrow \bbu\rightarrow \bbx$, (\ref{derivative_zero})
comes from the fact that
\begin{align}
\nabla_{\bx} \log f(\bx|\bu,\bv)&=\nabla_{\bx} \big(\log f(\bx,\bu,\bv)-\log f(\bu,\bv)\big)\\
&=\nabla_{\bx} \log f(\bx,\bu,\bv)
\end{align}
and (\ref{bayes_rule}) is due to the fact that
$f(\bx,\bu,\bv)=f(\bx,\bv)f(\bu|\bx,\bv)$. We note that
\begin{align}
\bbj(\bbx|\bbv)=E\left[\nabla_{\bx} \log f(\bbx,\bbv)
\nabla_{\bx} \log f(\bbx,\bbv)^\top\right]
\label{conditioning_decreases_Fisher_step2}
\end{align}
and
\begin{align}
E\left[\nabla_{\bx} \log f(\bbu|\bbx,\bbv) \nabla_{\bx} \log
f(\bbu|\bbx,\bbv)^\top\right]\succeq \bzero
\label{conditioning_decreases_Fisher_step3}
\end{align}
Using (\ref{conditioning_decreases_Fisher_step2}) and
(\ref{conditioning_decreases_Fisher_step3}) in
(\ref{conditioning_decreases_Fisher_step1}), we get
\begin{align}
\bbj(\bbx|\bbu)&\succeq \bbj(\bbx|\bbv)+E\left[\nabla_{\bx} \log f(\bbx,\bbv)  \nabla_{\bx} \log f(\bbu|\bbx,\bbv)^\top\right]\nonumber\\
&\quad+E\left[\nabla_{\bx} \log f(\bbu|\bbx,\bbv) \nabla_{\bx}
\log f(\bbx,\bbv)^\top\right]
\label{conditioning_decreases_Fisher_step4}
\end{align}
We now show that the cross-terms in
(\ref{conditioning_decreases_Fisher_step4}) vanish. To this end,
consider the $(i,j)$th entry of the first cross-term
\begin{align}
E\Big[\nabla_{\bx} \log f(\bbx,\bbv)  \nabla_{\bx} \log   f(\bbu|&\bbx,\bbv)^\top\Big]_{ij}=E\left[ \frac{\partial \log f(\bbx,\bbv)}{\partial x_i} \frac{\partial \log f(\bbu|\bbx,\bbv)}{\partial x_j}\right]\\
& =\int  \frac{\frac{\partial f(\bx,\bv)}{\partial x_i}}{f(\bx,\bv)}  \frac{\frac{\partial f(\bu|\bx,\bv)}{\partial x_j}}{f(\bu|\bx,\bv)}~~f(\bx,\bu,\bv) ~d\bu~ d\bv~ d\bx\\
& =\int  \frac{\partial f(\bx,\bv)}{\partial x_i} \frac{\partial f(\bu|\bx,\bv)}{\partial x_j} ~d\bu~ d\bv~ d\bx\\
& =\int  \frac{\partial f(\bx,\bv)}{\partial x_i}\left[ \int
\frac{\partial f(\bu|\bx,\bv)}{\partial x_j} ~d\bu\right]~ d\bv~
d\bx \label{conditioning_decreases_Fisher_step_XX}
\end{align}
where the inner integral can be evaluated as
\begin{align}
 \int \frac{\partial f(\bu|\bx,\bv)}{\partial x_j} ~d\bu =\frac{\partial}{\partial x_j} \left[ \int f(\bu|\bx,\bv) ~d\bu\right]=\frac{\partial (1)}{\partial x_j}=0
\label{derivative_zero_2}
\end{align}
where the interchange of the differentiation and the integration
is justified by the assumption given
in~(\ref{assumption_of_the_lemma_cond_inc_Fisher}). Thus, using
(\ref{derivative_zero_2}) in
(\ref{conditioning_decreases_Fisher_step_XX}) implies that
\begin{align}
E\Big[\nabla_{\bx} \log f(\bbx,\bbv)  \nabla_{\bx} \log
f(\bbu|&\bbx,\bbv)^\top\Big]=\bzero
\label{conditioning_decreases_Fisher_step_YY}
\end{align}
Thus, using (\ref{conditioning_decreases_Fisher_step_YY}) in
(\ref{conditioning_decreases_Fisher_step4}), we get
\begin{align}
\bbj(\bbx|\bbu)&\succeq \bbj(\bbx|\bbv)
\end{align}
which completes the proof.

\section{Proof of Lemma~\ref{lemma_construct_degraded_channel}}

\label{proof_of_lemma_construct_degraded_channel}

Since we assumed $\mu_j>0,~j=1,\ldots,m,$ we can select
\begin{align}
\tilde{\bbsigma}_{j+1}=\left[\left(\sum_{i=1}^j \bbk_i
+\bbsigma_{j+1}\right)^{-1}+\frac{1}{\mu_{j+1}}\bbm_{j+1}\right]^{-1}-\sum_{i=1}^{j}\bbk_i,\quad
j=0,1\ldots,m-1
\end{align}
which is equivalent to
\begin{align}
\mu_{j+1}\left(\sum_{i=1}^j \bbk_i
+\tilde{\bbsigma}_{j+1}\right)^{-1}=\mu_{j+1}\left(\sum_{i=1}^j
\bbk_i +\bbsigma_{j+1}\right)^{-1}+\bbm_{j+1},\quad
j=0,1\ldots,m-1 \label{definition_enhanced_noise_0}
\end{align}
and that implies $\bzero \preceq \tilde{\bbsigma}_j \preceq
\bbsigma_j,~j=1,\ldots,m$. Furthermore, for $j=0,\ldots,m-1$, we
have
\begin{align}
\lefteqn{\sum_{i=1}^{j+1}\bbk_i+\tilde{\bbsigma}_{j+1}=\bbk_{j+1}+\left(\sum_{i=1}^{j}\bbk_i+\tilde{\bbsigma}_{j+1}\right)}\\
&=\bbk_{j+1}+\left[\left(\sum_{i=1}^{j}\bbk_i+\bbsigma_{j+1}\right)^{-1}+\frac{1}{\mu_{j+1}}\bbm_{j+1}\right]^{-1}
\label{definition_enhanced_noise}\\
&= \bbk_{j+1}+\left[\bbi+\frac{1}{\mu_{j+1}}\left(\sum_{i=1}^{j}\bbk_i+\bbsigma_{j+1}\right) \bbm_{j+1}\right]^{-1}\left(\sum_{i=1}^{j}\bbk_i+\bbsigma_{j+1}\right) \\
&=
\bbk_{j+1}+\left[\bbi+\frac{1}{\mu_{j+1}}\left(\sum_{i=1}^{j+1}\bbk_i+\bbsigma_{j+1}\right)
\bbm_{j+1}\right]^{-1}\left(\sum_{i=1}^{j}\bbk_i+\bbsigma_{j+1}\right)
\label{kkt_imply_1}\\
&=\bbk_{j+1}+\left[\left(\sum_{i=1}^{j+1}\bbk_i+\bbsigma_{j+1}\right)^{-1}+
\frac{1}{\mu_{j+1}}\bbm_{j+1}\right]^{-1}\left(\sum_{i=1}^{j+1}\bbk_i+\bbsigma_{j+1}\right)^{-1}
\left(\sum_{i=1}^{j}\bbk_i+\bbsigma_{j+1}\right)\\
&=\bbk_{j+1}+\left[\left(\sum_{i=1}^{j+1}\bbk_i+\bbsigma_{j+1}\right)^{-1}+ \frac{1}{\mu_{j+1}}\bbm_{j+1}\right]^{-1}\left(\sum_{i=1}^{j+1}\bbk_i+\bbsigma_{j+1}\right)^{-1} \nonumber\\
&\qquad \times
\left(\sum_{i=1}^{j+1}\bbk_i+\bbsigma_{j+1}-\bbk_{j+1}\right)\\
&=\bbk_{j+1}+\left[\left(\sum_{i=1}^{j+1}\bbk_i+\bbsigma_{j+1}\right)^{-1}+ \frac{1}{\mu_{j+1}}\bbm_{j+1}\right]^{-1} \nonumber\\
&\qquad
-\left[\left(\sum_{i=1}^{j+1}\bbk_i+\bbsigma_{j+1}\right)^{-1}+
\frac{1}{\mu_{j+1}}\bbm_{j+1}\right]^{-1}
\left(\sum_{i=1}^{j+1}\bbk_i+\bbsigma_{j+1}\right)^{-1}
\bbk_{j+1} \\
&=\bbk_{j+1}+\left[\left(\sum_{i=1}^{j+1}\bbk_i+\bbsigma_{j+1}\right)^{-1}+ \frac{1}{\mu_{j+1}}\bbm_{j+1}\right]^{-1} \nonumber\\
&\qquad
-\left[\left(\sum_{i=1}^{j+1}\bbk_i+\bbsigma_{j+1}\right)^{-1}+
\frac{1}{\mu_{j+1}}\bbm_{j+1}\right]^{-1}
\left[\left(\sum_{i=1}^{j+1}\bbk_i+\bbsigma_{j+1}\right)^{-1}+
\frac{1}{\mu_{j+1}}\bbm_{j+1}\right]
\bbk_{j+1} \label{kkt_imply_2}\\
&=\bbk_{j+1}+\left[\left(\sum_{i=1}^{j+1}\bbk_i+\bbsigma_{j+1}\right)^{-1}+
\frac{1}{\mu_{j+1}}\bbm_{j+1}\right]^{-1}-
\bbk_{j+1}\\
&=\left[\left(\sum_{i=1}^{j+1}\bbk_i+\bbsigma_{j+1}\right)^{-1}+
\frac{1}{\mu_{j+1}}\bbm_{j+1}\right]^{-1}
\label{converse_proof_aligned_step_17}
\end{align}
where (\ref{definition_enhanced_noise}) follows from
(\ref{definition_enhanced_noise_0}), (\ref{kkt_imply_1}) and
(\ref{kkt_imply_2}) are consequences of the KKT conditions
$\bbm_j\bbk_j=\bbk_j\bbm_j=\bzero,~j=1,\ldots,m$. Finally,
(\ref{converse_proof_aligned_step_17}) is equivalent to
\begin{align}
\mu_{j+1}\left(\sum_{i=1}^{j+1}\bbk_i+\tilde{\bbsigma}_{j+1}\right)^{-1}
&=\mu_{j+1}\left(\sum_{i=1}^{j+1}\bbk_i+\bbsigma_{j+1}\right)^{-1}+
\bbm_{j+1},\quad j=0,\ldots,m-1
\label{converse_proof_aligned_step_18}
\end{align}
Plugging (\ref{definition_enhanced_noise_0}) and
(\ref{converse_proof_aligned_step_18}) into the KKT conditions in
(\ref{converse_proof_aligned_step_13}) and
(\ref{converse_proof_aligned_step_14}) yields the third part of
the lemma.

We now prove the second part of the lemma. To this end, consider
the second equation of the third part of the lemma, i.e., the
following
\begin{align}
\mu_m\left(\sum_{i=1}^m\bbk_i+\tilde{\bbsigma}_m\right)^{-1}=
\mu_{m}\left(\sum_{i=1}^m\bbk_i+\bbsigma_Z\right)^{-1}+\bbm_Z
\end{align}
which implies $\tilde{\bbsigma}_m\preceq \bbsigma_Z$. Now,
consider the first equation of the third part of the lemma for
$j=m-1$, i.e., the following
\begin{align}
\mu_{m-1}\left(\sum_{i=1}^{m-1}\bbk_i+\tilde{\bbsigma}_{m-1}\right)^{-1}
-\mu_{m-1}\left(\sum_{i=1}^{m-1}\bbk_i+\bbsigma_Z\right)^{-1}&=
\mu_{m}\left(\sum_{i=1}^{m-1}\bbk_i+\tilde{\bbsigma}_{m}\right)^{-1}\nonumber\\
&\quad -\mu_{m}\left(\sum_{i=1}^{m-1}\bbk_i+\bbsigma_Z\right)^{-1}
\label{conv_proof_ordering}
\end{align}
Since the matrix on the right hand side of the equation is
positive semi-definite due to the fact that
$\tilde{\bbsigma}_m\preceq \bbsigma_Z$, and we assume that
$\mu_m\geq \mu_{m-1}$, (\ref{conv_proof_ordering}) implies
\begin{align}
\left(\sum_{i=1}^{m-1}\bbk_i+\tilde{\bbsigma}_{m-1}\right)^{-1}
-\left(\sum_{i=1}^{m-1}\bbk_i+\bbsigma_Z\right)^{-1}&\succeq
\left(\sum_{i=1}^{m-1}\bbk_i+\tilde{\bbsigma}_{m}\right)^{-1}
-\left(\sum_{i=1}^{m-1}\bbk_i+\bbsigma_Z\right)^{-1}
\end{align}
which in turn implies $\tilde{\bbsigma}_{m-1}\preceq
\tilde{\bbsigma}_m \preceq  \bbsigma_Z$. Similarly, if one keeps
checking the first equation of the third part of the lemma in the
reverse order, one can get
\begin{align}
\tilde{\bbsigma}_1 \preceq \ldots \preceq\tilde{\bbsigma}_m\preceq
\bbsigma_Z
\end{align}
Moreover, the definition of $\tilde{\bbsigma}_1$, i.e.,
(\ref{definition_enhanced_noise_0}) for $j=0$,
\begin{align}
\tilde{\bbsigma}_1 =
\left[\bbsigma_1^{-1}+\frac{1}{\mu_1}\bbm_1\right]^{-1}
\end{align}
implies that $\tilde{\bbsigma}_1\succ \bzero $ completing the
proof of the second part of the lemma.

We now show the fourth part of the lemma
\begin{align}
\lefteqn{\hspace{-1cm}\left(\sum_{i=1}^{j+1}\bbk_i+\tilde{\bbsigma}_{j+1}\right)^{-1} \left(\sum_{i=1}^j \bbk_i +\tilde{\bbsigma}_{j+1}\right)}\nonumber\\
&= \left(\sum_{i=1}^{j+1}\bbk_i+\tilde{\bbsigma}_{j+1}\right)^{-1} \left(\sum_{i=1}^{j+1} \bbk_i +\tilde{\bbsigma}_{j+1}-\bbk_{j+1}\right)\\
&=\bbi-\left(\sum_{i=1}^{j+1}\bbk_i+\tilde{\bbsigma}_{j+1}\right)^{-1} \bbk_{j+1}\\
&=\bbi-\left[\left(\sum_{i=1}^{j+1}\bbk_i+\bbsigma_{j+1}\right)^{-1}+\frac{1}{\mu_{j+1}}\bbm_{j+1}\right] \bbk_{j+1}\label{previous_parts_lemma_imply}\\
&=\bbi-\left(\sum_{i=1}^{j+1}\bbk_i+\bbsigma_{j+1}\right)^{-1} \bbk_{j+1} \label{kkt_imply_3} \\
&=\left(\sum_{i=1}^{j+1}\bbk_i+\bbsigma_{j+1}\right)^{-1}\left(\sum_{i=1}^{j+1}\bbk_i+\bbsigma_{j+1}\right)-\left(\sum_{i=1}^{j+1}\bbk_i+\bbsigma_{j+1}\right)^{-1} \bbk_{j+1}\\
&=\left(\sum_{i=1}^{j+1}\bbk_i+\bbsigma_{j+1}\right)^{-1}\left(\sum_{i=1}^{j}\bbk_i+\bbsigma_{j+1}\right),\quad
j=0,\ldots,m-1
\end{align}
where (\ref{previous_parts_lemma_imply}) follows from
(\ref{converse_proof_aligned_step_18}) and (\ref{kkt_imply_3}) is
a consequence of the KKT conditions
$\bbk_j\bbm_j=\bbm_j\bbk_j=\bzero,~j=1,\ldots,m$.

The proof of the fifth part of the lemma follows similarly
\begin{align}
\left(\bbs+\tilde{\bbsigma}_m\right)\left(\sum_{i=1}^m
\bbk_i+\tilde{\bbsigma}_m\right)^{-1}&=
\left(\bbs-\sum_{i=1}^m \bbk_i +\sum_{i=1}^m\bbk_i+\tilde{\bbsigma}_m\right)\left(\sum_{i=1}^m \bbk_i+\tilde{\bbsigma}_m\right)^{-1} \nonumber\\
&=\left(\bbs-\sum_{i=1}^m \bbk_i \right)\left(\sum_{i=1}^m \bbk_i+\tilde{\bbsigma}_m\right)^{-1}+\bbi \\
&=\left(\bbs-\sum_{i=1}^m \bbk_i \right)\left[\left(\sum_{i=1}^m \bbk_i+\bbsigma_Z\right)^{-1}+\frac{1}{\mu_m}\bbm_Z\right]+\bbi\label{previous_parts_lemma_imply_1}\\
&=\left(\bbs-\sum_{i=1}^m \bbk_i \right)\left(\sum_{i=1}^m
\bbk_i+\bbsigma_Z\right)^{-1}+\bbi
\label{kkt_imply_4} \\
&=\left(\bbs-\sum_{i=1}^m \bbk_i \right)\left(\sum_{i=1}^m \bbk_i+\bbsigma_Z\right)^{-1}+\nonumber\\
&\quad \left(\sum_{i=1}^m \bbk_i+\bbsigma_Z\right) \left(\sum_{i=1}^m \bbk_i+\bbsigma_Z\right)^{-1} \\
&=\left(\bbs+\bbsigma_Z \right)\left(\sum_{i=1}^m
\bbk_i+\bbsigma_Z\right)^{-1}
\end{align}
where (\ref{previous_parts_lemma_imply_1}) follows from the second
equation of the third part of the lemma, and (\ref{kkt_imply_4})
is a consequence of the KKT condition in
(\ref{converse_proof_aligned_step_11}), completing the proof.

\bibliographystyle{unsrt}
\bibliography{IEEEabrv,references2}
\end{document}